\theoremstyle{definition} \newtheorem{kap}{Definition}
\newtheorem{thrm}{Theorem}
\newtheorem{lemmy}{Lemma}
\newtheorem{corol}{Corollary}
\newcommand{\mtxt}[1]
{
\ \ \textrm{#1} \ \ 
}
\newcommand{\bra}[1]
{
\langle #1|
}
\newcommand{\ket}[1]
{
| #1 \rangle
}
\newcommand{\pdot}{\stackrel{\LARGE \textbf{.}}{+}}
\newcommand{\tdot}{\stackrel{\LARGE \textbf{.}}{\times}}
\definecolor{dgreen}{rgb}{0.,0.6,0.}
\definecolor{JungleGreen}{cmyk}{0.99,0,0.52,0}
\definecolor{BlueGreen}{cmyk}{0.85,0,0.33,0}
\begin{document}

\title{A Generalization of the Functional Calculus of Observables and Notion of Joint Measurability to the Case of Non-commuting Observables}


\author{Richard DeJonghe}
 \author{Kimberly Frey}
 \author{Tom Imbo}

\affiliation{University of Illinois at Chicago, Department of Physics, 845 W. Taylor St., Chicago, IL 60607}

\date{\today}
\begin{abstract}

For \emph{any} pair of bounded observables $A$ and $B$ with pure point spectra, we construct an associated ``joint observable'' which gives rise to a notion of a joint (projective) measurement of $A$ and $B$, and which conforms to the intuition that one can measure non-commuting observables simultaneously, provided one is willing to give up arbitrary precision.  As an application, we show how our notion of a joint observable naturally allows for a construction of a ``functional calculus,'' so that for any pair of observables $A$ and $B$ as above, and any (Borel measurable) function $f: \mathbb{R}^2 \rightarrow \mathbb{R}$, a new ``generalized observable'' $f(A,B)$ is obtained.  Moreover, we show that this new functional calculus has some rather remarkable properties. 

\end{abstract}

\pacs{}


\maketitle

\section{Introduction}

In quantum theory observable quantities are represented by Hermitian operators, and the orthodox view is that two observables are simultaneously measurable if and only if their corresponding Hermitian operators commute.  However, to focus on a commonly used example, one also frequently hears that the uncertainty relation $\Delta x \Delta p \gtrsim \hbar$ should be taken to mean that one \emph{can} simultaneously measure position and momentum, \emph{just not with arbitrary precision}.  Beginning with Arthurs and Kelly~\cite{Arthurs_64}, various researchers have investigated the possibility of measuring non-commuting observables, typically in the framework of POVMs (i.e.~\emph{unsharp} measurements) \cite{She_66, Park_68, Yuen_82, Busch_85, Raymer_94, Busch_96, Busch_07, Busch_13},  substantiating the above connection between the uncertainty principle and joint measurability.

In this paper we will discuss a new notion of a \emph{sharp} joint measurement of a pair of bounded observables with pure point spectra~\footnote{However, some of our results are applicable to a wider class of observables.}.  This joint measurement also conforms to the intuition that one can measure non-commuting observables simultaneously, provided one is willing to give up arbitrary precision.  The way in which we arrive at this notion of joint measurability is via the construction of joint observables --- indeed, our notion of a joint observable generalizes that of Gudder \cite{Gudder_68} and Varadarajan \cite{Varadarajan_62} to the case of non-commuting observables.

Additionally, we find that our notion of a joint observable naturally allows for the construction of a ``functional calculus of observables'' --- i.e. for any (Borel measurable) function $f: \mathbb{R}^2 \rightarrow \mathbb{R}$, and for any observables $A$ and $B$ as above, we have a natural way of defining an object $f(A,B)$, which can be thought of as a ``generalized observable''~\footnote{If the spectrum of either $A$ or $B$ is not finite (and $[A,B] \neq 0$), we require that $f:\mathbb{R}^2 \to \mathbb{R}$ be continuous. }.  Of course, when $A$ and $B$ commute, one already has recourse to a fully developed functional calculus of observables.  In particular, one can define $f(A,B)$ either using the spectral decompositions of $A$ and $B$ (see, e.g. \cite{Kreyszig}) or by using Gudder's aforementioned notion of a joint observable \cite{Gudder_68, Varadarajan_62}.  As may be expected, when the function $f$ is simply addition or multiplication of numbers, applying $f$ to commuting observables in the manner just described yields the usual linear algebraic sum or product, respectively.

Our generalized functional calculus agrees with the ordinary functional calculus when the observables commute.  For non-commuting observables $A$ and $B$, there is associated with each generalized observable $f(A,B)$, a natural family of ordinary observables (parameterized by $\mathcal{E}$ as in section \ref{sec:bp-functionalcalculus}).  Each member $f_{\mathcal{E}}(A,B)$ of this family has the following properties (among others):

\begin{enumerate}[(1)]
\item A simultaneous eigenstate of $A$ and $B$ with eigenvalues $a,b$ respectively, is an eigenstate of $f_{\mathcal{E}}(A,B)$ with eigenvalue $f(a,b)$.
\item Every element of the spectrum of $f_{\mathcal{E}}(A,B)$ is of the form $f(a,b)$, where $a,b$ are in the spectra of $A$ and $B$, respectively.
\end{enumerate}
While property (1) above is perhaps expected for any reasonable construction (and indeed is satisfied for the ordinary linear algebraic sum $A+B$), property (2) is a novel feature of our functional calculus --- compare this to the complicated relationship between the spectrum of, e.g.~$A+B$ and the spectra of $A$ and $B$ when ${[A,B] \neq 0}$.  As such, it is immediately clear that the operations corresponding to addition in this functional calculus will not reproduce the usual linear algebraic sum whenever $A$ and $B$ do not commute.

This paper is organized as follows.~In sections~${\textrm{\ref{sec:commuting}-\ref{sec:conclusion}}}$ we restrict our attention to finite-dimensional Hilbert spaces~\footnote{All observables on a finite-dimensional Hilbert space have a finite number of eigenvalues, and hence are bounded with pure point spectra.}.  In section \ref{sec:commuting} we review projection lattices, projection-valued measures, and Gudder's notion of a joint observable in the commuting case.  In section \ref{sec:generalizedJOs} we present our notion of a joint observable as well as demonstrate a schema which realizes the corresponding joint measurement in the framework of quantum operations, and connect our joint observable to the uncertainty principle.  As an application, in section \ref{sec:funcs} we use this notion of a joint observable to construct our functional calculus and describe its properties.  We follow with a simple example of the new functional calculus ``in action''.  Finally, in section \ref{sec:conclusion} we conclude with some brief remarks.  Proofs and further technical results in the more general case of observables on (possibly) infinite-dimensional Hilbert spaces can be found in the appendix.  In what follows (except in the appendix), $\mathcal{H}$ will denote a finite-dimensional complex Hilbert space.  We  will further take all functions $f:\mathbb{R}^2 \to \mathbb{R}$ to be Borel measurable, and all partitions of measurable spaces to consist of measurable sets.

\section{Background}
\label{sec:commuting}

\subsection{Observables, Spectral Families, and Projection-Valued Measures}
\label{sec:kapPVMs}

There are many different ways to represent the observables of a quantum system.  The usual way is as a self-adjoint linear operator $A$ on $\mathcal{H}$.  By the spectral theorem $A = \sum_{i = 1}^n \lambda_i P_i$, where $P_i$ is the projector onto the eigenspace of $\mathcal{H}$ associated with the eigenvalue $\lambda_i$ of $A$, and $P_i \perp P_j$ (i.e.~$P_i P_j = P_j P_i = 0$) when $i \neq j$.  Without loss of generality, we assume that $\lambda_i < \lambda_{i+1}$ for all $i \in \{ 1, \ldots,  n \}$.   Another way of representing observables is in terms of a spectral family --- i.e. a one parameter family $\{ E_{\lambda} \}_{\lambda \in \mathbb{R}}$ of projection operators on $\mathcal{H}$ such that (i)~$E_{\lambda} \leq E_{\mu}$ whenever $\lambda \leq \mu$, (ii)~lim$_{\lambda \rightarrow -\infty} E_{\lambda} = 0$, and (iii)~lim$_{\lambda \rightarrow +\infty} E_{\lambda} = I$, where $0$ is the zero operator and $I$ is the identity operator on $\mathcal{H}$, and $\leq$ is the partial order on projectors defined by $E_{\lambda} \leq E_{\mu}$ whenever the respective subspaces $\Sigma_{\lambda}, \Sigma_{\mu}$ onto which they project satisfy $\Sigma_{\lambda} \subseteq \Sigma_{\mu}$. 
(Note that $0 \leq P \leq I$ for every projection operator $P$.)  
There is a 1-1 correspondence between such spectral families of projectors and self-adjoint operators on $\mathcal{H}$ \footnote{\label{fn:specfam} Actually, one additional requirement is needed to obtain this 1-1 correspondence, namely right continuity ($\lim_{\lambda' \to \lambda^+}E_{\lambda^{\prime}} = E_{\lambda}$ for all $\lambda \in \mathbb{R}$).  A spectral family not satisfying this requirement still corresponds to a unique self-adjoint operator, but each self-adjoint operator corresponds to multiple spectral families when right-continuity is not imposed.  Of course, given any spectral family $\{ E_\lambda \}_{\lambda \in \mathbb{R}}$, one can form a right-continuous spectral family which corresponds to the same self-adjoint operator, namely $\{ \hat{E}_\lambda \}_{\lambda \in \mathbb{R}}$, where $ \displaystyle \hat{E}_\lambda := \lim_{\lambda' \to \lambda^+} E_{\lambda'}$.   Finally, although we restrict our discussion to the finite-dimensional case in the initial sections of this paper, properties (i)-(iii) above define a spectral family on \emph{any} separable complex Hilbert space $\mathcal{H}$, and a similar 1-1 correspondence between (now possibly unbounded) self-adjoint operators on $\mathcal{H}$ and (right continuous) spectral families on $\mathcal{H}$ also holds in this case.}.  Namely, for the self-adjoint operator $A$ given above, and for $\lambda \in \mathbb{R}$, the elements of the corresponding spectral family are given by $E_{\lambda} = \sum_{\lambda_i \leq \lambda} P_i$.  Conversely, given a spectral family $\{ E_{\lambda} \}_{\lambda \in \mathbb{R}}$, its associated self-adjoint operator is given by $\sum_{\lambda \in \mathbb{R}} \lambda \Big( E_{\lambda} - \textrm{lim}_{\epsilon \rightarrow 0^+} E_{\lambda - \epsilon} \Big)$.

Alternatively, one can describe observables on $\mathcal{H}$ in terms of \emph{projection-valued measures} (PVMs), which have an elegant formulation using the projection lattice $\mathcal{L}_{\mathcal{H}}$ of~$\mathcal{H}$ (where $\mathcal{L}_{\mathcal{H}}$ is the set of all projection operators on~$\mathcal{H}$ equipped with the partial order $\leq$ defined above), and which we describe below after some preliminaries about the projection lattice.  $\mathcal{L}_{\mathcal{H}}$ is a \emph{complete lattice}, which is to say that for any subset of projectors $\{ P_j \}_{j \in J} \subseteq \mathcal{L}_{\mathcal{H}}$, there exists both a least upper bound and greatest lower bound in $\mathcal{L}_{\mathcal{H}}$, which we denote by $\bigvee_{j \in J} P_j$ and $\bigwedge_{j \in J} P_j$, respectively~\cite{Holland_75}.  (For pairs of projectors $P_1$ and $P_2$, we use $P_1 \vee P_2$ and $P_1 \wedge P_2$ to denote their least upper and greatest lower bounds.)  Furthermore, to each projector~$P$ there corresponds a unique projector $P^\perp$ which satisfies ${P \perp P^\perp}$ (and hence ${P \wedge P^\perp = 0}$) and ${P \vee P^\perp = I}$, namely ${P^\perp = I - P}$.  Finally,~for~$\{P_i \}_{i=1}^\infty$ a set of pairwise orthogonal projectors, we have $\bigvee_{i =1}^\infty P_i = \sum_{i=1}^{\infty} P_i$, and for commuting projectors $P_1$ and $P_2$, we have $P_1 \wedge P_2 = P_1P_2$.

We are now in a position to define a PVM on $(\Omega, \mathcal{M})$, where $(\Omega, \mathcal{M})$ is a \emph{measurable space} (that is, $\Omega$ is a set and $\mathcal{M}$ is a Boolean $\sigma$-algebra of subsets of $\Omega$ whose elements are called \emph{measurable sets}).  A PVM on $(\Omega, \mathcal{M})$ is a \emph{$\sigma$-homomorphism} from $\mathcal{M}$ to $\mathcal{L}_{\mathcal{H}}$ --- i.e.~a map ${\alpha: \mathcal{M} \rightarrow \mathcal{L}_{\mathcal{H}}}$ which satisfies 
\begin{enumerate}[(i)] 
\item $\alpha(\Omega) = I$;
\item if $R_1,R_2 \in \mathcal{M}$ are such that $R_1 \cap R_2 = \emptyset$, then $\alpha(R_1) \bot \, \alpha(R_2)$;
\item $\displaystyle \sum_{i=1}^\infty \alpha(R_i) = \alpha \Big( \bigcup_{i=1}^\infty R_i \Big)  $ for all $R_1, R_2, \ldots \in \mathcal{M}$  such that $R_i \cap R_j = \emptyset$ whenever $i \neq j$. 
\end{enumerate}
The set of observables on $\mathcal{H}$ are in 1-1 correspondence with the PVMs on $(\mathbb{R},\mathcal{B}(\mathbb{R}))$,  where $\mathcal{B}(\mathbb{R})$ denotes the Borel subsets of~$\mathbb{R}$.  Explicitly, for a given self-adjoint operator $A$, the associated PVM $\alpha_A$ is given by $\alpha_A(R) = \sum_{\lambda_i \in R} P_i$ (for any $R \in \mathcal{B}(\mathbb{R})$), where the $P_i$'s are the projectors in the spectral decomposition of $A$; conversely, given a PVM $\alpha$, there is a unique observable determined by the spectral family whose elements are defined by $E_{\lambda} := \alpha \big((-\infty, \lambda] \big)$.  In the sequel, we will use the same symbol (as well as the term `observable') to refer to any of these three ways of representing an observable, as this standard abuse of notation enables us to streamline the following discussion.  For example, for an observable $A$, and any $S \in \mathcal{B}(\mathbb{R})$, we will simply write $A(S)$ instead of $\alpha_A(S)$.

\subsection{Coarse-graining of Observables}

Given an observable $A$, any set $S \in \mathcal{B}(\mathbb{R})$ naturally corresponds to a two-outcome measurement associated with the projection operators $A(S)$ and $A(S)^{\bot} = A(S^c)$, where $S^c$ is the set theoretic complement of $S$.  For a system in the state represented by the density operator $\rho$, an unselected measurement of $A$ associated with these two outcomes takes $\rho$ to ${ A(S) \, \rho \, A(S) \, + \, A(S^c) \, \rho \, A(S^c) }$.  Since the projection operator $A(S)$ corresponds to the subspace spanned by all states $\ket{\psi} \in \mathcal{H}$ such that one can say with certainty that the value of the observable $A$ is in the range $S$ (i.e.~a measurement of $A$ yields a value in $S$ with probability $1$), and similarly for $A(S^c)$, we see that one can think of $A(S)$ and $A(S^c)$ together as a ``coarse-graining'' of $A$ whereby one only determines which \emph{region} $A$ takes its value in upon measurement (either $S$ or $S^c$), but not the specific value.  In fact, since the set of ``outcomes'' of this coarse-grained measurement consists of $\mathcal{P} := \{S,S^c\}$, one can define a course-grained observable $\tilde{A}$ associated with $A$ more precisely as a PVM on the measurable space  $(\mathcal{P}, 2^{\mathcal{P}})$, where $2^{\mathcal{P}}$ is the set of all subsets of $\mathcal{P}$.  In particular, we define $\tilde{A}(Q) := A(\bigcup Q)$, where $Q \in 2^{\mathcal{P}}$ (and $\bigcup X := \bigcup_{Z \in X} Z $ for any set $X$ whose elements $Z$ are, themselves, sets).  This procedure can be generalized, allowing $A$ to be a PVM on any measurable space $(\Omega, \mathcal{M})$, and $\mathcal{P}$ to be any partition of $\Omega$ --- each such partition will be associated with a coarse-grained PVM $\tilde{A}$~\footnote{In this case, one must replace $2^{\mathcal{P}}$ with an appropriate collection $\mathcal{M}_\mathcal{P}$ of subsets of $\mathcal{P}$ --- see Theorem \ref{thm:coarse-grain} in the appendix for details in a more general context.}.  Since $\tilde{A}$ is determined uniquely by $A$ and $\mathcal{P}$, any observable contains complete information about all its possible coarse-grainings.  Additionally, note that since unselected measurements are a special case of trace preserving quantum operations \cite{Nielsen} and each partition of the outcome space is associated with an unselected measurement of the (course-grained) observable, we see that the description of measurements of these observables fits naturally into the more general framework of quantum operations.

\subsection{Joint Observables in the Commuting Case}

If $A$ and $B$ are observables on $\mathcal{H}$, a PVM ${K:\mathcal{B}(\mathbb{R}^2) \rightarrow \mathcal{L}_{\mathcal{H}}}$ (where $\mathcal{B}(\mathbb{R}^2)$ denotes the Borel subsets of $\mathbb{R}^2$) which satisfies $K(R_1 \times R_2) = A(R_1) \wedge B(R_2)$ for all $R_1, R_2 \in \mathcal{B}(\mathbb{R})$ is said to be a \emph{joint observable for $A$ and $B$}.  It follows immediately that any joint observable $K$ for $A$ and $B$ has the property that $K(R_1 \times \mathbb{R}) = A(R_1)$ and $K(\mathbb{R} \times R_2) = B(R_2)$, which is to say that $K$ has the expected \emph{margins}.
If we suppose that $A$ and $B$ commute, it is straightforward to show that the map $J_{AB}: \mathcal{B}(\mathbb{R}^2) \rightarrow \mathcal{L}_{\mathcal{H}}$ defined by (for $Q \in \mathcal{B}(\mathbb{R}^2)$) 
\begin{equation}
\label{J_AB}
J_{AB}(Q) := \bigvee_{\substack{R_1 \times R_2 \subseteq Q \\ R_1 \times R_2 \in \mathcal{B}(\mathbb{R}^2)}} \big[ A(R_1) \wedge B(R_2)\big]
\end{equation}
is a joint observable.  Gudder \cite{Gudder_68} was the first to show that a joint observable for $A$ and $B$ exists if and only if $[A, B ] = 0$, and moreover, that this joint observable is unique.  (However, as far as we can tell, expression (\ref{J_AB}) above for this unique joint observable has not previously appeared in the literature.)

For any $R_1 \times R_2 \in \mathcal{B}(\mathbb{R}^2)$ and for any observables $A$ and $B$, we have that $A(R_1) \wedge B(R_2)$ is the projection operator onto the subspace spanned by the set of all states $| \psi \rangle \in \mathcal{H}$ such that if the system is initially in the state $| \psi \rangle$, a measurement of the observable $A$ yields (with certainty) an outcome in $R_1$ and a measurement of the observable $B$ yields (with certainty) an outcome in $R_2$.  Given this, we see that when $A$ and $B$ commute, $J_{AB}$ encodes all information about a simultaneous measurement of $A$ and $B$, as well as information about all possible ``coarse-grainings'' of this simultaneous measurement associated with partitions of the outcome space $\mathbb{R}^2$.  This is to say that for any partition $\{ Q_1, \ldots, Q_n \}$ of $\mathbb{R}^2$ with $Q_i \in \mathcal{B}(\mathbb{R}^2)$ for all $i$, we have that $\sum_{i=1}^n J_{AB}(Q_i) = I$, and the (trace preserving) quantum operation 
\begin{equation}
\label{quantumoperationeq}
\rho \rightarrow \sum_{i=1}^n J_{AB}(Q_i) \, \rho \, J_{AB}(Q_i)
\end{equation}  
describes an (unselected and course-grained) simultaneous measurement of $A$ and $B$.  

Further, for any function $f:~\mathbb{R}^2~\rightarrow~\mathbb{R}$ and pair of commuting observables $A$ and $B$, one obtains the aforementioned functional calculus of (commuting) observables by taking the PVM $f(A,B)$ to be defined in terms of the joint observable by $f(A,B) := J_{AB} \circ f^{-1}$~\cite{Gudder_68}, where ${ f^{-1}: \mathcal{B}(\mathbb{R}) \rightarrow \mathcal{B}(\mathbb{R}^2) }$ is the map which takes each $X \in \mathcal{B}(\mathbb{R})$ to its \emph{pre-image} under $f$.   As mentioned previously, when the function $f$ is simply addition or multiplication of numbers, applying $f$ to commuting observables in the manner just described yields the usual linear algebraic sum or product, respectively.

\subsection{Example: $J_{AB}$ when $\dim \mathcal{H} = 2$ and $[A,B]=0$}
\label{sec:ex2DJAB-commuting}

In what follows, we take $\mathcal{H}$ to be a two-dimensional Hilbert space.  Also, let $A= \alpha I + \mathbf{a} \cdot \boldsymbol{\sigma}$ and $B = \beta I + \mathbf{b} \cdot \boldsymbol{\sigma}$, where $\alpha, \beta \in \mathbb{R}$, $\mathbf{a}, \mathbf{b} \in \mathbb{R}^3$, $I$ is the identity matrix, and $\boldsymbol{\sigma}$ is the 3-vector consisting of the Pauli matrices.  Note that the eigenvalues of $A$ and $B$ are $a_{\pm} := \alpha \pm | \mathbf{a}|$ and $b_{\pm} := \beta \pm | \mathbf{b} |$, respectively, and assume $[A,B]=0$ (which is equivalent to $\mathbf{a}$ and $\mathbf{b}$ being co-linear).

In the case in which both $A$ and $B$ each have a single eigenvalue (i.e.~$A = \alpha I $ and $B = \beta I$), we have that for any $Q \in \mathcal{B}(\mathbb{R}^2)$,
\begin{equation}
J_{AB}(Q) = \begin{cases} \ I \mbox{ if } (\alpha,\beta) \in Q \\
                         \ 0 \mbox{ if } (\alpha,\beta) \notin Q. \end{cases} 
\end{equation}

The next case to consider is when only one of the observables, say $A$, has two distinct eigenvalues.  In this case, the projector onto the eigenspace of $A$ with eigenvalue $a_+ $ is given by $P_+ := \frac{1}{2}(I + \hat{\mathbf{a}} \cdot \boldsymbol{\sigma})$, where  $\hat{\mathbf{a}} = \mathbf{a}/|\mathbf{a}|$.  Similarly the projector on to the eigenspace of $A$ with eigenvalue $a_-$ is given by $P_- := \frac{1}{2}(I - \hat{\mathbf{a}} \cdot \boldsymbol{\sigma}) = P^\perp_+$.  Then, for any $Q \in \mathcal{B}(\mathbb{R}^2)$, it is easy to see that
\begin{equation}
J_{AB}(Q) = \begin{cases} \ I \mbox{  if both } (a_+,\beta), (a_-,\beta) \in Q \\
                         \ P_+ \mbox{ if } (a_+,\beta) \in Q \mbox{ and }  (a_-, \beta) \notin Q \\
                         \ P_- \mbox{ if } (a_+,\beta) \notin Q \mbox{ and }  (a_-, \beta) \in Q \\
                         \ 0 \mbox{  if both } (a_+,\beta), (a_-,\beta) \notin Q. \end{cases} 
\end{equation}

Moving on to the case in which both $A$ and $B$ have two distinct eigenvalues, there are two possibilities corresponding to whether the eigenstate $\ket{a_+}$ of $A$ with eigenvalue $a_+$ is an eigenstate of $B$ with eigenvalue $b_+$ or $b_-$.  We proceed assuming $B \ket{a_+} = b_+ \ket{a_+}$; analysis of the other possibility proceeds analogously.  Notice that in this case $P_+$, as defined above, is also the projector onto the eigenspace of $B$ with eigenvalue $b_+$.  A straightforward computation then shows (for $Q \in \mathcal{B}(\mathbb{R}^2)$)

\begin{equation}
J_{AB}(Q) = \begin{cases} \ I \mbox{  if both } (a_+,b_+), (a_-,b_-) \in Q \\
                         \ P_+ \mbox{ if } (a_+,b_+) \in Q \mbox{ and }  (a_-, b_-) \notin Q \\
                         \ P_- \mbox{ if } (a_+,b_+) \notin Q \mbox{ and }  (a_-, b_-) \in Q \\
                         \ 0 \mbox{  if both } (a_+,b_+), (a_-,b_-) \notin Q. \end{cases} 
\end{equation}

Note that (as expected) $J_{AB}$ is a PVM in each of the three cases discussed above.  Also, it is easy to see that $J_{AB}(R_1 \times R_2) = A(R_1) \wedge B(R_2)$ for any $R_1,R_2 \in \mathcal{B}(\mathbb{R})$, so that $J_{AB}$ is, in fact, a joint observable.

\section{Joint Observables in the Non-commuting Case}
\label{sec:generalizedJOs}

\subsection{Generalized Joint Observables and Joint Measurability}
\label{sec:JOs&JM}

We will now demonstrate that for \emph{any} pair of observables $A$ and $B$, the expression for $J_{AB}$ in (\ref{J_AB}) above, which is still well-defined when $[A,B] \neq 0$, has a natural interpretation in terms of measurement even though it is no longer a PVM in this case. First note that $J_{AB}$ still has the correct margins, even when $A$ and $B$ don't commute. It is also straightforward to show that properties (i) and (ii) of PVMs (defined in section \ref{sec:kapPVMs}) still hold.  (See Theorem \ref{thm:joint_pvim} in the appendix for a proof.)  However, while property (iii), also known as \emph{countable additivity}, need not hold in general,  $J_{AB}$ does satisfy \emph{countable sub-additivity}, which is to say that for all $R_1, R_2, \ldots \in \mathcal{B}(\mathbb{R}^2)$ such that $R_i \cap R_j = \emptyset$ ($i \neq j$), we have 
\begin{equation}
\label{subadditivityofJ_AB}
\displaystyle \sum_{i=1}^\infty J_{AB}(R_i) \leq J_{AB} \Big( \bigcup_{i=1}^\infty R_i \Big) .  
\end{equation}


Notice that for any partition $\{ Q_1, \ldots, Q_n \}$ of $\mathbb{R}^2$, $\bigvee_{i=1}^{n} J_{AB}(Q_i) = \sum_{i=1}^n J_{AB}(Q_i)$ holds as a consequence of property (ii), but that this sum of orthogonal projectors need not equal the identity operator on $\mathcal{H}$ due to the failure of countable additivity (although $\sum_{i=1}^n J_{AB}(Q_i) \leq I$ by equation (\ref{subadditivityofJ_AB}) above).  
Hence, to any partition $\{ Q_1, \ldots, Q_n \}$ of $\mathbb{R}^2$ with ${\sum_{i=1}^n J_{AB}(Q_i) \neq I}$, there corresponds an unselected measurement in which there is a chance that we do not obtain any of our measurement outcomes --- that is, the corresponding quantum operation (as in equation (\ref{quantumoperationeq})) is not trace preserving.  \emph{It is this quantum operation which we will refer to as a simultaneous (or joint) measurement of $A$ and $B$} (independent of whether or not $\sum_{i=1}^n J_{AB}(Q_i) = I$ for the partition $\{ Q_1, \ldots, Q_n \}$ of $\mathbb{R}^2$, and independent of whether or not $[A,B] = 0$).  Note also that such measurements are \emph{sharp} (albeit course-grained) since the $J_{AB}(Q_i)$'s are pairwise orthogonal projection operators.  We will refer to $J_{AB}$ above as a \emph{generalized joint observable}.

We now give an explicit realization of such an unselected measurement (as a combination of unitary evolution and selected measurement) using an ancilla system.  Let $\{ Q_1, \ldots, Q_n \}$ be a partition of $\mathbb{R}^2$.  The unselected simultaneous measurement of $A$ and $B$ associated with this partition is given by the following schema.  First, define 
\begin{equation}
J_{AB}^0 := I - \sum_{i=1}^n J_{AB}(Q_i), 
\end{equation}
and let $\mathcal{A}$ be an $n+1$ dimensional Hilbert space with orthonormal basis $\{ \ket{i} \}_{i=0}^n$.  From equation (\ref{subadditivityofJ_AB}), it is easy to see that for any $\ket{\psi} \in \mathcal{H}$, there exists a unitary operator $U$ on $\mathcal{H} \otimes \mathcal{A}$ satisfying 
\begin{equation}
U \big( \ket{\psi} \otimes \ket{0} \big) = J_{AB}^0 \ket{\psi} \otimes \ket{0} + \sum_{i=1}^n \big( J_{AB}(Q_i) \ket{\psi} \big) \otimes \ket{i}. 
\end{equation}
Starting with an initial state $\rho$ on $\mathcal{H}$, form the state $\rho^{\prime} := \rho \otimes \ket{0} \bra{0}$ on $\mathcal{H} \otimes \mathcal{A}$.  Then evolve the state as $\rho^{\prime} \mapsto U \rho^{\prime} U^\dag$, and, following this, projectively measure the operator $I \otimes \sum_{i=1}^n \ket{i} \bra{i}$, selecting for the $+1$ eigenvalue.  Finally, trace over $\mathcal{A}$.  It is straightforward to see that this gives the evolution in equation (\ref{quantumoperationeq}), but now where $\sum_{i=1}^n J_{AB}(Q_i) \neq I$ in general.

\subsection{Connection to the Uncertainty Principle}

The extent to which the above quantum operation manages to be trace preserving increases, in general, with more coarse-graining of our partitions, due to the sub-additivity of $J_{AB}$.  We can interpret this as a manifestation of the uncertainty principle with regard to our joint measurements --- the essential feature is that as we decrease the resolution of the measurement, it becomes easier to find states for which we can say that the values of $A$ and $B$ for that state are constrained to lie in any fixed region of the plane.  In fact, for any state $\ket{\psi} \in \mathcal{H}$, we can make a direct quantitative connection between the uncertainty principle and any convex rectangular region $R_1 \times R_2 \in \mathcal{B}(\mathbb{R}^2)$ for which $J_{AB}(R_1 \times R_2) \ket{\psi} = \ket{\psi}$.

As usually stated, for a system in the state $\ket{\psi} \in \mathcal{H}$, the uncertainty principle puts a lower bound on the product of the (square roots of the) variances of the outcomes of any pair of observables.  For example, the Robertson relation \cite{Robertson_29} for the observables $A$ and $B$ is
\begin{equation}
\Delta A \Delta B \geq \frac{1}{2} | \langle [A,B] \rangle |,
\end{equation}
where for any self-adjoint operator $Z$ on $\mathcal{H}$, we have that ${ \langle Z \rangle := \bra{\psi} Z \ket{\psi} }$, as well as that ${ \Delta Z := \sqrt{\langle Z^2 \rangle - \langle Z \rangle^2} }$.  
Now, for a given state ${ \ket{\psi} \in \mathcal{H} }$ and any convex region ${ R_1 \times R_2 \subseteq \mathcal{B}(\mathbb{R}^2) }$ such that ${ J_{AB}(R_1 \times R_2) \ket{\psi} = \ket{\psi} }$, we have 
\begin{equation}
\label{ourUncertaintyREL}
\frac{1}{2} \textrm{Area}(R_1 \times R_2) \geq  \Delta A \Delta B
\end{equation}   
(see Theorem \ref{thm:uncertaintyREL} in the appendix).  The minimal such value of ${ \textrm{Area}(R_1 \times R_2) }$ can thus be thought of as a measure of how ``incompatible'' $A$ and $B$ are, or of how uncertain a joint measurement of $A$ and $B$ is, in the state $\ket{\psi}$.  Interestingly, various investigations of joint measurements of non-commuting observables in the unsharp (POVM) case \emph{also} find that their natural measures of the uncertainty of the joint measurement (the analog of our minimal Area($R_1 \times R_2$) above) are bounded below exactly as in inequality (\ref{ourUncertaintyREL}) \cite{Arthurs_64, She_66, Yuen_82, Raymer_94}.

\subsection{Generalized Projection-Valued Measures}

Although $J_{AB}$ is \emph{not} a PVM when $A$ and $B$ do not commute, it comes ``close'' in the sense that it satisfies properties (i) and (ii) and is countably sub-additive (as noted previously).  In the sequel, for any measurable space $(\Omega, \mathcal{M})$, a map $\alpha: \mathcal{M} \rightarrow \mathcal{L}_{\mathcal{H}}$ which satisfies properties (i) and (ii) of PVMs, along with countable sub-additivity 
\begin{align*}
\textrm{(iii}^{\prime} ) \quad & \sum_{i=1}^\infty \alpha(R_i) \leq \alpha \Big( \bigcup_{i=1}^\infty R_i \Big)  \mtxt{for all} R_1, R_2, \ldots \in \mathcal{M} \\ & \textrm{such that} \ \ R_i \cap R_j = \emptyset \mtxt{whenever} i \neq j
\end{align*}
will be called a \emph{generalized projection-valued measure} (gPVM) on $(\Omega, \mathcal{M})$.  Notice that any gPVM $\alpha$ on $(\Omega, \mathcal{M})$ also satisfies $\alpha (\emptyset) = 0$, and that if $Q,S \in \mathcal{M}$ are such that $Q \subseteq S$, then $\alpha(Q) \leq \alpha(S)$ --- that is, gPVMs are monotonic and increasing.  
We now proceed to investigate further properties of the gPVM $J_{AB}$.

\subsection{Example: $J_{AB}$ when $\dim \mathcal{H} = 2$ and $[A,B] \neq 0$}
\label{sec:ex2DJAB-noncommuting}

As in section \ref{sec:ex2DJAB-commuting} we take $\dim \mathcal{H} = 2$, and let ${A= \alpha I + \mathbf{a} \cdot \boldsymbol{\sigma}}$ and ${B = \beta I + \mathbf{b} \cdot \boldsymbol{\sigma}}$, with eigenvalues $a_{\pm}$ and $b_{\pm}$, respectively.  

We begin by noting that when $[A,B] \neq 0$, $A$ and $B$ each have two distinct eigenvalues.  We retain the definition of $P_{\pm} = \frac{1}{2}(I \pm \hat{\mathbf{a}} \cdot \boldsymbol{\sigma})$ (from section \ref{sec:ex2DJAB-commuting}), and also define $Q_{\pm} := \frac{1}{2}(I \pm \hat{\mathbf{b}} \cdot \boldsymbol{\sigma})$, so that $Q_{\pm}$ is the projector onto the eigenspace of $B$ with eigenvalue $b_{\pm}$.  The computation of $J_{AB}$ yields, for any $Q \in \mathcal{B}(\mathbb{R}^2)$ (where $\sigma(A)$ denotes the spectrum of $A$)
\begin{align}
\label{eq:JAB-noncommuting}
J& _{AB}  (Q) = \nonumber \\
& \begin{cases} \ I \mbox{  if at least 3 elements of } \sigma(A) \times \sigma(B) \mbox{ are in } Q \\
                         \ Q_+ \mbox{ if both } (a_{\pm},b_+) \in Q, \mbox{ and both }  (a_{\pm},b_-) \notin Q \\
                         \ Q_- \mbox{ if both } (a_{\pm},b_+) \notin Q, \mbox{ and both }  (a_{\pm},b_-) \in Q \\
                         \ P_+ \mbox{ if both } (a_+,b_{\pm}) \in Q, \mbox{ and both }  (a_-,b_{\pm}) \notin Q \\
                         \ P_- \mbox{ if both } (a_+,b_{\pm}) \notin Q, \mbox{ and both }  (a_-,b_{\pm}) \in Q \\
                         \ 0 \mbox{ otherwise}. \end{cases} 
\end{align}

For all of the cases considered in section \ref{sec:ex2DJAB-commuting} (i.e.~when $[A,B] = 0$), the value of $J_{AB}$ is determined exactly by its action on single points in the space $\mathbb{R}^2$, but this is no longer true when $[A,B] \neq 0$.  In particular, $J_{AB}(\{p\}) = 0$ for any $p \in \mathbb{R}^2$, and so $J_{AB}$ is clearly not a PVM in the non-commuting case considered here.  It is straightforward to see, however, that $J_{AB}$ \emph{is} a gPVM.  Also, from equation (\ref{eq:JAB-noncommuting}) it is easy to see that $J_{AB}$ has the correct margins.  Finally, since $\mathcal{H}$ is two-dimensional, all of the projectors which are in the image of the map $J_{AB}$ occur in the images of the PVMs $A$ and $B$ --- this is no longer generically true in three or higher dimensions, even when $A$ and $B$ commute. 
Although generalized joint observables on two-dimensional Hilbert spaces are relatively simple, they suffice to illustrate the differences between $J_{AB}$ in the commuting and non-commuting cases, as well as some of the basic features of generalized joint observables. 


\subsection{Coarse-graining and PVMs Associated with $J_{AB}$}
\label{sec:cgandgsfs}

Another interesting property of $J_{AB}$ is that it is well-behaved with regard to the procedure of coarse-graining.  In particular, given partitions $\mathcal{P}_A$ and $\mathcal{P}_B$ of $\mathbb{R}$ associated with coarse-grainings $\tilde{A}$ and $\tilde{B}$ of observables $A$ and $B$, respectively, there is a natural partition $\mathcal{P}_{AB}$ of $\mathbb{R}^2$ which allows us to define a coarse-graining $\tilde{J}_{AB}$ of $J_{AB}$ in the same manner as for PVMs.  It is straightforward to show that
\begin{equation}
\label{eq:constJ_AVcommCG}
\tilde{J}_{AB} = J_{\tilde{A} \tilde{B}}.
\end{equation}
That is, the construction of our $J_{AB}$ commutes with the operation of coarse-graining~\footnote{We actually require one technical condition on the partitions $\mathcal{P}_A$ and $\mathcal{P}_B$ in order for equation (\ref{eq:constJ_AVcommCG}) to hold --- see Theorem \ref{thm:coarse} in the appendix.}.  Moreover, when the coarse-graining is ``coarse enough'' (specifically, when there exists $\{ Q_1, Q_2, \ldots \} \subseteq \mathcal{P}_{AB}$ satisfying $\sum_{i} J_{AB}(Q_i) = I$), the coarse-grained joint observable $\tilde{J}_{AB}$ is in fact a PVM, not just a gPVM (see Theorem~\ref{thm:coarse-grain}).

There is another method by which we can construct a PVM from $J_{AB}$.  As we show in the appendix (Theorem \ref{thm:chain_works_J}), any gPVM $J$ on $(\Omega, \mathcal{M})$, along with a \emph{generating chain} $\mathcal{E}$ for $\mathcal{M}$ (i.e.~$\mathcal{E} \subseteq \mathcal{M}$ generates $\mathcal{M}$ as a Boolean $\sigma$-algebra, and the elements of $\mathcal{E}$ are totally ordered under inclusion), can be used to construct a unique PVM on $(\Omega, \mathcal{M})$ which agrees with $J$ on the elements of $\mathcal{E}$ \footnote{Just as the generating chain $\{ (-\infty, \lambda] \}_{\lambda \in \mathbb{R}}$ for $\mathcal{B}(\mathbb{R})$ can be used to define a spectral family $\{ E_{\lambda} \}_{\lambda \in \mathbb{R}}$ from which a PVM on $(\mathbb{R}, \mathcal{B}(\mathbb{R}))$ can be constructed, we can think of the generating chain $\mathcal{E}$ for $\mathcal{B}(\mathbb{R}^2)$ as giving rise to a ``generalized spectral family'' $\{ J(X) \}_{X \in \mathcal{E}}$ from which a PVM on $(\mathbb{R}^2, \mathcal{B}(\mathbb{R}^2))$ can be constructed.}.  In the case ${ (\Omega, \mathcal{M}) = (\mathbb{R}^2, \mathcal{B}(\mathbb{R}^2)) }$, $\mathcal{E}$ a generating chain for $\mathcal{B}(\mathbb{R}^2)$, and $J = J_{AB}$, we denote this PVM by $J_{AB}^{\mathcal{E}}$.  (Of course, since $J_{AB}^{\mathcal{E}}$ is a PVM, it is naturally associated (for \emph{any} partition of $\mathbb{R}^2$) with a trace preserving quantum operation.)  When $[A,B] = 0$, we have $J_{AB}^{\mathcal{E}} = J_{AB}$ for any generating chain $\mathcal{E}$ for $\mathcal{B}(\mathbb{R}^2)$.  When $[A,B] \neq 0$ this is not true, and no $J_{AB}^{\mathcal{E}}$ is a joint observable for $A$ and $B$ since in this case no joint observable exists.  The physical meaning of the $J^{\mathcal{E}}_{AB}$'s when $[A,B] \neq 0$ remains obscure.

\subsection{Other Characterizations of $J_{AB}$}
\label{sec:charJ_AB}

In addition to all of the aforementioned properties of $J_{AB}$, we have the following independent characterization of our generalized joint observable.  For any PVMs $A$ and $B$, and any set map $J: \mathcal{B}(\mathbb{R}^2) \rightarrow \mathcal{L}_{\mathcal{H}}$, we have that $J = J_{AB}$ if and only if $J$ satisfies the following two conditions for all $Q \in \mathcal{B}(\mathbb{R}^2)$ and all $\ket{\psi} \in \mathcal{H}$ (see Theorem \ref{thm:characterization} in the appendix):
\begin{enumerate}[(1)] 
\item If there exist $R_1,R_2 \in \mathcal{B}(\mathbb{R})$ with $R_1 \times R_2 \subseteq Q$ and $A(R_1) \wedge B(R_2) \ket{\psi} = \ket{\psi}$, then $J(Q) \ket{\psi} = \ket{\psi}$.
\item If for every $R_1,R_2 \in \mathcal{B}(\mathbb{R})$ with $R_1 \times R_2 \subseteq Q$ we have $A(R_1) \wedge B(R_2) \ket{\psi} = 0$, then $J(Q) \ket{\psi} = 0$.
\end{enumerate}
Qualitatively speaking, property (1) above states the following intuitive requirement on the generalized joint observable $J_{AB}$: for a given state $\ket{\psi}$, if the system has the value of $A$ in $R_1$ and the value of $B$ in $R_2$, and $R_1 \times R_2 \subseteq Q$, then the value of the generalized joint observable $J_{AB}$ is in the range $Q$.  Similarly, property (2) above states that if one can never (i.e.~with probability zero) measure the value of $A$ in $R_1$ and $B$ in $R_2$ for every $R_1 \times R_2 \subseteq Q$, then the value of the generalized joint observable $J_{AB}$ is never in the range $Q$.  

Alternatively, $J_{AB}$ has a characterization related to possible measurement outcomes.  Let $M_A^{\ket{\psi}}$ denote the set of possible measurement outcomes associated with $A$ when the system is in the state $\ket{\psi}$, i.e.
\begin{equation}
\label{mmtoutcomeset}
M_A^{\ket{\psi}} := \{ \lambda \in \mathbb{R} \ : \ A(\{\lambda\}) \ket{\psi} \neq 0 \} .
\end{equation}
Then, for any $Q \in \mathcal{B}(\mathbb{R}^2)$ for which $M_A^{\ket{\psi}}~\times~M_B^{\ket{\psi}} \subseteq Q$, we have that $J_{AB}(Q)\ket{\psi} = \ket{\psi}$, which is to say that if the system is in the state $\ket{\psi}$, a joint measurement of $A$ and $B$ is guaranteed to yield an outcome in $Q$.  While the reverse implication holds for rectangular sets $Q = R_1 \times R_2$, it does not hold in general. (That this is so follows from the fact that $\{ \ket{\psi} \in \mathcal{H} \ : \ M_A^{\ket{\psi}} \times M_B^{\ket{\psi}} \subseteq Q \}$ is not a subspace of $\mathcal{H}$ unless $Q$ is rectangular.)  Despite this, $J_{AB}(Q)$ is the projector onto the span of all the states $\ket{\psi} \in \mathcal{H}$ whose possible measurement outcomes associated with $A$ and $B$ are contained in $Q$ (i.e.~$M_A^{\ket{\psi}} \times M_B^{\ket{\psi}} \subseteq Q$).  So, another way of thinking of $J_{AB}$ is as the minimal (with respect to the partial order $\leq$ on $\mathcal{L}_{\mathcal{H}}$) set map from $\mathcal{B}(\mathbb{R}^2) \to \mathcal{L}_{\mathcal{H}}$ satisfying property (1) above.

\section{Functional Calculus}
\label{sec:funcs}

\subsection{Basic Properties}
\label{sec:bp-functionalcalculus}

Given any observables $A$ and $B$, along with a function ${f:\mathbb{R}^2 \to \mathbb{R}}$, we define the map $f(A,B):\mathcal{B}(\mathbb{R}) \to \mathcal{L}_{\mathcal{H}}$ by (for all $Q \in \mathcal{B}(\mathbb{R})$)
\begin{equation}
f(A,B)(Q) := \big( J_{AB} \circ f^{-1} \big) (Q),
\end{equation}
just as in the case $[A,B]=0$.  Using the fact that $J_{AB}$ is a gPVM, it is straightforward to show that $f(A,B)$ is also a gPVM (see Theorem \ref{thm:f_pvm} in the appendix).  Unlike polynomials in the ordinary linear algebraic sum and product, there are no ambiguities in defining $f(A,B)$ --- as an example, for the two-variable polynomials $p(x,y) = xy^2x$ and $q(x,y) = y x^2 y$ (which both represent the same function from $\mathbb{R}^2$ to $\mathbb{R}$), we do not generically have $A B^2 A = B A^2 B$ (where juxtaposition of operators denotes the usual linear algebraic product), but we \emph{do} have $p(A,B) = q(A,B)$.  Additionally, for any unitary operator $U$, the generalized observable $f(A,B)$ has the intuitive property 
\begin{equation}
f(U A U^\dag, U B U^\dag) = U f(A,B) U^\dag,
\end{equation} 
which follows directly from Lemma \ref{lem:unitary} in the appendix.  Finally, just as an unselected measurement of $J_{AB}$ corresponds to a non-trace preserving quantum operation, so too does an unselected measurement of $f(A,B)$.

Now, for any pair of observables $A$ and $B$, and any function $f:\mathbb{R}^2 \to \mathbb{R}$, it turns out that $f(A,B)$ has a family of PVMs associated with it (just as the gPVM $J_{AB}$ has an associated family of PVMs).  In particular, for each generating chain $\mathcal{E}$ of $\mathcal{B}(\mathbb{R})$, there exists a unique PVM $f_{\mathcal{E}}(A,B)$ agreeing with $f(A,B)$ on all $E \in \mathcal{E}$ (see Theorem \ref{thm:chain_works_J} and Corollary \ref{cor:chain_works} in the appendix).  Of course, when $[A,B] = 0$, $f(A,B)$ is a PVM, and ${ f_{\mathcal{E}}(A,B) = f(A,B) }$ for any generating chain $\mathcal{E}$.

As an example, consider arbitrary observables $A$ and $B$, along with the generating chain 
\begin{equation}
\label{eq:intervalsT}
\mathcal{E}^{\star} := \{ ( -\infty, \lambda ] \ : \ \lambda \in \mathbb{R} \},
\end{equation}
and define
\begin{equation}
\label{eq:spectral}
E_\lambda := J_{AB} \circ f^{-1} \big((- \infty, \lambda] \big).
\end{equation}
Then $\{  E_{\lambda} \}_{\lambda \in \mathbb{R}}$ is a spectral family of projectors on $\mathcal{H}$, which corresponds to an observable in the standard way, and this observable is $f_{\mathcal{E}^\star}(A,B)$.  Now, since $\mathcal{H}$ is finite dimensional,  $f_{\mathcal{E}^\star}(A,B)$ will have a finite number of distinct eigenvalues $\lambda_1 < \lambda_2 < \cdots < \lambda_n$.  As such, since $R_1 \times R_2 \subseteq f^{-1}((- \infty, \lambda])$ exactly when $f(a,b) \leq \lambda$ for every $a \in R_1$ and $b \in R_2$, it is straightforward to show that a state $\ket{\psi} \in \mathcal{H}$ is an eigenstate of $f_{\mathcal{E}^\star}(A,B)$ with eigenvalue $\lambda_i$ exactly when $\ket{\psi}$ is in the span of all states for which any possible measurement outcomes $a$ and $b$ of $A$ and $B$, respectively, satisfy $f(a,b) \leq \lambda_i$, but $\ket{\psi}$ is orthogonal to any state whose possible measurement outcomes $a'$ and $b'$ of $A$ and $B$, respectively, satisfy $f(a',b') \leq \lambda_{i-1}$ \footnote{This extra orthogonality condition is not necessary when $i = 1$ (and thus $\lambda_{i-1}$ does not exist).}.  We will return to $f_{\mathcal{E}^\star}(A,B)$, for some specific choices of $f$, shortly.

For any generating chain $\mathcal{E}$ for $\mathcal{B}(\mathbb{R})$, the observable $f_{\mathcal{E}}(A,B)$ satisfies the following nice properties (where, as before, $\sigma(A)$ denotes the spectrum of $A$): 
\begin{enumerate}[(1)] 
\item $\sigma \big( f_{\mathcal{E}}(A,B) \big) \subseteq f \big( \sigma(A) , \sigma(B) \big)$;
\item $f_{\mathcal{E}}(U A U^\dag, U B U^\dag) = U f_{\mathcal{E}}(A,B) U^\dag$ for any unitary operator $U$ on $\mathcal{H}$;
\item If $A \ket{\psi} = a \ket{\psi}$ and $B \ket{\psi} = b \ket{\psi}$ for some $\ket{\psi} \in \mathcal{H}$, then $f_{\mathcal{E}}(A,B) \ket{\psi} = f(a,b) \ket{\psi}$.
\end{enumerate}
These are proved in the appendix (Theorem \ref{thm:f_nice}).  Additionally, it follows from property (1) that for any subset $S$ of $\mathbb{R}$ which contains all eigenvalues of both $A$ and $B$, and for any $f:\mathbb{R}^2 \rightarrow \mathbb{R}$ such that $f(x,y) \in S$ for all $x,y \in S$, we have that all eigenvalues of $f_{\mathcal{E}}(A,B)$ are also in $S$.  For example, if the eigenvalues of both $A$ and $B$ are positive integers and $f:\mathbb{R}^2 \rightarrow \mathbb{R}$ is addition, then all of the eigenvalues of the ``sum'' $f_{\mathcal{E}}(A,B)$ will also be positive integers.  


We now present an equality which illustrates the naturality of our functional calculus.  We begin with some definitions.  Let $f,g:\mathbb{R}^2 \to \mathbb{R}$ be addition and multiplication of numbers respectively, and let $e:\mathbb{R} \to \mathbb{R}$ be the exponential function.  For any observables $A$ and $B$, and for any fixed generating chain $\mathcal{E}$ of $\mathcal{B}(\mathbb{R})$, let $A \stackrel{\LARGE \textbf{.}}{+} B := f_{\mathcal{E}}(A,B)$ and $A \stackrel{\LARGE \textbf{.}}{\times} B := g_{\mathcal{E}}(A,B)$. If $e^{-1}(E) \in \mathcal{E}$ for any $E \in \mathcal{E}$ (as is the case for $\mathcal{E}^{\star}$ in (\ref{eq:intervalsT})), then
\begin{equation}
\label{genBCH}
 e^A \stackrel{\LARGE \textbf{.}}{\times} e^B  = e^{A \stackrel{\LARGE \textbf{.}}{+} B} .
\end{equation}
This statement follows directly from Lemmas \ref{lem:composition} and \ref{lem:right_comp} in the appendix.  
Comparing (\ref{genBCH}) above to the Baker-Campbell-Hausdorff formula involving the ordinary linear algebraic sum and product of non-commuting observables
\begin{equation}
e^A e^B = e^{A + B + \frac{1}{2}[A,B] + \frac{1}{12}[A,[A,B]] - \frac{1}{12}[B,[A,B]] + \cdots },
\end{equation}
one can clearly see the elegance and simplicity of the new functional calculus.  Our addition and multiplication also satisfy other nice properties, such as commutativity
\begin{equation}
A \stackrel{\textbf{.}}{\times} B = B \stackrel{\textbf{.}}{\times} A \quad \textrm{and} \quad A \stackrel{\textbf{.}}{+} B = B \stackrel{\textbf{.}}{+} A
\end{equation}
for any $A$ and $B$ (even when $[A,B] \neq 0$) and any $\mathcal{E}$.  However, in some ways the behavior is not as natural --- for example, our $\stackrel{\textbf{.}}{+}$ and $\stackrel{\textbf{.}}{\times}$ are not, in general, associative, and generically $\stackrel{\textbf{.}}{\times}$ does not distribute over $\stackrel{\textbf{.}}{+}$.  

Finally, we note another natural property of our functional calculus.  For observables $A$ and $B$ we write $A \sqsubseteq B$ if 
\begin{equation}
\label{spectralorder}
A((- \infty, \lambda]) \leq B((- \infty, \lambda]) \quad \forall \lambda \in \mathbb{R} ,
\end{equation}
which is referred to as the \emph{spectral order} (see e.g., \cite{olson_71}) and differs from the usual order on Hermitian operators (which is defined by $A \leq B$ if ${\bra{\psi} A \ket{\psi} \leq \bra{\psi} B \ket{\psi}}$ for all ${\ket{\psi} \in \mathcal{H}}$).  In general, we have that if ${ A \sqsubseteq B }$, then ${ A \leq B } $, but not conversely; however, these orderings agree when $A$ and $B$ are projection operators, as well as when $A$ and $B$ commute.
Now, for $\stackrel{\textbf{.}}{+}$ defined as above and with respect to the generating chain $\mathcal{E}^{\star}$ in (\ref{eq:intervalsT}), if $A$ and $B$ are observables such that $A \sqsubseteq B$, then (for any observable~$C$)  
\begin{equation}
\label{spectralorder}
A \stackrel{\textbf{.}}{+} C \sqsubseteq B \stackrel{\textbf{.}}{+} C,
\end{equation}
i.e.~the addition of observables defined with respect to $\mathcal{E}^{\star}$ respects the spectral order on the observables. 
(By contrast, the ordinary linear algebraic sum does not respect the spectral order \cite{olson_71}.)  A similar statement holds for the operation $\stackrel{\textbf{.}}{\times}$ defined with respect to $\mathcal{E}^{\star}$ when the eigenvalues of the observables involved are non-negative numbers.

 We now present a simple worked-out example of our functional calculus.

\subsection{Example: $f(A,B)$ when $\dim \mathcal{H} = 2$}
\label{sec:examples}

In what follows, we take $\mathcal{H}$ to be a two-dimensional Hilbert space, and $f:\mathbb{R}^2 \to \mathbb{R}$ to be addition.  While this example is somewhat trivial, it illustrates some interesting points.
As in section \ref{sec:ex2DJAB-commuting}, we take $A= \alpha I + \mathbf{a} \cdot \boldsymbol{\sigma}$ and $B = \beta I + \mathbf{b} \cdot \boldsymbol{\sigma}$, where the eigenvalues are given by $a_{\pm} := \alpha \pm | \mathbf{a}|$ and $b_{\pm} := \beta \pm | \mathbf{b} |$, respectively.  Without loss of generality, we assume that $|\mathbf{a} | \geq | \mathbf{b} |$. 

First, recall that when $[A,B]=0$, we have that $f(A,B) = A+B$.  However, when $[A,B] \neq 0$, then $f(A,B)$ is only a gPVM (i.e.~it is not a PVM).  Of course, in this case $A$ and $B$ must each have two distinct eigenvalues.  One can clearly see that $f(A,B)$ is not a PVM by computing
\begin{align}
f(A,B) \big( (-\infty, a_- + b_- ] \big) & = 0 \nonumber \\
f(A,B) \big( (a_- + b_- , a_+ + b_+ ) \big) & = 0 \nonumber \\
f(A,B) \big( [a_+ + b_+, \infty) \big) & =0 ,   
\end{align}
and noting that if $f(A,B)$ were a PVM, the above three terms would need to sum to $I$.

As discussed above, we can form an observable ${ A \stackrel{\LARGE \textbf{.}}{+} B := f_{\mathcal{E}}(A,B) }$ by choosing a generating chain $\mathcal{E}$ for $\mathcal{B}(\mathbb{R})$.  A natural choice is to take $\mathcal{E} = \mathcal{E}^{\star}$ from (\ref{eq:intervalsT}), in which case the spectral family $\{ E^{A \stackrel{\LARGE \textbf{.}}{+} B}_{\lambda} \}_{\lambda \in \mathbb{R}}$ for $A \stackrel{\LARGE \textbf{.}}{+} B$ is given by 
\begin{equation}
\label{eq:plus}
E^{A \stackrel{\LARGE \textbf{.}}{+} B}_\lambda = J_{AB} \circ f^{-1} \big((- \infty, \lambda] \big) = \bigvee_{\eta \in \mathbb{R}} E^A_\eta \wedge E^B_{\lambda - \eta} ,
\end{equation}
where $\{ E_{\lambda}^A \}_{\lambda \in \mathbb{R}}$ and $\{ E_{\lambda}^B \}_{\lambda \in \mathbb{R}}$ are the spectral families for $A$ and $B$, respectively.

A straightforward calculation then yields (whenever $[A,B] \neq 0$)
\begin{equation}
A \stackrel{\LARGE \textbf{.}}{+} B = (\alpha + \beta) I + (| \mathbf{a}| - |\mathbf{b}|) \hat{\mathbf{a}} \cdot \boldsymbol{\sigma},
\end{equation}
where $\hat{\mathbf{a}} = \mathbf{a}/|\mathbf{a}|$.  Note that the eigenvalues of the observable $A  \stackrel{\LARGE \textbf{.}}{+} B$ are $a_+ + b_-$ and $a_- + b_+$.  Compare this to the eigenvalues of the linear algebraic sum $A + B$, which are given by $(\alpha + \beta) \pm | \mathbf{a} + \mathbf{b}|$, and are clearly not of the above form.

We can easily see from the above results that $\stackrel{.}{+}$ is not an associative operation.  Let $\alpha = \beta = 0$ and let $C = \mathbf{c} \cdot \boldsymbol{\sigma}$.  Then the eigenvalues of $(A \stackrel{\LARGE \textbf{.}}{+} B) \stackrel{\LARGE \textbf{.}}{+} C$ are 
\begin{equation}
\pm ( \big| |\mathbf{a}| - | \mathbf{b} | \big| - |\mathbf{c}| )
\end{equation}
while the eigenvalues of $A \stackrel{\LARGE \textbf{.}}{+} (B \stackrel{\LARGE \textbf{.}}{+} C)$ are
\begin{equation}
\pm (  |\mathbf{a}| - \big| | \mathbf{b} |  - |\mathbf{c}| \big| ).
\end{equation} 

The simple example above explicitly illustrates some of the general features of our functional calculus discussed in section \ref{sec:bp-functionalcalculus}.  

\vspace{15pt}

\section{Conclusion}
\label{sec:conclusion}

In the context of finite-dimensional Hilbert spaces, we have given an explicit construction of a generalized joint observable $J_{AB}$ for an arbitrary pair of observables $A$ and $B$, as well as described a realization of the corresponding joint measurement in the framework of quantum operations, both of which agree with their standard definitions when $[A,B] = 0$.  Further, we have noted that the failure of generalized joint observables to be ordinary joint observables is characterized by their lack of countable additivity (as they are only countably sub-additive), and have demonstrated how this failure can be interpreted as a manifestation of the uncertainty principle.  
We then went on to describe the functional calculus of observables which arises from our notion of a generalized joint observable, as well as described some of its remarkable properties.   

Although the results presented in sections \ref{sec:generalizedJOs} and \ref{sec:funcs} are for observables on finite-dimensional Hilbert spaces, the appendix extends these results to bounded observables with pure point spectra on any separable complex Hilbert space.  Also, although we have chosen (for notational simplicity) to present our results for \emph{pairs} of observables, they can all be extended in a straightforward fashion to sets of $n$ observables. In particular, equation (\ref{J_AB}) can be generalized to (for observables $A_1, \ldots, A_n$ and $Q \in \mathcal{B}(\mathbb{R}^n)$) 
\begin{equation}
J_{A_1, \ldots, A_n}(Q) := \hspace{-20pt} \bigvee_{\substack{R_1 \times \ldots \times R_n \subseteq Q \\ R_1 \times \ldots \times R_n \in \mathcal{B}(\mathbb{R}^n)}} \hspace{-20pt} \big[ A_1(R_1) \wedge \ldots \wedge A_n(R_n)\big],
\end{equation}
and one obtains the functional calculus in this case for (Borel measurable) functions $f:\mathbb{R}^n \to \mathbb{R}$ by defining $f(A_1, \ldots, A_n) := J_{A_1, \ldots, A_n} \circ f^{-1}$, which still has all of the interesting properties discussed in section \ref{sec:bp-functionalcalculus}.

This work opens up many directions for further research, perhaps the most pressing of which is to find interesting problems which are more naturally formulated in terms of our functional calculus of observables instead of the ordinary linear algebraic sum and product, and for which the new calculus provides novel physical insight.  Additionally, there are questions which are technical in nature that we would like to address --- e.g.~we would like to extend the notions of generalized joint observables and joint measurability presented here to observables with continuous portions to their spectra, as well as further explore properties of the families of PVMs associated with gPVMs.  Finally, it would be interesting to design simple experiments in which non-commuting observables are simultaneously measured in the manner outlined in section \ref{sec:JOs&JM}.   


\begin{acknowledgments}

We thank Randall Espinoza, Nick Huggett, Mark Mueller, and Josh Norton for many useful discussions.







\end{acknowledgments}



\section{Technical Appendix}

In this section we prove all of our previous results; moreover, we do this in a more general context than stated originally.  In what follows $\mathcal{H}$ will denote a fixed separable complex Hilbert space with projection lattice $\mathcal{L}_{\mathcal{H}}$.  Additionally, we assume the reader is conversant with the standard terminology and basic results used in functional analysis (see e.g., \cite{Kreyszig}).  Finally, we will take $\mathbb{N} = \{1,2,\ldots \}$ to denote the positive integers.

\subsection{Basic Definitions and Properties}

\begin{kap}
Let $(\Omega, \mathcal{M})$ be a measurable space, and let $A:\mathcal{M} \to \mathcal{L}_{\mathcal{H}}$ be such that
\begin{enumerate}[(1)]
\item $A(\Omega) = I$;
\item $A(R) \perp A(S)$ for all disjoint $R,S \in \mathcal{M}$;
\item $\displaystyle \sum_{i=1}^\infty A(R_i) = A \Big( \bigcup_{i=1}^\infty R_i \Big)  $ for all $R_1, R_2, \ldots \in \mathcal{M}$  such that $R_i \cap R_j = \emptyset$ whenever $i \neq j$. 
\end{enumerate}
Then $A$ is called a \emph{PVM} on $(\Omega, \mathcal{M})$, and if furthermore $\Omega = \mathbb{R}$ and $\mathcal{M} = \mathcal{B}(\mathbb{R})$ (the Boolean $\sigma$-algebra of Borel subsets of $\mathbb{R}$), then we call $A$ a \emph{standard PVM}. 
\end{kap}

The standard PVMs are in 1-1 correspondence with (not necessarily bounded) self-adjoint operators on $\mathcal{H}$.  
Using the spectral family $\{ E_{\lambda} \}_{\lambda \in \mathbb{R}}$ defined by a standard PVM (i.e.~${ A((- \infty, \lambda]) = E_{\lambda} }$ for all ${ \lambda \in \mathbb{R} }$), the corresponding self-adjoint operator $A$ on $\mathcal{H}$ is obtained by the Riemann-Stieltjes integral   
\begin{equation}
A = \int_{- \infty}^{+ \infty} \lambda d E_{\lambda} ,
\end{equation} 
which is defined to converge in the strong operator topology.  Conversely, given a self-adjoint operator whose (right continuous) spectral family is denoted by $\{ E_{\lambda} \}_{\lambda \in \mathbb{R}}$, the corresponding standard PVM $A: \mathcal{B}(\mathbb{R}) \rightarrow \mathcal{L}_{\mathcal{H}}$ is defined to be the unique PVM which satisfies (for all $\alpha, \beta \in \mathbb{R}$ with $\alpha \leq \beta$)
\begin{equation}
A((\alpha, \beta]) = \int_{\alpha}^{\beta} d E_{\lambda}.
\end{equation} 

\begin{kap}
Let $A$ be a PVM.  Then we say $A$ is \emph{diagonalizable} if $A$ is standard and the self-adjoint operator corresponding to $A$ is bounded and has a set of eigenvectors which forms a (Schauder) basis for $\mathcal{H}$.
\end{kap}
Note that $A$ diagonalizable is equivalent to the self-adjoint operator corresponding to $A$ being bounded with pure point spectrum.  Moreover, for $A$ diagonalizable, we define $\sigma_p(A)$ to be the set of eigenvalues of the self-adjoint operator corresponding to $A$, and $\sigma(A)$ to be the spectrum of this operator.  Then we have that (i) $\sigma(A)$ is compact, (ii) $\sigma_p(A)$ is countable, (iii) $\sigma(A)$ is the closure of $\sigma_p(A)$, and finally (iv) ${ A(R) = A(R \cap \sigma(A)) = A(R \cap \sigma_p(A)) }$ for any ${ R \in \mathcal{B}(\mathbb{R}) }$ \cite{Kreyszig}.

As mentioned previously, our generalized joint observables as defined in section \ref{sec:generalizedJOs} above are \emph{not} quite PVMs --- instead, they are generalized projection-valued measures (or gPVMs) as defined below.  We note that the properties of gPVMs actually make them analogous to the notion of an \emph{inner measure}, rather than a typical measure \cite{halmos}.


\begin{kap}
Let $(\Omega,\mathcal{M})$ be a measurable space.  A \emph{generalized projection-valued measure}, or \emph{gPVM}, on ($\Omega, \mathcal{M}$) (or just $\Omega$, if $\mathcal{M}$ is clear from the context) is a map $J:\mathcal{M} \to \mathcal{L}_{\mathcal{H}}$ such that
\begin{enumerate}[(1)]
\item $J(\Omega) = I$;
\item $J(R) \perp J(S)$ for all disjoint $R,S \in \mathcal{M}$;
\item $\displaystyle \sum_{i=1}^\infty J(R_i) \leq J \Big( \bigcup_{i=1}^\infty R_i \Big)$ for all $R_1, R_2, \ldots \in \mathcal{M}$ such that $R_i \cap R_j = \emptyset$ whenever $i \neq j$.  
\end{enumerate}   
\end{kap}  

Note that every PVM is also trivially a gPVM.  We now prove some useful properties of gPVMs.

\begin{lemmy}
\label{lem:gpvm_props}
Let $(\Omega, \mathcal{M})$ be a measurable space, and let $J$ be a gPVM on $(\Omega, \mathcal{M})$.  Then
\begin{enumerate}[(1)]
\item $J(\emptyset) = 0$;
\item $J(R) \leq J(S)$ whenever $R, S \in \mathcal{M}$ with $R \subseteq S$;
\item $\displaystyle \bigvee_{i=1}^n J(R_i) = J(R_n)$ and $\displaystyle \bigwedge_{i=1}^n J(R_i) = J(R_1)$ for $R_1, \ldots, R_n \in \mathcal{M}$ with $R_1 \subseteq \cdots \subseteq R_n$;
\item $\displaystyle \bigvee_{i=1}^\infty J(R_i) \leq J \Big( \bigcup_{i=1}^\infty R_i \Big)$ and $\displaystyle \bigwedge_{i=1}^\infty J(R_i) \geq J \Big( \bigcap_{i=1}^\infty R_i \Big)$ for any $R_1, R_2, \ldots \in \mathcal{M}$.
\end{enumerate}
\end{lemmy}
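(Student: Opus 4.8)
The plan is to prove the four statements in the listed order, bootstrapping each from its predecessors together with the defining properties (1)--(3) of a gPVM. Statement (1) is immediate: since $\Omega \cap \emptyset = \emptyset$, property (2) of a gPVM gives $J(\Omega) \perp J(\emptyset)$, i.e.\ $J(\Omega)J(\emptyset) = 0$; as $J(\Omega) = I$ this forces $J(\emptyset) = 0$.

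For statement (2), suppose $R, S \in \mathcal{M}$ with $R \subseteq S$. Then $R$ and $S \setminus R$ lie in $\mathcal{M}$, are disjoint, and have union $S$, so applying countable sub-additivity (property (3)) to the sequence $R, S\setminus R, \emptyset, \emptyset, \ldots$ and invoking statement (1) to discard the $J(\emptyset)$ terms yields the operator inequality $J(R) + J(S\setminus R) \leq J(S)$. Since $J(R) \perp J(S\setminus R)$, the sum $J(R) + J(S\setminus R)$ is again a projection, equal to $J(R) \vee J(S\setminus R)$; because projection operators are positive we obtain $J(R) \leq J(R)\vee J(S\setminus R) \leq J(S)$ in the operator order, and since the operator order agrees with the lattice order on $\mathcal{L}_{\mathcal{H}}$ when restricted to projections, this is exactly $J(R) \leq J(S)$ in $\mathcal{L}_{\mathcal{H}}$.

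Statements (3) and (4) are then purely order-theoretic. For (3), with $R_1 \subseteq \cdots \subseteq R_n$, statement (2) makes $J(R_n)$ an upper bound of $\{J(R_i)\}_{i=1}^n$, so $\bigvee_{i=1}^n J(R_i) \leq J(R_n)$, while the reverse inequality holds because $J(R_n)$ is itself a member of that set; the claim $\bigwedge_{i=1}^n J(R_i) = J(R_1)$ is the dual statement with $J(R_1)$ a lower bound. For (4), given arbitrary $R_1, R_2, \ldots \in \mathcal{M}$, each $R_i \subseteq \bigcup_j R_j$ gives $J(R_i) \leq J(\bigcup_j R_j)$ by (2), so $J(\bigcup_j R_j)$ dominates the join $\bigvee_i J(R_i)$; dually, $\bigcap_j R_j \subseteq R_i$ gives $J(\bigcap_j R_j) \leq J(R_i)$ for every $i$, so $J(\bigcap_j R_j)$ lies below the meet $\bigwedge_i J(R_i)$. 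The join and meet exist since $\mathcal{L}_{\mathcal{H}}$ is a complete lattice.

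There is no serious obstacle here; the only point requiring a little care is the step in (2) that converts the operator inequality delivered by sub-additivity into an inequality in the projection lattice, which rests on the standard facts that an orthogonal sum of projections is a projection and that the ambient operator order restricts to the lattice order on $\mathcal{L}_{\mathcal{H}}$. Everything else is a routine manipulation of the gPVM axioms and of least upper/greatest lower bounds.
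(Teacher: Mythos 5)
Your proposal is correct and follows essentially the same route as the paper: orthogonality to $J(\Omega)=I$ for (1), sub-additivity applied to the decomposition $S = R \cup (S\setminus R)$ for (2), and then order-theoretic deductions from monotonicity for (3) and (4). The only difference is that you spell out the (correct, standard) step identifying the operator order with the lattice order on projections, which the paper leaves implicit.
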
 
\begin{proof} 
Regarding property 1, we clearly have ${ \Omega \cap \emptyset = \emptyset }$, so property 2 of gPVMs gives that ${ J(\emptyset) \bot J(\Omega) = I }$, from which it follows that ${J(\emptyset) = 0 }$.   

Next, if $R \subseteq S$, then $R \cap (S - R) = \emptyset$, so that by sub-additivity (i.e.~property 3 of gPVMs) we have
\begin{equation}
J(R) \leq J(R) + J(S-R) \leq J(R \cup (S-R)) = J(S),
\end{equation} 
which shows that property 2 holds.

To see that property 3 holds, let $R_i \in \mathcal{M}$ for $i \in \{ 1, \ldots, n \}$ be such that ${ R_1 \subseteq \ldots \subseteq R_n }$, and note that $J(R_i) \leq \bigvee_{i=1}^{n} J(R_i)$ for each $i \in \{ 1, \ldots, n \}$ --- in particular, $J(R_n) \leq \bigvee_{i=1}^{n} J(R_i)$.  Now, since $R_i \subseteq \bigcup_{i = 1}^{n} R_i$ for each $i \in \{ 1, \ldots, n \}$, property 2 above gives that ${ J(R_i) \leq J(\bigcup_{i = 1}^{n} R_i) = J(R_n) }$ for each $i \in \{ 1, \ldots, n \}$, so that ${ \bigvee_{i=1}^{n} J(R_i) \leq J(\bigcup_{i = 1}^{n} R_i) = J(R_n) }$.  This inequality, along with that above, establishes the first equality in property 3. 
The other expression in property 3 above is obtained in a similar fashion. 

Finally, to see that property 4 holds, let $R_i \in \mathcal{M}$ for $i \in \mathbb{N}$.  Since $R_i \subseteq \bigcup_{i = 1}^{\infty} R_i$ for each $i \in \mathbb{N}$, property 2 above gives that $J(R_i) \leq J(\bigcup_{i = 1}^{\infty} R_i)$ for each $i \in \mathbb{N}$.  Thus, 
\begin{equation}
\bigvee_{i=1}^{\infty} J(R_i) \leq J(\bigcup_{i = 1}^{\infty} R_i).
\end{equation}
The other expression in property 4 above is obtained in a similar fashion.

\end{proof}

The following characterization of gPVMs will also prove useful.

\begin{lemmy}
\label{lem:char}
Let $(\Omega, \mathcal{M})$ be a measurable space, and let $J:\mathcal{M} \to L_{\mathcal{H}}$ satisfy
\begin{enumerate}[(1)]
\item $J(\Omega) = I$;
\item $J(R) \perp J(S)$ for all disjoint $R,S \in \mathcal{M}$.
\end{enumerate}
Then $J$ is a gPVM iff 
\begin{equation}
\label{eq:property3}
J(R) \leq J(S) \quad \textrm{for all} \quad R,S \in \mathcal{M} \quad \textrm{with} \quad R \subseteq S. 
\end{equation}
\end{lemmy}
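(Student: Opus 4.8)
The plan is to prove both directions of the equivalence, the forward direction being immediate from Lemma~\ref{lem:gpvm_props} and the reverse direction requiring the real work. For the forward implication, suppose $J$ is a gPVM; then property~2 of Lemma~\ref{lem:gpvm_props} is exactly the statement in~(\ref{eq:property3}), so nothing needs to be done. For the reverse implication, we assume $J$ satisfies properties~(1) and~(2) of the present lemma together with monotonicity~(\ref{eq:property3}), and we must establish countable sub-additivity, i.e.\ property~(3) of a gPVM: for pairwise disjoint $R_1, R_2, \ldots \in \mathcal{M}$, that $\sum_{i=1}^\infty J(R_i) \leq J\big(\bigcup_{i=1}^\infty R_i\big)$.

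The first observation is that since the $R_i$ are pairwise disjoint, property~(2) gives $J(R_i) \perp J(R_j)$ for $i \neq j$, so the operators $\{J(R_i)\}$ are pairwise orthogonal projections; hence the partial sums $\sum_{i=1}^n J(R_i) = \bigvee_{i=1}^n J(R_i)$ are themselves projections and the infinite sum $\sum_{i=1}^\infty J(R_i) = \bigvee_{i=1}^\infty J(R_i)$ converges (in the strong operator topology) to a projection in $\mathcal{L}_{\mathcal{H}}$. So the real content is the inequality $\bigvee_{i=1}^\infty J(R_i) \leq J\big(\bigcup_{i=1}^\infty R_i\big)$. But this now follows by the same argument as property~4 of Lemma~\ref{lem:gpvm_props}: for each fixed $i$ we have $R_i \subseteq \bigcup_{j=1}^\infty R_j$, so monotonicity~(\ref{eq:property3}) gives $J(R_i) \leq J\big(\bigcup_{j=1}^\infty R_j\big)$; since $J\big(\bigcup_{j=1}^\infty R_j\big)$ is then an upper bound in $\mathcal{L}_{\mathcal{H}}$ for the family $\{J(R_i)\}$, it dominates their least upper bound, giving $\bigvee_{i=1}^\infty J(R_i) \leq J\big(\bigcup_{j=1}^\infty R_j\big)$, which is property~(3).

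The only subtlety worth flagging is that one should not quote Lemma~\ref{lem:gpvm_props} here, since that lemma presupposes $J$ is already a gPVM, which is precisely what we are trying to conclude; instead one re-runs the short monotonicity argument using only the hypothesis~(\ref{eq:property3}). I expect no genuine obstacle: the statement is essentially the observation that, in the presence of properties~(1) and~(2), countable sub-additivity is equivalent to plain monotonicity under inclusion, because disjointness plus orthogonality already forces the left-hand side to be the lattice-theoretic join rather than a na\"ive operator sum, and the join is governed entirely by the order relation. The proof is therefore two or three lines in each direction.
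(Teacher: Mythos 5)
Your proof is correct and follows essentially the same route as the paper: the forward direction cites monotonicity of gPVMs, and the reverse direction uses orthogonality of the $J(R_i)$ to identify the sum with the join and then bounds the join by $J\big(\bigcup_i R_i\big)$ via the assumed monotonicity. Your remark about not circularly invoking Lemma~\ref{lem:gpvm_props} in the reverse direction is a fair point of care, and the paper's argument indeed runs the monotonicity step directly from equation~(\ref{eq:property3}) rather than from that lemma.
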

\begin{proof}
If $J$ is a gPVM, then by (2) in lemma \ref{lem:gpvm_props}, equation (\ref{eq:property3}) is satisfied.  Conversely, if we have $R_1, R_2, \ldots \in \mathcal{M}$ pairwise disjoint, then $J(R_i) \leq J \big( \bigcup_{i=1}^\infty R_i \big)$ for all $i \in \mathbb{N}$, since $J$ satisfies equation (\ref{eq:property3}), and hence we have
\begin{equation}
\sum_{i=1}^\infty J(R_i) = \bigvee_{i=1}^\infty J(R_i) \leq J \Big( \bigcup_{1=1}^\infty R_i \Big)
\end{equation}
where the first equality is due to the fact that the $R_i$'s are pairwise disjoint.
\end{proof}

Given two gPVMs $A$ and $B$ on a measurable space $(\Omega,\mathcal{M})$, we define their joint observable $J_{AB}:\mathcal{M}^2 \to \mathcal{L}_{\mathcal{H}}$ by 
\begin{equation}
\label{def:joint_dist}
J_{AB}(Q) := \hspace{-10pt} \bigvee_{\substack{R_1 \times R_2 \subseteq Q \\ R_1,R_2 \in \mathcal{M}}} \hspace{-10pt} A(R_1) \wedge B(R_2) \quad \forall Q \in \mathcal{M} , 
\end{equation}
where $\mathcal{M}^2$ denotes the product $\sigma$-algebra of $\mathcal{M}$ with itself, i.e.~the smallest $\sigma$-algebra over $\Omega^2$ which contains the Cartesian product $\mathcal{M} \times \mathcal{M}$.
As can easily be seen in the case where $A,B$ are PVMs corresponding to non-commuting observables, $J_{AB}$ so defined is not in general a PVM, but only a gPVM.

\begin{thrm}
\label{thm:joint_pvim}
Given any two gPVMs $A$ and $B$ on a measurable space ($\Omega,\mathcal{M}$), $J_{AB}$ defined above is a gPVM on $\Omega^2$.
\end{thrm}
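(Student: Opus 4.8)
The plan is to verify the three defining properties of a gPVM for $J_{AB}$, but to streamline the argument I would first reduce property (3) (countable sub-additivity) to monotonicity via Lemma \ref{lem:char}. Since $\mathcal{L}_{\mathcal{H}}$ is a complete lattice, the join appearing in (\ref{def:joint_dist}) always exists and lies in $\mathcal{L}_{\mathcal{H}}$, so $J_{AB}$ is at least a well-defined map $\mathcal{M}^2 \to \mathcal{L}_{\mathcal{H}}$; it then suffices to check that $J_{AB}(\Omega^2) = I$, that $J_{AB}$ is orthogonal-valued on disjoint sets, and that $Q \subseteq Q'$ implies $J_{AB}(Q) \leq J_{AB}(Q')$.

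Monotonicity is immediate: if $Q \subseteq Q'$, then every rectangle $R_1 \times R_2 \subseteq Q$ (with $R_1,R_2 \in \mathcal{M}$) also satisfies $R_1 \times R_2 \subseteq Q'$, so the join defining $J_{AB}(Q)$ is taken over a subset of the index family for $J_{AB}(Q')$, whence $J_{AB}(Q) \leq J_{AB}(Q')$. The normalization $J_{AB}(\Omega^2) = I$ follows because the rectangle $\Omega \times \Omega \subseteq \Omega^2$ contributes the term $A(\Omega) \wedge B(\Omega) = I \wedge I = I$ to the join, and $I$ is the top element of $\mathcal{L}_{\mathcal{H}}$, so $J_{AB}(\Omega^2) \geq I$ and hence equals $I$.

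The substantive step is orthogonality, and the key observation I would use is that for any two families of projections, $\bigvee_i P_i$ is orthogonal to $\bigvee_j P'_j$ as soon as $P_i \perp P'_j$ for every pair $i,j$ — this holds because orthogonality of ranges is preserved under taking linear spans and closures, and $\bigvee$ is the projection onto the closed span of the ranges. Thus, given disjoint $Q, Q' \in \mathcal{M}^2$, it suffices to show that $A(R_1) \wedge B(R_2)$ is orthogonal to $A(S_1) \wedge B(S_2)$ whenever $R_1 \times R_2 \subseteq Q$ and $S_1 \times S_2 \subseteq Q'$. Since these rectangles lie in disjoint sets they are disjoint, and because $(R_1 \times R_2) \cap (S_1 \times S_2) = (R_1 \cap S_1) \times (R_2 \cap S_2)$, we must have $R_1 \cap S_1 = \emptyset$ or $R_2 \cap S_2 = \emptyset$. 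In the first case, property (2) of the gPVM $A$ gives $A(R_1) \perp A(S_1)$; since $A(R_1) \wedge B(R_2) \leq A(R_1)$ and $A(S_1) \wedge B(S_2) \leq A(S_1)$, and projections dominated by orthogonal projections are themselves orthogonal, the two meets are orthogonal. The case $R_2 \cap S_2 = \emptyset$ is handled symmetrically using the gPVM $B$. Invoking Lemma \ref{lem:char} then yields countable sub-additivity and completes the proof.

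I do not anticipate a serious obstacle — the whole argument is elementary lattice and Hilbert-space bookkeeping — but the one point that genuinely requires care is the claim that a join of projections is orthogonal to another join of projections exactly when all cross terms are orthogonal; this must be argued at the level of closed subspaces rather than by formal operator manipulation, and it is precisely what makes the otherwise routine disjointness case analysis go through.
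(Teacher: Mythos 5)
Your proof is correct and follows essentially the same route as the paper's: verify normalization via the rectangle $\Omega \times \Omega$, establish orthogonality term-by-term over the rectangles in the two joins, check monotonicity by inclusion of index families, and conclude via Lemma \ref{lem:char}. The only difference is that you spell out the step the paper leaves implicit --- that disjointness of $R_1 \times R_2$ and $S_1 \times S_2$ forces $R_1 \cap S_1 = \emptyset$ or $R_2 \cap S_2 = \emptyset$, whence the corresponding meets are orthogonal --- which is a welcome clarification but not a different argument.
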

\begin{proof}
First, we have that 
\begin{equation}
J_{AB}(\Omega^2) = \hspace{-15pt} \bigvee_{\substack{R_1 \times R_2 \subseteq \Omega^2 \\ R_1,R_2 \in \mathcal{M}}} \hspace{-15pt} A(R_1) \wedge B(R_2) \geq A(\Omega) \wedge B(\Omega) = I.
\end{equation}
Next, we note that if $Q \cap R = \emptyset$, then clearly $Q_0 \cap R_0 = \emptyset$ for all $Q_0 \subseteq Q$ and $R_0 \subseteq R$.  Hence we have
\begin{equation}
\label{eq:perp}
A(Q_1) \wedge B(Q_2) \perp A(R_1) \wedge B(R_2)
\end{equation}
for all $Q_1 \times Q_2 \subseteq Q$ and $R_1 \times R_2 \subseteq R$.  Taking the join over each side of expression (\ref{eq:perp}) gives $J_{AB}(Q) \perp J_{AB}(R)$.  We also have that equation (\ref{eq:property3}) is satisfied since when $Q \subseteq R$, we have that any $R_1 \times R_2 \subseteq Q$ also satisfies $R_1 \times R_2 \subseteq R$.  Hence $J_{AB}$ is a gPVM by Lemma \ref{lem:char}.
\end{proof}

\subsection{Uncertainty Relation}

We begin by extending our definition of $M_A^{\ket{\psi}}$ (equation (\ref{mmtoutcomeset}) in section \ref{sec:charJ_AB}) for a given diagonalizable PVM $A$ and state $\ket{\psi} \in \mathcal{H}$ --- namely 
\begin{equation}
M_A^{\ket{\psi}} := \{ \lambda \in \mathbb{R} \ : \ A(\{\lambda\}) \ket{\psi} \neq 0 \} .
\end{equation}
Note that this agrees with our previous definition for finite-dimensional $\mathcal{H}$.  This allows us to state and prove rigorously a generalization of our earlier result (i.e.~expression (\ref{ourUncertaintyREL})) concerning the uncertainty relation.

\begin{thrm}
\label{thm:uncertaintyREL}
Let $A,B$ be diagonalizable PVMs, let ${ \ket{\psi} \in \mathcal{H} }$ with ${ \langle \psi \ket{\psi} = 1 }$, and let ${ L_A,L_B \subseteq \mathbb{R} }$ be intervals with lengths $l_A$ and $l_B$, respectively, such that ${ J_{AB}(L_A \times L_B) \ket{\psi} = \ket{\psi} }$.  Then, 
\begin{equation}
l_Al_B \geq 2 \Delta A \Delta B \geq |\langle [A,B] \rangle | .
\end{equation}   
\end{thrm}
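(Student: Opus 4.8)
The plan is to reduce the theorem to two elementary facts: that $J_{AB}$ evaluated on a \emph{rectangle} collapses to the meet of the two marginal projectors, and that a real random variable confined to an interval of length $l$ has standard deviation at most $l/2$ (Popoviciu's inequality). The rightmost inequality $2\Delta A \Delta B \ge |\langle [A,B]\rangle|$ is nothing but the Robertson relation \cite{Robertson_29}, so the only real content is in proving $l_A l_B \ge 2\Delta A \Delta B$.

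First I would establish that $J_{AB}(L_A \times L_B) = A(L_A) \wedge B(L_B)$. Every Borel rectangle $R_1 \times R_2 \subseteq L_A \times L_B$ (with $R_2 \neq \emptyset$) satisfies $R_1 \subseteq L_A$ and $R_2 \subseteq L_B$, so by the defining properties of a PVM $A(R_1) \le A(L_A)$ and $B(R_2) \le B(L_B)$, whence $A(R_1) \wedge B(R_2) \le A(L_A) \wedge B(L_B)$; taking the join over all such rectangles, and noting that $L_A \times L_B$ is itself among them since intervals are Borel, gives the claimed equality. Consequently the hypothesis $J_{AB}(L_A \times L_B)\ket{\psi} = \ket{\psi}$ forces $\ket{\psi}$ into the range of $A(L_A) \wedge B(L_B)$, hence into $\mathrm{ran}\,A(L_A) \cap \mathrm{ran}\,B(L_B)$, so that $A(L_A)\ket{\psi} = \ket{\psi}$ and $B(L_B)\ket{\psi} = \ket{\psi}$.

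Then I would pass to the spectral probability measure $\mu$ on $(\mathbb{R},\mathcal{B}(\mathbb{R}))$ defined by $\mu(S) := \bra{\psi} A(S) \ket{\psi}$ --- a genuine probability measure because $\braket{\psi}{\psi} = 1$ and $A$ is a PVM --- which by the previous step satisfies $\mu(L_A) = 1$. Writing $c$ for the midpoint of $L_A$ and using that the mean minimizes mean-squared deviation, $\Delta A^2 = \int (\lambda - \langle A\rangle)^2\, d\mu(\lambda) \le \int (\lambda - c)^2\, d\mu(\lambda) \le (l_A/2)^2$, so $\Delta A \le l_A/2$; the symmetric argument gives $\Delta B \le l_B/2$. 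Multiplying yields $\Delta A \Delta B \le l_A l_B / 4 \le l_A l_B / 2$, and chaining with Robertson finishes the proof. I do not expect a serious obstacle; the step to handle with care is the lattice identity in the second paragraph, since when $A(L_A)$ and $B(L_B)$ do not commute their meet is the projector onto the \emph{intersection of ranges} rather than an operator product, so the argument must proceed entirely through the order $\le$ on $\mathcal{L}_{\mathcal{H}}$ and through range containments rather than by manipulating $A(L_A)B(L_B)$.
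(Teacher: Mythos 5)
Your proposal is correct, and its overall structure matches the paper's: both arguments first collapse $J_{AB}(L_A\times L_B)$ to $A(L_A)\wedge B(L_B)$ (so that $\ket{\psi}$ lies in the ranges of both marginal projectors), then bound each standard deviation by the corresponding interval length, and finish with the Robertson relation. The one genuine difference is the variance step: the paper expands $\ket{\psi}$ in an eigenbasis and uses the identity $2(\Delta A)^2=\sum_{ij}|\alpha_i|^2|\alpha_j|^2(a_i-a_j)^2\le l_A^2$, giving $\Delta A\le l_A/\sqrt{2}$, whereas you pass to the spectral probability measure and invoke the Popoviciu-type bound $\Delta A\le l_A/2$. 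Your route is slightly cleaner (no doubled sum) and sharper --- it actually yields $l_Al_B\ge 4\,\Delta A\,\Delta B$, of which the stated inequality is a weakening --- while the paper's eigenbasis computation stays entirely within the diagonalizability hypothesis without importing measure-theoretic language. Your care with the degenerate rectangles ($R_1$ or $R_2$ empty, whose contribution to the join is $0$) and with treating $A(L_A)\wedge B(L_B)$ as a projector onto an intersection of ranges rather than a product is exactly right.
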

\begin{proof}
First, we have 
$M_A^{\ket{\psi}} \subseteq L_A$ and $M_B^{\ket{\psi}} \subseteq L_B$, since $J_{AB}(L_A \times L_B) = A(L_A) \wedge B(L_B)$.  Then since $A$ and $B$ are diagonalizable, we can expand
\begin{equation}
\ket{\psi} = \sum_i \alpha_i \ket{a_i} = \sum_j \beta_j \ket{b_j}
\end{equation}
where $A \ket{a_i} = a_i \ket{a_i}$ and $B \ket{b_j} = b_j \ket{b_j}$.  Then 
\begin{equation} 
\langle A \rangle^2 = \Big( \sum_i |\alpha_i|^2 a_i \Big)^2 = \sum_{ij} | \alpha_i |^2 |\alpha_j|^2 a_i a_j
\end{equation}
and 
\begin{equation}
\langle A^2 \rangle = \sum_i |\alpha_i|^2 a_i^2 = \sum_{ij} |\alpha_i|^2 |\alpha_j|^2 a_i^2,
\end{equation}
since $\sum_i |\alpha_i|^2 = 1$.  Hence, we have that
\begin{align}
2 (\Delta A)^2 & = \sum_{ij} |\alpha_i|^2 |\alpha_j|^2 (a_i^2 + a_j^2 - 2a_i a_j) \nonumber \\
& = \sum_{ij} |\alpha_i|^2 |\alpha_j|^2 |a_i - a_j|^2 \nonumber \\
& \leq | \sup M_A^{\ket{\psi}} - \inf M_A^{\ket{\psi}}|^2 \leq l_A^2 ,
\end{align}
since $\alpha_i = 0$ unless $a_i \in  M_A^{\ket{\psi}}$.  Similarly, we have that $2 (\Delta B)^2 \leq l_B^2$, and putting these results together yields $l_Al_B \geq 2 \Delta A \Delta B \geq | \langle [A,B] \rangle |$, where the second inequality is just the Robertson uncertainty relation.
\end{proof}

\subsection{Coarse-graining}


\begin{lemmy}
\label{lem:partition_sums}
Let $J$ be a gPVM on the measurable space $(\Omega,\mathcal{M})$, and let $E_i \in \mathcal{M}$ (for each $i \in \mathbb{N}$), satisfy $E_i \cap E_j = \emptyset$ whenever $i \neq j$, and also
\begin{equation}
\label{eq:is_I-0}
\bigvee_{i=1}^\infty J(E_i) = I.
\end{equation}
Then for any subset $S \subseteq \mathbb{N}$, we have that 
\begin{equation}
\label{eq:partition_sums}
\bigvee_{i \in S} J(E_i) = J \Big( \bigcup_{i \in S} E_i \Big).
\end{equation}
\end{lemmy}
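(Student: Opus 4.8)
The plan is to establish the two inequalities
$\bigvee_{i\in S}J(E_i)\le J\big(\bigcup_{i\in S}E_i\big)$ and $J\big(\bigcup_{i\in S}E_i\big)\le\bigvee_{i\in S}J(E_i)$ separately, writing $P:=\bigvee_{i\in S}J(E_i)$, $Q:=\bigvee_{i\in\mathbb{N}\setminus S}J(E_i)$, $E:=\bigcup_{i\in S}E_i\in\mathcal{M}$, and $F:=\bigcup_{i\in\mathbb{N}\setminus S}E_i\in\mathcal{M}$. The first (``easy'') inequality is immediate from countable sub-additivity: since the $E_i$ are pairwise disjoint, property (2) of gPVMs makes the $J(E_i)$ pairwise orthogonal, so $\sum_{i\in S}J(E_i)=\bigvee_{i\in S}J(E_i)=P$, and property (3) then gives $P=\sum_{i\in S}J(E_i)\le J(E)$. (If $S$ is finite one pads the index set with copies of $\emptyset$ and uses $J(\emptyset)=0$ from Lemma \ref{lem:gpvm_props}.)

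For the reverse inequality I would first observe that $P$ and $Q$ are orthogonal projections: each $J(E_i)$ with $i\in S$ is orthogonal to each $J(E_j)$ with $j\notin S$ (again property (2), since such $E_i$ and $E_j$ are disjoint), so the ranges of $P$ and $Q$ are orthogonal. Hence $P\vee Q=P+Q$, and since $P\vee Q=\bigvee_{i=1}^{\infty}J(E_i)=I$ by hypothesis (\ref{eq:is_I-0}), we conclude $Q=I-P=P^{\perp}$. Now $E\cap F=\emptyset$ because the $E_i$ are pairwise disjoint, so $J(E)\perp J(F)$ by property (2), i.e.\ $J(E)\le I-J(F)$. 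On the other hand $Q=\sum_{i\notin S}J(E_i)\le J(F)$ by sub-additivity, whence $I-J(F)\le I-Q=P$. Chaining these gives $J(E)\le P$, which is exactly the reverse inequality; combining the two yields $J\big(\bigcup_{i\in S}E_i\big)=\bigvee_{i\in S}J(E_i)$.

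The only place any real thought is required is recognizing that the hypothesis (\ref{eq:is_I-0}) must enter precisely through the complementation step $Q=P^{\perp}$: sub-additivity by itself delivers only ``$\le$'', and the reverse genuinely fails for a non-PVM gPVM such as $J_{AB}$ with $[A,B]\neq0$, so the total-mass condition is exactly what is needed to upgrade the sub-additivity inequality to an equality on any sub-union. Everything else is routine bookkeeping with the elementary lattice facts recalled in Section \ref{sec:kapPVMs} ($\bigvee=\sum$ for pairwise orthogonal projectors, $P^{\perp}=I-P$) together with Lemma \ref{lem:gpvm_props}.
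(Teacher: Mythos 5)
Your proof is correct, and it rests on exactly the same mechanism as the paper's: the hypothesis $\bigvee_{i=1}^{\infty}J(E_i)=I$ enters only through the complementary family $\{E_i\}_{i\notin S}$, whose images are squeezed between $\bigvee_{i\notin S}J(E_i)$ and $J\big(\bigcup_{i\notin S}E_i\big)$ to force the reverse inequality. The paper phrases this as a proof by contradiction with a witness vector $\ket{\psi}$ in the range of $J\big(\bigcup_{i\in S}E_i\big)-\bigvee_{i\in S}J(E_i)$, whereas you run the same complementation directly as the operator-inequality chain $J(E)\leq I-J(F)\leq P$; the two are contrapositive rephrasings of one another.
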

\begin{proof}

We prove this by contradiction, so assume that equation (\ref{eq:partition_sums}) does not hold.  Since $J$ is a gPVM (so that, in particular, property 2 of Lemma \ref{lem:gpvm_props} holds), this means that
\begin{equation}
\label{eq:partition_less}
\bigvee_{i \in S} J(E_i) < J \Big( \bigcup_{i \in S} E_i \Big),
\end{equation}
and hence there must be some non-zero $\ket{\psi} \in \mathcal{H}$ with $J\big( \bigcup_{i \in S} E_i \big) \ket{\psi} = \ket{\psi}$ but $\bigvee_{i \in S} J(E_i) \ket{\psi} = 0$.  This means that $J(E_i) \ket{\psi} = 0$ for each $i \in S$.  But then, by equation (\ref{eq:is_I-0}) and the fact that the $E_i$'s are disjoint (so that the least upper bound is just the sum), we must have that 
\begin{equation}
\bigvee_{i \in S^c} J(E_i) \ket{\psi} = \ket{\psi},
\end{equation}
and hence, since $J$ is a gPVM, we have (by property 2 in Lemma \ref{lem:gpvm_props}) that 
\begin{equation}
J \Big( \bigcup_{i \in S^c} E_i \Big) \ket{\psi} = \ket{\psi}.
\end{equation}
But since $J$ is a gPVM, this leads to a contradiction, since  
\begin{equation}
J \Big( \bigcup_{i \in S^c} E_i \Big) \perp J \Big( \bigcup_{i \in S} E_i \Big)  
\end{equation}
implies that $\langle \psi \ket{\psi} = 0$, contradicting the fact that $\ket{\psi}$ was non-zero.
\end{proof}

\begin{lemmy}
\label{lem:partition_sets_sums}
Let $J$ be a gPVM on the measurable space $(\Omega,\mathcal{M})$, and let $E_i \in \mathcal{M}$ (for each $i \in \mathbb{N}$), satisfy $E_i \cap E_j = \emptyset$ whenever $i \neq j$, and also
\begin{equation}
\label{eq:is_I-1}
\bigvee_{i=1}^\infty J(E_i) = I.
\end{equation}
Then for any collection $\{ Q_i \}_{i=1}^{\infty}$ with $Q_i \in \mathcal{M}$ satisfying either 
\begin{equation}
\label{eq:q_prop}
Q_i \cap E_j = E_j \ \textrm{or}  \ Q_i \cap E_j = \emptyset
\end{equation}
for each $i,j \in \mathbb{N}$, we have that
\begin{equation}
\label{eq:sets_sum}
\bigvee_{i=1}^\infty J(Q_i) = J \Big( \bigcup_{i=1}^\infty Q_i \Big).
\end{equation}
\end{lemmy}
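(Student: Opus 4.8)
\emph{Proof proposal.} The inequality $\bigvee_{i=1}^{\infty} J(Q_i) \leq J\big(\bigcup_{i=1}^{\infty} Q_i\big)$ is immediate from monotonicity (property 2 of Lemma~\ref{lem:gpvm_props}), since $Q_i \subseteq \bigcup_{k} Q_k$ for every $i$; so all the content is in the reverse inequality. I would argue this at the level of subspaces: writing $V := \bigvee_{i=1}^{\infty} J(Q_i)\mathcal{H}$ for the closed subspace onto which $\bigvee_i J(Q_i)$ projects, it suffices to show that every $\ket{\psi} \in \mathcal{H}$ with $J\big(\bigcup_i Q_i\big)\ket{\psi} = \ket{\psi}$ lies in $V$. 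The key device is to expand such a $\ket{\psi}$ along the partition $\{E_j\}$: the $J(E_j)$ are pairwise orthogonal projections (property 2 of gPVMs) whose join is $I$, hence whose (strongly convergent) sum is $I$, so $\ket{\psi} = \sum_{j=1}^{\infty} J(E_j)\ket{\psi}$, an orthogonal expansion with $\sum_j \|J(E_j)\ket{\psi}\|^2 = \|\ket{\psi}\|^2$.

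Next I would classify the terms $J(E_j)\ket{\psi}$ by the position of $E_j$ relative to the $Q_i$. The hypothesis that $Q_i \cap E_j$ is either $E_j$ or $\emptyset$ for all $i,j$ makes each index $j$ of exactly one of two types. If $Q_i \cap E_j = \emptyset$ for \emph{every} $i$, then $E_j$ is disjoint from $\bigcup_i Q_i$, so $J(E_j) \perp J\big(\bigcup_i Q_i\big)$, i.e.\ $J(E_j)\,J\big(\bigcup_i Q_i\big) = 0$, and therefore $J(E_j)\ket{\psi} = J(E_j)\big(J\big(\bigcup_i Q_i\big)\ket{\psi}\big) = 0$. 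Otherwise $Q_i \cap E_j = E_j$ for some $i$, i.e.\ $E_j \subseteq Q_i$, so monotonicity gives $J(E_j) \leq J(Q_i)$ and hence $J(E_j)\ket{\psi} \in J(Q_i)\mathcal{H} \subseteq V$. Thus every nonzero term of the orthogonal series $\ket{\psi} = \sum_j J(E_j)\ket{\psi}$ lies in $V$; its partial sums therefore lie in $V$, and since the series converges and $V$ is closed we conclude $\ket{\psi} \in V$. This yields $J\big(\bigcup_i Q_i\big) \leq \bigvee_i J(Q_i)$ and, together with the opposite inequality, the desired equality~(\ref{eq:sets_sum}).

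The step I expect to require the most care is the elementary bookkeeping that makes the dichotomy above both exhaustive and clean --- one must genuinely use that $Q_i \cap E_j \in \{E_j, \emptyset\}$ (not merely that the $E_j$ partition something) to exclude any "partial overlap" case. One should also note that the $Q_i$, and hence $\bigcup_i Q_i$, may carry extra mass on $\Omega \setminus \bigcup_j E_j$; this is harmless, because that region is disjoint from $\bigcup_j E_j$, so $J\big(\Omega \setminus \bigcup_j E_j\big) \perp J\big(\bigcup_j E_j\big) = I$ forces $J\big(\Omega \setminus \bigcup_j E_j\big) = 0$, and it contributes nothing to the expansion of $\ket{\psi}$. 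An alternative organization would avoid vectors by using Lemma~\ref{lem:partition_sums} to rewrite each $J(Q_i)$, and also $J\big(\bigcup_i Q_i\big)$, as $J$ of a union of $E_j$'s and then as the join of the corresponding $J(E_j)$'s, finally comparing joins; but the expansion argument above is more transparent, and that is the one I would write out.
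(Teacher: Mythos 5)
Your proof is correct, and it takes a genuinely different route from the paper's. The paper stays entirely inside the lattice formalism: it adjoins $E_0 = \big(\bigcup_i E_i\big)^c$ (noting $J(E_0)=0$), builds for each $j$ the family $\mathcal{E}_j$ of cells contained in $Q_j$, and invokes Lemma~\ref{lem:partition_sums} twice to convert $J(Q_j)$ and $J\big(\bigcup_j Q_j\big)$ into joins of the $J(E_i)$'s, after which the two sides are compared join-by-join. You instead work at the level of vectors: you expand any $\ket{\psi}$ in the range of $J\big(\bigcup_i Q_i\big)$ as the strongly convergent orthogonal sum $\sum_j J(E_j)\ket{\psi}$ (legitimate, since the $J(E_j)$ are pairwise orthogonal with join $I$, and for pairwise orthogonal projectors the join is the strong sum, as recorded in the background section), then show each term is either zero or lands in $\bigvee_i J(Q_i)\mathcal{H}$, and close up using closedness of that subspace. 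Your dichotomy on $j$ is exhaustive precisely because of hypothesis~(\ref{eq:q_prop}), and both branches are handled correctly ($E_j\subseteq Q_i$ gives $J(E_j)\le J(Q_i)$ by monotonicity; $E_j$ disjoint from every $Q_i$ gives $J(E_j)\perp J\big(\bigcup_i Q_i\big)$, killing that term since $\ket{\psi}=J\big(\bigcup_i Q_i\big)\ket{\psi}$). What your approach buys is self-containment --- it does not use Lemma~\ref{lem:partition_sums} at all, in effect re-deriving its content on the fly --- and a more transparent picture of \emph{why} the equality holds. What the paper's approach buys is reusability of Lemma~\ref{lem:partition_sums} and an argument phrased purely in terms of the order structure on $\mathcal{L}_{\mathcal{H}}$, with no appeal to strong convergence of operator sums. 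Your closing remark that $J\big(\Omega\setminus\bigcup_j E_j\big)=0$ is correct but not actually needed in your argument, since the expansion along $\{E_j\}_{j\ge 1}$ already recovers all of $\ket{\psi}$.
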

\begin{proof}
First, since $J$ is a gPVM, by property 2 in Lemma \ref{lem:gpvm_props} we have 
\begin{equation}
\label{eq:ineq1}
\bigvee_{i=1}^\infty J(Q_i) \leq J \Big( \bigcup_{i=1}^\infty Q_i \Big).
\end{equation}
It remains to show the other inequality.  Define 
\begin{equation}
E_0 = \Big( \bigcup_{i=1}^\infty E_i \Big)^c,
\end{equation}
and note that $E_0 \in \mathcal{M}$, as well as that $\{E_i\}_{i=0}^\infty$ is a partition of $\Omega$, and also that $J(E_0) = 0$.  Next, for $j \in \mathbb{N}$, define 
\begin{equation}
\mathcal{E}_j := \{ E_i \, : \, E_i \subseteq Q_j, \ i \in \mathbb{N} \} \cup \{ E_0\},
\end{equation}
and note that $Q_j \subseteq \bigcup \mathcal{E}_j$ by equation (\ref{eq:q_prop}), as well as that (recalling that $\bigcup X := \bigcup_{Z \in X} Z $ for any set $X$ whose elements $Z$ are, themselves, sets) 
\begin{align}
\bigvee_{E \in \mathcal{E}_j} J(E) &  =  \bigvee_{ \substack{ E \in \mathcal{E}_j \\ E \neq E_0 }} J(E) \leq J(Q_j)   \nonumber \\ 
& \qquad \leq J \Big( \bigcup \mathcal{E}_j \Big) = \bigvee_{E \in \mathcal{E}_j} J(E).
\end{align}
The first equality in the above expression holds since $J(E_0) = 0$, and the following inequalities follow from the fact that if $E_i \in \mathcal{E}_j$ with $i \neq 0$, then $E_i \subseteq Q_j$ and also that $Q_j \subseteq \bigcup \mathcal{E}_j$ (along with the fact that $J$ is a gPVM); the final equality then follows from Lemma \ref{lem:partition_sums}.  Hence, we have
\begin{equation}
J(Q_j) = \bigvee_{E \in \mathcal{E}_j} J(E). 
\end{equation}
Further, define $\mathcal{E} = \bigcup_{j=1}^\infty \mathcal{E}_j$.  Then (again since $J$ is a gPVM), we have
\begin{align}
J \Big( \bigcup_{j=1}^\infty Q_i \Big) & \leq  J \Big( \bigcup \mathcal{E} \Big) = \bigvee_{E \in \mathcal{E}} J(E) \nonumber \\
& \quad = \bigvee_{j=1}^\infty \Big( \bigvee_{E \in \mathcal{E}_j} J(E) \Big) = \bigvee_{j=1}^\infty J(Q_j), \label{eq:ineq2} 
\end{align}
where we have again used Lemma \ref{lem:partition_sums} to arrive at the second equality.  The desired result then follows from the inequalities (\ref{eq:ineq1}) and (\ref{eq:ineq2}).
\end{proof}

\begin{kap}
\label{kap:CGmeasurablespace}
Let $(\Omega,\mathcal{M})$ be a measurable space, and let $\mathcal{P}$ be a partition of $\Omega$.  If $\mathcal{P} \subseteq \mathcal{M}$, we define
\begin{equation}
\label{eq:part_measure}
\mathcal{M}_{\mathcal{P}} := \{ Q \subseteq \mathcal{P} \ : \ \bigcup Q \in \mathcal{M} \}.
\end{equation}
\end{kap}

It is easy to see that $\mathcal{M}_{\mathcal{P}}$ as defined in (\ref{eq:part_measure}) is indeed a (Boolean) $\sigma$-algebra.  We can then naturally define a coarse-graining of any gPVM with respect to this partition --- namely define, for any $\tilde{Q} \in \mathcal{M}_P$,  
\begin{equation}
\label{eq:coarse}
\tilde{J} (\tilde{Q}) := J \Big( \bigcup \tilde{Q} \Big) .
\end{equation}
We then find that our gPVMs behave naturally with respect to this notion of coarse-graining, provided the coarse-graining is ``well-behaved'' with respect to the measurable sets, as illustrated in the following two theorems.

\begin{thrm}
\label{thm:coarse-grain}
Let $J$ be a gPVM on the measurable space $(\Omega, \mathcal{M})$, and let $\mathcal{P} \subseteq \mathcal{M}$ be a partition of $\Omega$.  Then $\tilde{J}$ is a gPVM on the measurable space $(\mathcal{P}, \mathcal{M}_P)$.  If, furthermore, there is some countable subset $\mathcal{P}_0 \subseteq \mathcal{P}$ such that
\begin{equation}
\label{eq:sum_to_I}
\sum_{E \in \mathcal{P}_0} J(E) = I,
\end{equation}
then $\tilde{J}$ is a PVM.
\end{thrm}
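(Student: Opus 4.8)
The plan is to verify the three gPVM axioms for $\tilde{J}$ directly from the definition $\tilde{J}(\tilde{Q}) := J(\bigcup \tilde{Q})$, and then to upgrade to a PVM under the extra hypothesis by appealing to Lemma \ref{lem:partition_sets_sums}. First I would check axiom (1): since $\mathcal{P}$ is a partition of $\Omega$, we have $\bigcup \mathcal{P} = \Omega$, so $\tilde{J}(\mathcal{P}) = J(\Omega) = I$. For axiom (2), if $\tilde{Q}, \tilde{R} \in \mathcal{M}_{\mathcal{P}}$ are disjoint as subsets of $\mathcal{P}$, then because distinct cells of a partition are themselves disjoint, $\bigcup \tilde{Q}$ and $\bigcup \tilde{R}$ are disjoint elements of $\mathcal{M}$; axiom (2) for $J$ then gives $\tilde{J}(\tilde{Q}) \perp \tilde{J}(\tilde{R})$. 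For the easy half of the gPVM requirement I would use Lemma \ref{lem:char}: monotonicity of $\tilde{J}$ (i.e.~$\tilde{Q} \subseteq \tilde{R} \Rightarrow \bigcup \tilde{Q} \subseteq \bigcup \tilde{R} \Rightarrow \tilde{J}(\tilde{Q}) \leq \tilde{J}(\tilde{R})$, using property 2 of Lemma \ref{lem:gpvm_props}) combined with axioms (1) and (2) is exactly what Lemma \ref{lem:char} needs to conclude $\tilde{J}$ is a gPVM on $(\mathcal{P}, \mathcal{M}_{\mathcal{P}})$. One small point to note along the way is that $\mathcal{M}_{\mathcal{P}}$ really is a $\sigma$-algebra (as the paper already remarks), so that these statements typecheck.

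For the second assertion, suppose there is a countable $\mathcal{P}_0 \subseteq \mathcal{P}$ with $\sum_{E \in \mathcal{P}_0} J(E) = I$. Enumerate $\mathcal{P}_0 = \{E_1, E_2, \ldots\}$ (if $\mathcal{P}_0$ is finite, pad with copies of $\emptyset$, or just run the argument with a finite index set). These $E_i$ are pairwise disjoint cells of $\mathcal{P}$ and $\bigvee_{i=1}^\infty J(E_i) = \sum_{i=1}^\infty J(E_i) = I$, so the hypotheses of Lemma \ref{lem:partition_sets_sums} are met. Now take any countable pairwise-disjoint family $\{\tilde{Q}_k\}_{k=1}^\infty$ in $\mathcal{M}_{\mathcal{P}}$; I want to show $\sum_k \tilde{J}(\tilde{Q}_k) = \tilde{J}(\bigcup_k \tilde{Q}_k)$, which (given axiom (2), making the sum a join) is the remaining PVM axiom. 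The key observation is that each $\tilde{Q}_k$, being a subset of the partition $\mathcal{P}$, automatically satisfies the dichotomy $(\bigcup \tilde{Q}_k) \cap E = E$ or $(\bigcup \tilde{Q}_k) \cap E = \emptyset$ for every cell $E \in \mathcal{P}$ — in particular for every $E = E_j \in \mathcal{P}_0$ — precisely because a cell is either entirely in $\tilde{Q}_k$ or entirely out of it. So with $Q_k := \bigcup \tilde{Q}_k \in \mathcal{M}$, the family $\{Q_k\}$ satisfies condition (\ref{eq:q_prop}) of Lemma \ref{lem:partition_sets_sums}, and that lemma yields $\bigvee_k J(Q_k) = J(\bigcup_k Q_k)$. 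Translating back, $\bigvee_k \tilde{J}(\tilde{Q}_k) = \tilde{J}(\bigcup_k \tilde{Q}_k)$, and since the $\tilde{Q}_k$ are disjoint the join is the sum; hence $\tilde{J}$ is countably additive, i.e.~a PVM.

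The main obstacle, I expect, is not in any single step but in correctly identifying which lemma does the work and verifying its hypotheses cleanly: the content of ``$\tilde{J}$ is a PVM'' is really the content of Lemma \ref{lem:partition_sets_sums}, and the only genuine insight needed is that the partition structure of $\mathcal{M}_{\mathcal{P}}$ forces the cell-dichotomy (\ref{eq:q_prop}) for free, so one never has to do any delicate measure-theoretic bookkeeping. A secondary fussy point is the case where $\mathcal{P}_0$ is finite, or where some $\tilde{Q}_k$ is empty or the family $\{\tilde{Q}_k\}$ is finite — these are handled by the convention $J(\emptyset)=0$ (property 1 of Lemma \ref{lem:gpvm_props}) and cause no trouble, but should be acknowledged rather than glossed over.
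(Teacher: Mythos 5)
Your proposal is correct and follows essentially the same route as the paper's own proof: the gPVM axioms are verified directly (with Lemma \ref{lem:char} handling sub-additivity via monotonicity), and the PVM upgrade is obtained by feeding $Q_k := \bigcup \tilde{Q}_k$ into Lemma \ref{lem:partition_sets_sums}, using exactly the cell-dichotomy observation that each $E \in \mathcal{P}_0$ is either wholly inside or wholly outside each $Q_k$. Your extra care about finite $\mathcal{P}_0$ and empty $\tilde{Q}_k$ is harmless and correct, though the paper does not bother with it.
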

\begin{proof}
First we show that $\tilde{J}$ is a gPVM whenever $\mathcal{P}$ is a partition of $\Omega$.  Clearly, $\tilde{J}$ is a map from $\mathcal{M}_{\mathcal{P}}$ to $\mathcal{L}_{\mathcal{H}}$.  Property 1 of gPVMs holds, since
\begin{equation}
\tilde{J}(\mathcal{P}) = J \Big( \bigcup \mathcal{P} \Big) = J(\Omega) = I,
\end{equation}
using that $J$ is a gPVM.  For property 2, consider $\tilde{R},\tilde{S} \subseteq \mathcal{P}$ such that $\tilde{R} \cap \tilde{S} = \emptyset$.  It is easy to see that this implies that $\bigcup \tilde{R} \cap \bigcup \tilde{S} = \emptyset$ (since any two elements of $\mathcal{P}$ are either equal or disjoint as sets), and so we have
\begin{equation}
\tilde{J}(\tilde{R}) = J \Big( \bigcup \tilde{R} \Big) \perp J \Big( \bigcup \tilde{S} \Big) = \tilde{J}( \tilde{S}),
\end{equation}
again using that $J$ is a gPVM.  Finally, we see that equation (\ref{eq:property3}) is satisfies, since for $\tilde{R} \subseteq \tilde{S} \subseteq \mathcal{P}$, we clearly have that $\bigcup \tilde{R} \subseteq \bigcup \tilde{S}$, so that
\begin{equation}
\tilde{J}(\tilde{R}) = J \Big( \bigcup \tilde{R} \Big) \subseteq J\Big( \bigcup \tilde{S} \Big) = \tilde{J}(\tilde{S})
\end{equation} 
since $J$ is a gPVM, and so $\tilde{J}$ is a gPVM by Lemma \ref{lem:char}.  

Next, we assume that there is a countable subset $\mathcal{P}_0 \in \mathcal{M}_{\mathcal{P}}$ such that equation (\ref{eq:sum_to_I}) holds.  Consider some countable set $\{\tilde{Q}_i \}_{i=1}^\infty \subseteq \mathcal{M}_{\mathcal{P}}$.  Defining $Q_i = \bigcup \tilde{Q}_i$  (for all $i \in \mathbb{N}$), note that for each $E \in \mathcal{P}_0$ and each $Q_i$, we clearly have $Q_i \cap E = E$  or $Q_i \cap E = \emptyset$.  Then we see immediately (using Lemma \ref{lem:partition_sets_sums}) that
\begin{equation}
\bigvee_{i=1}^\infty \tilde{J} ( \tilde{Q}_i ) = \bigvee_{i=1}^\infty J(Q_i) = J \Big( \bigcup_{i=1}^\infty Q_i \Big) = \tilde{J} \Big( \bigcup_{i=1}^\infty \tilde{Q}_i \Big),
\end{equation}
which shows that $\tilde{J}$ is a PVM.
\end{proof}

\begin{corol}
If $J$ above is a diagonalizable PVM, then $\tilde{J}$ is a diagonalizable PVM.
\end{corol}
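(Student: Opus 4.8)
The plan is to reduce the statement to the second half of Theorem~\ref{thm:coarse-grain}: once we exhibit a countable $\mathcal{P}_0 \subseteq \mathcal{P}$ with $\sum_{E \in \mathcal{P}_0} J(E) = I$, that theorem already gives that $\tilde{J}$ is a PVM, and it will remain only to produce a Schauder basis of eigenvectors for the operator attached to $\tilde{J}$ and to check boundedness.

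First I would recall the structure theory quoted after the definition of a diagonalizable PVM: $\sigma_p(J)$ is a countable subset of the compact set $\sigma(J)$, and $J(R) = J(R \cap \sigma_p(J))$ for every $R \in \mathcal{B}(\mathbb{R})$. Applying this with $R = \mathbb{R}$ and using countable additivity of the PVM $J$ along the disjoint union $\sigma_p(J) = \bigcup_{\lambda \in \sigma_p(J)} \{\lambda\}$ gives $\sum_{\lambda \in \sigma_p(J)} J(\{\lambda\}) = I$. Now set
\[
\mathcal{P}_0 := \{\, E \in \mathcal{P} \ : \ E \cap \sigma_p(J) \neq \emptyset \,\}.
\]
Since $\mathcal{P}$ partitions $\mathbb{R}$, each $\lambda \in \sigma_p(J)$ lies in exactly one cell, so $\lambda \mapsto (\text{its cell})$ surjects $\sigma_p(J)$ onto $\mathcal{P}_0$; hence $\mathcal{P}_0$ is countable. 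For any $E \in \mathcal{P}$ we have $J(E) = J(E \cap \sigma_p(J)) = \sum_{\lambda \in E \cap \sigma_p(J)} J(\{\lambda\})$, which is $0$ precisely when $E \notin \mathcal{P}_0$. Summing over $E \in \mathcal{P}_0$ and re-indexing the resulting double sum of pairwise orthogonal projections by $\sigma_p(J)$ — legitimate because the cells are disjoint, so each eigenvalue is counted once, and sums of orthogonal projections converge unconditionally — yields $\sum_{E \in \mathcal{P}_0} J(E) = \sum_{\lambda \in \sigma_p(J)} J(\{\lambda\}) = I$. This is exactly hypothesis~(\ref{eq:sum_to_I}) of Theorem~\ref{thm:coarse-grain}, so that theorem gives that $\tilde{J}$ is a PVM.

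It remains to see that $\tilde{J}$ is diagonalizable. Every singleton $\{E\}$ with $E \in \mathcal{P}$ lies in $\mathcal{M}_{\mathcal{P}}$ (because $\mathcal{P} \subseteq \mathcal{M}$), and $\tilde{J}(\{E\}) = J(E)$; by the computation above the projections $\{\tilde{J}(\{E\})\}_{E \in \mathcal{P}_0}$ are pairwise orthogonal (distinct singletons are disjoint) and sum to $I$, while $\tilde{J}(\{E\}) = 0$ for $E \notin \mathcal{P}_0$. Choosing an orthonormal basis of each nonzero subspace $J(E)\mathcal{H}$ out of eigenvectors of $J$ — possible since $J(E)\mathcal{H} = \overline{\bigoplus_{\lambda \in E \cap \sigma_p(J)} J(\{\lambda\})\mathcal{H}}$ — and taking the union over $E \in \mathcal{P}_0$ produces a Schauder (in fact orthonormal) basis of $\mathcal{H}$ consisting of eigenvectors of $\tilde{J}$. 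For boundedness I would attach to $\tilde{J}$ the self-adjoint operator $\int g\, d\tilde{J} = \sum_{E \in \mathcal{P}_0} g(E)\,\tilde{J}(\{E\})$, where $g:\mathcal{P}\to\mathbb{R}$ sends each $E\in\mathcal{P}_0$ to a chosen point of $E\cap\sigma_p(J)\subseteq\sigma(J)$ (and the $\tilde{J}$-null cells anywhere): this is a sum of orthogonal projections with coefficients in the compact set $\sigma(J)$, hence bounded, with the basis just built as eigenvectors and spectrum equal to the closure of a countable set. So $\tilde{J}$ is (standard and) diagonalizable.

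The only genuinely delicate points I anticipate are the bookkeeping that converts $\sum_{E \in \mathcal{P}_0} J(E)$ into $\sum_{\lambda \in \sigma_p(J)} J(\{\lambda\})$ — one must use both that $\mathcal{P}$ truly partitions $\mathbb{R}$ (so no eigenvalue is double-counted) and countable additivity of $J$ within each cell — and making precise the sense in which the coarse-grained object on $(\mathcal{P},\mathcal{M}_{\mathcal{P}})$ is to be viewed as a standard PVM on $\mathcal{H}$, i.e.\ fixing the labeling $g$; everything else is a direct invocation of Theorem~\ref{thm:coarse-grain} together with the quoted properties of diagonalizable PVMs.
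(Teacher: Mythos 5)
Your proposal is correct and follows exactly the paper's route: the paper's entire proof is the single line ``Take $\mathcal{P}_0 = \{ E \in \mathcal{P} \, : \, E \cap \sigma_p(J) \neq \emptyset \}$,'' leaving the countability of $\mathcal{P}_0$, the verification of $\sum_{E \in \mathcal{P}_0} J(E) = I$, and the diagonalizability of $\tilde{J}$ implicit. You have simply supplied those omitted details (correctly), including the labeling needed to regard $\tilde{J}$ as a standard PVM, which the paper glosses over.
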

\begin{proof}
Take $\mathcal{P}_0 = \{ E \in \mathcal{P}  \, : \, E \cap \sigma_p(J) \neq \emptyset \}$. 
\end{proof}

\begin{kap}
Let $(\Omega,\mathcal{M})$ be a measurable space, let $\mathcal{P}$ be a partition of $\Omega$ such that $\mathcal{P} \subseteq \mathcal{M}$, and let $\mathcal{M}_{\mathcal{P}}$ be as in Definition \ref{kap:CGmeasurablespace} above.  If 
\begin{equation}
\label{eq:partition_good}
 \{ X \in \mathcal{P} \, : \, E \cap X \neq \emptyset \} \in \mathcal{M}_{\mathcal{P}}
\end{equation}
whenever $E \in \mathcal{M}$, then we call $\mathcal{P}$ an \emph{appropriate partition of $\Omega$}.
\end{kap}

Note that whenever $\mathcal{P}$ is a countable set, it is necessarily an appropriate partition.

\begin{thrm}
\label{thm:coarse}
Let $A,B$ be gPVMs on a measurable space $(\Omega, \mathcal{M})$, and let $P_A$ and $P_B$ be appropriate partitions of $\Omega$.  Further let $P$ be the partition of $\Omega^2$ given by $P:= \{ p \times q \ : \ p \in P_A, \ q \in P_B \}$.  Then
\begin{equation}
\label{eq:cgWB}
J_{\tilde{A} \tilde{B}} = \tilde{J}_{AB} ,
\end{equation}
where $\tilde{A}$, $\tilde{B}$, and $\tilde{J}_{AB}$ are defined (via equation (\ref{eq:coarse})) with respect to the partitions $P_A$, $P_B$, and $P$, respectively.
\end{thrm}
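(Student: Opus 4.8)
The plan is to establish (\ref{eq:cgWB}) by checking that both sides are gPVMs on the \emph{same} measurable space and that they agree on every element of it. Throughout I identify the product partition $P$ with $P_A \times P_B$ via the bijection $p \times q \leftrightarrow (p,q)$; note that $p \times q \in \mathcal{M} \times \mathcal{M} \subseteq \mathcal{M}^2$ since $P_A, P_B$ are appropriate partitions (hence contained in $\mathcal{M}$), so $P \subseteq \mathcal{M}^2$ and $\tilde{J}_{AB}$ is well defined by (\ref{eq:coarse}), a gPVM on $(\mathcal{M}^2)_P$ by Theorem~\ref{thm:coarse-grain} applied to the gPVM $J_{AB}$ of Theorem~\ref{thm:joint_pvim}; meanwhile $J_{\tilde{A}\tilde{B}}$ is, by (\ref{def:joint_dist}) and Theorem~\ref{thm:joint_pvim}, a gPVM on the product $\sigma$-algebra $\mathcal{M}_{P_A} \otimes \mathcal{M}_{P_B}$ of the coarse-grained spaces.

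\emph{Step 1 (the domains agree).} Because $P$ is a partition, the map $\tilde{Q} \mapsto \bigcup \tilde{Q}$ sends complements-in-$P$ to complements in $\Omega^2$ and countable unions to countable unions, so $(\mathcal{M}^2)_P = \{ \tilde{Q} \subseteq P : \bigcup \tilde{Q} \in \mathcal{M}^2 \}$ is a $\sigma$-algebra. Since $\bigcup(\tilde{R}_1 \times \tilde{R}_2) = (\bigcup \tilde{R}_1) \times (\bigcup \tilde{R}_2) \in \mathcal{M} \times \mathcal{M}$ for $\tilde{R}_1 \in \mathcal{M}_{P_A}$ and $\tilde{R}_2 \in \mathcal{M}_{P_B}$, this $\sigma$-algebra contains every measurable rectangle, giving $\mathcal{M}_{P_A} \otimes \mathcal{M}_{P_B} \subseteq (\mathcal{M}^2)_P$. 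For the reverse inclusion I would use that $P_A$ and $P_B$ are appropriate: by (\ref{eq:partition_good}) (and its complement) the $P_A$- and $P_B$-saturations of any measurable set are again measurable, and a generating-class argument built on measurable rectangles $E_1 \times E_2$ — whose $P$-saturation is the product of the component saturations — shows that every $P$-saturated set in $\mathcal{M}^2$ has the form $\bigcup \tilde{Q}$ with $\tilde{Q} \in \mathcal{M}_{P_A} \otimes \mathcal{M}_{P_B}$, so that $(\mathcal{M}^2)_P \subseteq \mathcal{M}_{P_A} \otimes \mathcal{M}_{P_B}$.

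\emph{Step 2 (the values agree).} Fix $\tilde{Q}$ in the common domain and set $Q := \bigcup \tilde{Q} \in \mathcal{M}^2$. Unwinding the definitions, $\tilde{J}_{AB}(\tilde{Q}) = J_{AB}(Q) = \bigvee \{ A(S_1) \wedge B(S_2) : S_1, S_2 \in \mathcal{M},\ S_1 \times S_2 \subseteq Q \}$, while $J_{\tilde{A}\tilde{B}}(\tilde{Q}) = \bigvee \{ A(\bigcup \tilde{R}_1) \wedge B(\bigcup \tilde{R}_2) : \tilde{R}_1 \in \mathcal{M}_{P_A},\ \tilde{R}_2 \in \mathcal{M}_{P_B},\ \tilde{R}_1 \times \tilde{R}_2 \subseteq \tilde{Q} \}$, using $\tilde{A}(\tilde{R}_1) = A(\bigcup \tilde{R}_1)$ and likewise for $\tilde{B}$. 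The inequality $J_{\tilde{A}\tilde{B}}(\tilde{Q}) \leq \tilde{J}_{AB}(\tilde{Q})$ is immediate, since for $\tilde{R}_1 \times \tilde{R}_2 \subseteq \tilde{Q}$ we have $\bigcup \tilde{R}_i \in \mathcal{M}$ and $(\bigcup \tilde{R}_1) \times (\bigcup \tilde{R}_2) = \bigcup(\tilde{R}_1 \times \tilde{R}_2) \subseteq Q$, so the term appears in the join for $J_{AB}(Q)$. For the reverse, take $S_1, S_2 \in \mathcal{M}$ with $S_1 \times S_2 \subseteq Q$ and put $\tilde{R}_1 := \{ X \in P_A : X \cap S_1 \neq \emptyset \}$ and $\tilde{R}_2 := \{ X \in P_B : X \cap S_2 \neq \emptyset \}$; by (\ref{eq:partition_good}), $\tilde{R}_1 \in \mathcal{M}_{P_A}$ and $\tilde{R}_2 \in \mathcal{M}_{P_B}$, and since $S_i \subseteq \bigcup \tilde{R}_i$, monotonicity of the gPVMs $A, B$ (Lemma~\ref{lem:gpvm_props}(2)) gives $A(S_1) \wedge B(S_2) \leq A(\bigcup \tilde{R}_1) \wedge B(\bigcup \tilde{R}_2) = \tilde{A}(\tilde{R}_1) \wedge \tilde{B}(\tilde{R}_2)$. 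The key geometric point is that $\tilde{R}_1 \times \tilde{R}_2 \subseteq \tilde{Q}$ as subsets of $P$: given $p \in \tilde{R}_1$, $q \in \tilde{R}_2$, choose $x \in p \cap S_1$ and $y \in q \cap S_2$; then $(x,y) \in S_1 \times S_2 \subseteq Q = \bigcup \tilde{Q}$, so $(x,y)$ lies in some block $p' \times q' \in \tilde{Q}$, and since $P_A, P_B$ are partitions, $x \in p \cap p'$ forces $p = p'$ and $y \in q \cap q'$ forces $q = q'$, whence $p \times q \in \tilde{Q}$. Thus $\tilde{A}(\tilde{R}_1) \wedge \tilde{B}(\tilde{R}_2)$ occurs in the join for $J_{\tilde{A}\tilde{B}}(\tilde{Q})$, so $A(S_1) \wedge B(S_2) \leq J_{\tilde{A}\tilde{B}}(\tilde{Q})$; joining over all such $S_1 \times S_2$ yields $\tilde{J}_{AB}(\tilde{Q}) = J_{AB}(Q) \leq J_{\tilde{A}\tilde{B}}(\tilde{Q})$, and equality follows.

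I expect the main obstacle to be the reverse inclusion in Step 1 — that every $P$-saturated measurable subset of $\Omega^2$ already comes from $\mathcal{M}_{P_A} \otimes \mathcal{M}_{P_B}$. The difficulty is that the saturation operator does not commute with complementation, so a naive $\sigma$-algebra induction fails; one must arrange the generating-class argument to stay within the $P$-saturated sets, and it is exactly here that the appropriateness hypothesis on $P_A$ and $P_B$ is needed. By contrast, the heart of Step 2 is short once one observes that a block of $P$ which meets $S_1 \times S_2$ is contained in $\bigcup \tilde{Q}$ precisely because the blocks of $P$ are rectangles.
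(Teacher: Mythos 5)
Your Step~2 is, essentially line for line, the paper's proof: the same two joins, the same easy direction via $\bigcup(\tilde{R}_1 \times \tilde{R}_2) = \bigl(\bigcup \tilde{R}_1\bigr) \times \bigl(\bigcup \tilde{R}_2\bigr)$, and for the reverse the same construction of $\tilde{R}_i$ as the set of blocks meeting $S_i$, with appropriateness (equation (\ref{eq:partition_good})) supplying $\tilde{R}_1 \in \mathcal{M}_{P_A}$ and $\tilde{R}_2 \in \mathcal{M}_{P_B}$, monotonicity (Lemma~\ref{lem:gpvm_props}) supplying the inequality of projectors, and the rectangularity of the blocks of $P$ supplying $\tilde{R}_1 \times \tilde{R}_2 \subseteq \tilde{Q}$. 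The only divergence is your Step~1, and there you are making work for yourself: the paper never identifies $(\mathcal{M}^2)_P$ with $\mathcal{M}_{P_A} \otimes \mathcal{M}_{P_B}$. It takes the common domain of both sides to be $\mathcal{M}_P$ and reads the join defining $J_{\tilde{A}\tilde{B}}(\tilde{Q})$ as running over rectangles $\tilde{R}_1 \times \tilde{R}_2 \in \mathcal{M}_P$ contained in $\tilde{Q}$, extracting $\bigcup \tilde{R}_1, \bigcup \tilde{R}_2 \in \mathcal{M}$ from $\bigl(\bigcup \tilde{R}_1\bigr) \times \bigl(\bigcup \tilde{R}_2\bigr) \in \mathcal{M}^2$. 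With that convention only your easy inclusion $\mathcal{M}_{P_A} \otimes \mathcal{M}_{P_B} \subseteq (\mathcal{M}^2)_P$ plays any role (it guarantees every term of the product-$\sigma$-algebra join already appears in the $\mathcal{M}_P$ join), and the reverse inclusion --- which you correctly flag as delicate, precisely because saturation does not commute with complementation --- can be dropped entirely. So the proposal is correct and follows the paper's route; the one piece you identified as the main obstacle is not actually needed.
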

\begin{proof}
For this proof we will need two simple results whose proofs we omit since they use only elementary set theory --- first, for any sets $X$ and $Y$ whose elements are sets, and such that $X \subseteq Y$, we have $\bigcup X \subseteq \bigcup Y$.  Also, if we have two sets $R \subseteq P_A$ and $S \subseteq P_B$, then we have
\begin{equation}
\bigcup ( R \times S) = \Big( \bigcup R \Big) \times \Big( \bigcup S \Big) .
\end{equation}

Consider any $\tilde{Q} \in \mathcal{M}_P$.  We will prove the above result by showing that $\tilde{J}_{AB}(\tilde{Q}) \leq J_{\tilde{A} \tilde{B} } (\tilde{Q} )$ and also  $J_{\tilde{A} \tilde{B} } (\tilde{Q} ) \leq \tilde{J}_{AB}(\tilde{Q})$.  Now, we have both that
\begin{equation}
\label{eq:tildeJ}
\tilde{J}_{AB} (\tilde{Q} ) = J_{AB} \Big( \bigcup \tilde{Q} \Big) = \bigvee_{\substack{R_1 \times R_2 \subseteq \bigcup \tilde{Q} \\ R_1, R_2 \in \mathcal{M} } } A(R_1) \wedge B(R_2),
\end{equation}
as well as
\begin{equation}
\label{eq:tildeAB}
J_{\tilde{A} \tilde{B} } (\tilde{Q} ) = \bigvee_{\substack{\tilde{R}_1 \times \tilde{R}_2 \subseteq \tilde{Q} \\ \tilde{R}_1 \times \tilde{R}_2 \in \mathcal{M}_{P}}} \tilde{A} (\tilde{R}_1 ) \wedge \tilde{B} ( \tilde{R}_2) .
\end{equation}

Considering any element in the join of equation (\ref{eq:tildeAB}), we see that since $\tilde{R}_1 \times \tilde{R}_2 \subseteq \tilde{Q}$, we must have
\begin{equation}
\bigcup \tilde{R}_1 \times \bigcup \tilde{R}_2 = \bigcup (\tilde{R}_1 \times \tilde{R}_2) \subseteq \bigcup \tilde{Q},
\end{equation}
and moreover, by the definition of $\mathcal{M}_P$, we must have $\bigcup \tilde{R}_1 \times \bigcup \tilde{R}_2 \in \mathcal{M}^2$, so that both $\bigcup \tilde{R}_1, \bigcup \tilde{R}_2 \in \mathcal{M}$.  From this we immediately conclude that indeed $J_{\tilde{A} \tilde{B} } (\tilde{Q} ) \leq \tilde{J}_{AB}(\tilde{Q})$, since each element in the join of equation (\ref{eq:tildeAB}) occurs in the join in equation (\ref{eq:tildeJ}) (taking $R_i = \bigcup \tilde{R}_i$ for $i =1,2$).  

For the other inequality, consider some term in the join of equation (\ref{eq:tildeJ}), and then for the $R_1, R_2$ occurring in this term define $\tilde{R}_1$ to be the set of all $X \in P_A$ such that $x \in R_1$ for some $x \in X$, and similarly for $\tilde{R}_2$.  First, we will show that $\tilde{R}_1 \times \tilde{R}_2 \subseteq \tilde{Q}$, so consider some $X \in \tilde{R}_1$ and $Y \in \tilde{R}_2$.  Then we must have $(a,b) \in R_1 \times R_2$ for some $a \in X$ and $b \in Y$.   Since $R_1 \times R_2 \subseteq \bigcup \tilde{Q}$, we must have $(a,b) \in \bigcup \tilde{Q}$, so that there exists some $T \in \tilde{Q}$ with $(a,b) \in T$.  Then, since $\tilde{Q}  \in \mathcal{M}_P$, we must have that $\tilde{Q} \subseteq P$, and hence that $T \in P$.  But also, $X \times Y \in P$, and since $(a,b)$ is a common element of $X \times Y$ and $T$, and $P$ is a partition, we must have $T = X \times Y$, so that $X \times Y \in \tilde{Q}$, and hence that $\tilde{R}_1 \times \tilde{R}_2 \subseteq \tilde{Q}$.

Now, we also have that $ \tilde{R}_1 \in \mathcal{M}_{P_A}$ by equation (\ref{eq:partition_good}), and similarly $\tilde{R}_2 \in \mathcal{M}_{P_B}$, so that $\tilde{R}_1 \times \tilde{R}_2 \in \mathcal{M}_P$.  Then we have that $\tilde{A}( \tilde{R}_1) \wedge \tilde{B}( \tilde{R}_2 ) $ is in the join of equation (\ref{eq:tildeAB}), and also that
\begin{align}
A (R_1) \wedge B( R_2) & \leq A \Big( \bigcup \tilde{R}_1 \Big) \wedge B \Big( \bigcup \tilde{R}_2 \Big) \nonumber \\
& \qquad  = \tilde{A} ( \tilde{R}_1 ) \wedge \tilde{B} ( \tilde{R}_2 ), 
\end{align}
since $R_i \subseteq \bigcup \tilde{R}_i$ for $i =1,2$.  Hence we have established $J_{\tilde{A} \tilde{B} } (\tilde{Q} ) \geq \tilde{J}_{AB}(\tilde{Q})$, since each element in the join of equation (\ref{eq:tildeJ}) is greater than an element in the join in equation (\ref{eq:tildeAB}).
\end{proof}



\subsection{Characterization of $J_{AB}$} 

Before proving our main results, we will need two useful lemmas.  First, recall that for a diagonalizable PVM $A$, the spectrum $\sigma(A)$ is a compact set which is the closure of the set of eigenvalues $\sigma_p(A)$, and moreover, there exists an orthonormal basis for $\mathcal{H}$ consisting of eigenvectors of $A$.  For such an $A$, and for any $R \in \mathcal{B}(\mathbb{R})$, we have that $A(R) = A(R \cap \sigma_p(A))$.  From this fact we immediately deduce the following lemma.

\begin{lemmy}
\label{lem:spectrum}
Let $A,B$ be diagonalizable PVMs.  Then 
\begin{equation}
J_{AB}(Q) = \hspace{-20pt} \bigvee_{\substack{R_1 \times R_2 \subseteq Q \\ R_1,R_2 \in \mathcal{B}(\mathbb{R}) \\ R_1 \subseteq \sigma(A), R_2 \subseteq \sigma(B)}} \hspace{-20pt} A(R_1) \wedge B(R_2) \quad \forall Q \in \mathcal{B}(\mathbb{R}^2),
\end{equation}
and ${  J_{AB} \big( Q \cap \sigma(A) \times \sigma(B) \big) = J_{AB}(Q) \quad \forall Q \in \mathcal{B}(\mathbb{R}^2) }$.
\end{lemmy}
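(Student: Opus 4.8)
The plan is to derive everything from the single fact recalled just above the statement — that for a diagonalizable PVM one has $A(R) = A(R \cap \sigma_p(A))$, and hence also $A(R) = A(R \cap \sigma(A))$, for every $R \in \mathcal{B}(\mathbb{R})$ — together with the monotonicity of $J_{AB}$, which is a gPVM by Theorem \ref{thm:joint_pvim} and so obeys property (2) of Lemma \ref{lem:gpvm_props}. A preliminary remark I would record is that $\sigma(A)$ and $\sigma(B)$ are Borel (they are closed), and that $\sigma_p(A) \subseteq \sigma(A)$, so intersecting any $R \in \mathcal{B}(\mathbb{R})$ with the spectrum does not change the projector $A(R)$ (by monotonicity of the PVM $A$ applied to $R \cap \sigma_p(A) \subseteq R \cap \sigma(A) \subseteq R$, using the recalled identity at both ends), and similarly for $B$.

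For the first displayed equality, note that the join on the right-hand side is taken over a subcollection of the rectangles appearing in the defining join (\ref{def:joint_dist}) for $J_{AB}(Q)$, so it is automatically $\leq J_{AB}(Q)$. For the reverse inequality I would take an arbitrary term $A(R_1) \wedge B(R_2)$ from the defining join (so $R_1, R_2 \in \mathcal{B}(\mathbb{R})$ with $R_1 \times R_2 \subseteq Q$) and replace it by $A(R_1 \cap \sigma(A)) \wedge B(R_2 \cap \sigma(B))$: by the preliminary remark this equals the original term; its index rectangle $(R_1 \cap \sigma(A)) \times (R_2 \cap \sigma(B))$ is still contained in $Q$; both factors are still Borel; and now $R_1 \cap \sigma(A) \subseteq \sigma(A)$ and $R_2 \cap \sigma(B) \subseteq \sigma(B)$. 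Hence every term of the defining join coincides with a term of the restricted join, giving the reverse inequality and therefore equality.

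For the second statement, set $Q' := Q \cap (\sigma(A) \times \sigma(B))$, which is Borel. Since $Q' \subseteq Q$, monotonicity of the gPVM $J_{AB}$ gives $J_{AB}(Q') \leq J_{AB}(Q)$. Conversely, by the part just proved, $J_{AB}(Q)$ is the join of the terms $A(R_1) \wedge B(R_2)$ over rectangles with $R_1 \subseteq \sigma(A)$, $R_2 \subseteq \sigma(B)$ and $R_1 \times R_2 \subseteq Q$; every such rectangle satisfies $R_1 \times R_2 \subseteq Q \cap (\sigma(A) \times \sigma(B)) = Q'$, so each of these terms already occurs in the join defining $J_{AB}(Q')$, whence $J_{AB}(Q) \leq J_{AB}(Q')$. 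I do not expect a genuine obstacle here; the only point needing a moment's care is verifying that the replacement rectangle $(R_1 \cap \sigma(A)) \times (R_2 \cap \sigma(B))$ is simultaneously still inside $Q$, still a product of Borel sets, and still yields the same projector, and all three are immediate from the inclusion $R_1 \cap \sigma(A) \subseteq R_1$ (and its analogue for $B$), the closedness of the spectrum, and the identity $A(R) = A(R \cap \sigma(A))$ respectively.
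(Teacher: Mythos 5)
Your proof is correct and follows exactly the route the paper intends: the lemma is deduced from the identity $A(R) = A(R \cap \sigma(A)) = A(R \cap \sigma_p(A))$ for diagonalizable PVMs (stated as property (iv) in the appendix) together with the monotonicity of the gPVM $J_{AB}$, by replacing each rectangle in the defining join with its intersection with $\sigma(A) \times \sigma(B)$. The paper itself only remarks that the lemma is ``immediately deduced'' from that identity, so your write-up simply supplies the details of the same argument.
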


Now, we also have the following nice behavior of our joint observable with respect to common eigenvectors.

\begin{lemmy}
\label{lem:eigen}
Let $A,B$ be diagonalizable PVMs, and let $\ket{\psi} \in \mathcal{H}$ be such that $A \ket{\psi} = a \ket{\psi}$ and $B \ket{\psi} = b \ket{\psi}$.  Then for any $Q \in \mathcal{B}(\mathbb{R}^2)$, we have that
\begin{enumerate}[(1)]
\item $(a,b) \in Q$ implies that $J_{AB}(Q) \ket{\psi} = \ket{\psi}$;
\item $(a,b) \in Q^c$ implies that $J_{AB}(Q) \ket{\psi} = 0$.
\end{enumerate}
\end{lemmy}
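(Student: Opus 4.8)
The plan is to prove (1) directly from the definition of $J_{AB}$ using singleton ``rectangles,'' and then deduce (2) from (1) by complementation together with the orthogonality property of gPVMs established in Theorem \ref{thm:joint_pvim}. Throughout I may assume $\ket{\psi} \neq 0$, since otherwise both assertions are trivial. Then $a$ is an eigenvalue of $A$ and $\ket{\psi}$ lies in the corresponding eigenspace, which is exactly the range of the projector $A(\{a\})$; hence $A(\{a\}) \ket{\psi} = \ket{\psi}$, and likewise $B(\{b\}) \ket{\psi} = \ket{\psi}$.

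For part (1), suppose $(a,b) \in Q$. First I would take $R_1 = \{a\}$ and $R_2 = \{b\}$, which are Borel, so that $R_1 \times R_2 = \{(a,b)\} \subseteq Q$; thus $A(\{a\}) \wedge B(\{b\})$ is one of the terms in the join (\ref{def:joint_dist}) defining $J_{AB}(Q)$, and therefore $A(\{a\}) \wedge B(\{b\}) \leq J_{AB}(Q)$. Since $\ket{\psi}$ lies in the range of $A(\{a\})$ and in the range of $B(\{b\})$, it lies in their intersection, which is the range of the meet $A(\{a\}) \wedge B(\{b\})$; hence $\big( A(\{a\}) \wedge B(\{b\}) \big)\ket{\psi} = \ket{\psi}$. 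As a projector dominating one that fixes $\ket{\psi}$ must itself fix $\ket{\psi}$, this gives $J_{AB}(Q) \ket{\psi} = \ket{\psi}$.

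For part (2), suppose $(a,b) \in Q^c$. Applying part (1) to the Borel set $Q^c$ in place of $Q$ yields $J_{AB}(Q^c) \ket{\psi} = \ket{\psi}$. By Theorem \ref{thm:joint_pvim} the map $J_{AB}$ is a gPVM, so since $Q \cap Q^c = \emptyset$, property (2) of gPVMs gives $J_{AB}(Q) \perp J_{AB}(Q^c)$, i.e.\ $J_{AB}(Q) J_{AB}(Q^c) = 0$. Hence $J_{AB}(Q) \ket{\psi} = J_{AB}(Q) J_{AB}(Q^c) \ket{\psi} = 0$.

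The only point that requires care is in (2): one should resist the temptation to prove it directly by showing that every rectangle $R_1 \times R_2 \subseteq Q$ satisfies $\big( A(R_1) \wedge B(R_2) \big)\ket{\psi} = 0$, since a join of subspaces each orthogonal to $\ket{\psi}$ need not be orthogonal to $\ket{\psi}$, so this would not by itself give $J_{AB}(Q)\ket{\psi}=0$. Routing through the complement and invoking the built-in orthogonality of $J_{AB}$ on disjoint sets circumvents this cleanly, and is really the only obstacle in the argument.
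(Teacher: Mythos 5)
Your proof is correct and follows essentially the same route as the paper's: part (1) by exhibiting the singleton rectangle $\{a\}\times\{b\}\subseteq Q$ as a term in the defining join, and part (2) by applying (1) to $Q^c$ and invoking the orthogonality of $J_{AB}$ on disjoint sets. Your closing remark about why a direct attack on (2) would fail is a sound observation, though not needed for the argument.
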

\begin{proof}
For the first statement, note that by our assumption we have $A( \{a \} ) \ket{\psi} = \ket{\psi}$ and also $B ( \{ b \} ) \ket{\psi} = \ket{\psi}$, and this means that $A( \{ a \} ) \wedge B( \{ b \} ) \ket{\psi} = \ket{\psi}$.  Then, if $(a,b) \in Q$, since $J_{AB}$ is a gPVM, we have that 
\begin{equation}
J_{AB}(Q) \geq J_{AB}(\{(a,b) \}) = A(\{ a \} ) \wedge B( \{ b \}),
\end{equation}
establishing (1) above.  If, on the other hand, we assume that $(a,b) \in Q^c$, then we have that $J_{AB} (Q^c) \ket{\psi} = \ket{\psi}$ by (1), and $J_{AB}(Q) \perp J_{AB}(Q^c)$ since $J_{AB}$ is a gPVM, which gives that $J_{AB}(Q) \ket{\psi} = 0$.
\end{proof}

We now present the characterization theorem for our generalized joint observables.

\begin{thrm}
\label{thm:characterization}
Let $A,B$ be diagonalizable PVMs, and let $J:\mathcal{B}(\mathbb{R}^2) \to \mathcal{L}_{\mathcal{H}}$ be a set map.  Then $J = J_{AB}$ if and only if $J$ satisfies the following two conditions for all ${ Q \in \mathcal{B}(\mathbb{R}^2) }$ and all $\ket{\psi} \in \mathcal{H}$.
\begin{enumerate}[(1)]
\item If there exist $R_1,R_2 \in \mathcal{B}(\mathbb{R})$ with $R_1 \times R_2 \subseteq Q$ and $A(R_1) \wedge B(R_2) \ket{\psi} = \ket{\psi}$, then $J(Q) \ket{\psi} = \ket{\psi}$.
\item If for every $R_1,R_2 \in \mathcal{B}(\mathbb{R})$ with $R_1 \times R_2 \subseteq Q$ we have $A(R_1) \wedge B(R_2) \ket{\psi} = 0$, then $J(Q) \ket{\psi} = 0$.
\end{enumerate}
\end{thrm}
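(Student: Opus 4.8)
The plan is to reformulate conditions (1) and (2) as inclusions between the range and kernel of the projector $J(Q)$ and those of $J_{AB}(Q)$, and then to use the two conditions to squeeze $J(Q)$ between $J_{AB}(Q)$ and itself in the order $\leq$ on $\mathcal{L}_{\mathcal{H}}$. The starting point is that, by the definition (\ref{def:joint_dist}) of $J_{AB}$ (taken with $\mathcal{M}=\mathcal{B}(\mathbb{R})$, so that $\mathcal{M}^2=\mathcal{B}(\mathbb{R}^2)$), the range of $J_{AB}(Q)$ is the closed linear span of the subspaces $\Img(A(R_1)\wedge B(R_2))$ over all rectangles $R_1\times R_2\subseteq Q$ with $R_1,R_2\in\mathcal{B}(\mathbb{R})$. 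I will freely use the following elementary facts about the complete lattice $\mathcal{L}_{\mathcal{H}}$ (cf.\ Section \ref{sec:kapPVMs} and \cite{Holland_75}): for a projector $P$ and a vector $\ket{\psi}$, one has $P\ket{\psi}=\ket{\psi}$ iff $\ket{\psi}\in\Img P$, and $P\ket{\psi}=0$ iff $\ket{\psi}\perp\Img P$; any closed subspace that contains $\Img P_j$ for every $j$ contains $\Img(\bigvee_j P_j)$; and $\ker(\bigvee_j P_j)=\bigcap_j\ker P_j$. Combining these, for a fixed $Q$ the hypothesis of condition (2) --- namely that $A(R_1)\wedge B(R_2)\ket{\psi}=0$ for \emph{every} rectangle $R_1\times R_2\subseteq Q$ --- is precisely equivalent to $J_{AB}(Q)\ket{\psi}=0$.

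For the ``only if'' direction I would check that $J_{AB}$ itself satisfies (1) and (2). For (1): given $R_1\times R_2\subseteq Q$ with $A(R_1)\wedge B(R_2)\ket{\psi}=\ket{\psi}$, the projector $A(R_1)\wedge B(R_2)$ is one of the joinands defining $J_{AB}(Q)$, so $\ket{\psi}\in\Img(A(R_1)\wedge B(R_2))\subseteq\Img J_{AB}(Q)$, whence $J_{AB}(Q)\ket{\psi}=\ket{\psi}$. For (2): by the equivalence recorded at the end of the previous paragraph, its hypothesis is literally the desired conclusion $J_{AB}(Q)\ket{\psi}=0$, so nothing remains to prove.

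For the ``if'' direction, assume the set map $J$ satisfies (1) and (2), and fix $Q\in\mathcal{B}(\mathbb{R}^2)$. First, any $\ket{\psi}\in\Img(A(R_1)\wedge B(R_2))$ with $R_1\times R_2\subseteq Q$ satisfies $A(R_1)\wedge B(R_2)\ket{\psi}=\ket{\psi}$, so by (1) it lies in $\Img J(Q)$; since $\Img J(Q)$ is closed, it then contains the closed span of all such subspaces, i.e.\ $\Img J_{AB}(Q)\subseteq\Img J(Q)$, so that $J_{AB}(Q)\leq J(Q)$. Second, if $J_{AB}(Q)\ket{\psi}=0$, then (again by the equivalence above) the hypothesis of (2) is met, so $J(Q)\ket{\psi}=0$; this says $\ker J_{AB}(Q)\subseteq\ker J(Q)$, which for projectors is the same as $J(Q)\leq J_{AB}(Q)$. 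Antisymmetry of $\leq$ on $\mathcal{L}_{\mathcal{H}}$ now yields $J(Q)=J_{AB}(Q)$, and since $Q\in\mathcal{B}(\mathbb{R}^2)$ was arbitrary, $J=J_{AB}$.

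The only delicate point is that the joins over rectangles may be uncountable (and $\mathcal{H}$ may be infinite-dimensional), so the three lattice facts above must be invoked in that generality; this is routine, but if one prefers to keep the index sets manageable one can first apply Lemma \ref{lem:spectrum} to replace every join over $R_1\times R_2\subseteq Q$ by the join over those with $R_1\subseteq\sigma(A)$ and $R_2\subseteq\sigma(B)$, without changing any of the arguments. Diagonalizability of $A$ and $B$ is otherwise not used.
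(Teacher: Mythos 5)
Your proposal is correct and follows essentially the same route as the paper's proof: condition (1) yields $J_{AB}(Q) \leq J(Q)$ by taking the join over rectangles, condition (2) yields $J(Q) \leq J_{AB}(Q)$ via the fact that the kernel of a join of projectors is the intersection of the kernels, and the converse direction is the observation that $J_{AB}$ itself satisfies both conditions. The only cosmetic difference is that you phrase the argument in terms of ranges and kernels while the paper works directly with the partial order and orthocomplements, and (as you note, and as is implicit in the paper's argument) diagonalizability plays no role.
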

\begin{proof}
First we will show that condition 1 above implies that $J \geq J_{AB}$, so consider any $Q \in \mathcal{B}(\mathbb{R}^2)$, and any $R_1,R_2 \in \mathcal{B}(\mathbb{R})$ such that $R_1 \times R_2 \subseteq Q$.  Now if ${ A(R_1) \wedge B(R_2) \ket{\psi} = \ket{\psi} }$, by assumption we must have ${ J(Q) \ket{\psi} = \ket{\psi} }$, i.e.~${ J(Q) \geq A(R_1) \wedge B(R_2) }$.  Since this is true for any such ${ R_1 \times R_2 \subseteq Q }$, taking the join gives that ${ J(Q) \geq J_{AB}(Q) }$.  Since this is true for any ${ Q \in \mathcal{B}(\mathbb{R}^2) }$, it follows that ${ J \geq J_{AB} }$.

Next we show that $J \geq J_{AB}$ implies condition 1 above.  Given $Q \in \mathcal{B}(\mathbb{R}^2)$ and $\ket{\psi} \in \mathcal{H}$, assume that there exist ${ R_1,R_2 \in \mathcal{B}(\mathbb{R}) }$ with ${ R_1 \times R_2 \subseteq Q }$ such that ${ A(R_1) \wedge B(R_2) \ket{\psi} = \ket{\psi} }$.  Then we have 
\begin{equation}
J(Q) \geq J_{AB}(Q) \geq A(R_1) \wedge B(R_2),
\end{equation}  
so that $J(Q) \ket{\psi} = \ket{\psi}$.

Finally, we show that condition 2 above is equivalent to $J \leq J_{AB}$.  Note that $J \leq J_{AB}$ if and only if ${ J(Q)^\perp \geq J_{AB}(Q)^\perp }$ for all ${ Q \in \mathcal{B}(\mathbb{R}^2) }$.  Now assume condition 2 and consider ${ \ket{\psi} \in \mathcal{H} }$ such that ${ J_{AB}(Q) \ket{\psi} = 0 }$.  By definition of the join, we must have ${ A(R_1) \wedge B(R_2) \ket{\psi} = 0 }$ for all ${ R_1,R_2 \in \mathcal{B}(\mathbb{R}) }$ such that ${ R_1 \times R_2 \subseteq Q }$.  Then condition 2 gives that ${ J(Q) \ket{\psi} = 0 }$, which shows that ${ J \leq J_{AB} }$.  

On the other hand, given $Q \in \mathcal{B}(\mathbb{R}^2)$, assume that ${ J(Q)^\perp \geq J_{AB}(Q)^\perp }$, as well as that the hypothesis of condition 2 holds for a given ${ \ket{\psi} \in \mathcal{H} }$.  We can easily see that ${ J_{AB}(Q) \ket{\psi} = 0 }$, so that ${J(Q) \ket{\psi} = 0 }$.  As such, condition 2 holds. 
\end{proof}

\subsection{Functional Calculus}

We now prove a generalization of the result that our $f(A,B)$ (previously defined in section \ref{sec:funcs}) is in fact a gPVM.

\begin{thrm}
\label{thm:f_pvm}
Let $(\Omega, \mathcal{M})$ and $(\Omega', \mathcal{N})$ be measurable spaces, let $f:\Omega \to \Omega'$ be a measurable function, and let $J$ be a gPVM on $(\Omega, \mathcal{M})$.  Then $J' := J \circ f^{-1}$ is a gPVM on $(\Omega', \mathcal{N})$.  
\end{thrm}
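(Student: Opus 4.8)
The plan is to verify the three defining properties of a gPVM for $J' := J \circ f^{-1}$ directly, leaning only on elementary facts about preimages, namely that $f^{-1}(\Omega') = \Omega$, that $f^{-1}$ preserves disjointness ($f^{-1}(R) \cap f^{-1}(S) = f^{-1}(R \cap S)$), that $f^{-1}$ preserves inclusions, and that $f^{-1}$ commutes with arbitrary unions. The first observation is simply that $J'$ is well-defined: since $f$ is measurable, $f^{-1}(R) \in \mathcal{M}$ for every $R \in \mathcal{N}$, so $J' = J \circ f^{-1}$ is a genuine map $\mathcal{N} \to \mathcal{L}_{\mathcal{H}}$.

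Next I would dispatch properties (1) and (2). Property (1) is immediate: $J'(\Omega') = J\big(f^{-1}(\Omega')\big) = J(\Omega) = I$ since $J$ is a gPVM. For property (2), if $R,S \in \mathcal{N}$ are disjoint then $f^{-1}(R) \cap f^{-1}(S) = f^{-1}(R \cap S) = f^{-1}(\emptyset) = \emptyset$, so $J'(R) = J\big(f^{-1}(R)\big) \perp J\big(f^{-1}(S)\big) = J'(S)$, again because $J$ is a gPVM.

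For the sub-additivity axiom I would take the shortcut afforded by Lemma \ref{lem:char}: having already established (1) and (2), it suffices to check the monotonicity condition (\ref{eq:property3}), i.e.\ that $R \subseteq S$ implies $J'(R) \leq J'(S)$. But $R \subseteq S$ forces $f^{-1}(R) \subseteq f^{-1}(S)$, and then property (2) of Lemma \ref{lem:gpvm_props} gives $J\big(f^{-1}(R)\big) \leq J\big(f^{-1}(S)\big)$, which is exactly $J'(R) \leq J'(S)$. (Should one prefer not to invoke Lemma \ref{lem:char}, the countable sub-additivity can also be checked head-on: for pairwise disjoint $R_1, R_2, \ldots \in \mathcal{N}$ the preimages $f^{-1}(R_i)$ are pairwise disjoint and $\bigcup_{i=1}^\infty f^{-1}(R_i) = f^{-1}\big(\bigcup_{i=1}^\infty R_i\big)$, so $\sum_{i=1}^\infty J'(R_i) = \sum_{i=1}^\infty J\big(f^{-1}(R_i)\big) \leq J\big(f^{-1}(\bigcup_{i=1}^\infty R_i)\big) = J'\big(\bigcup_{i=1}^\infty R_i\big)$ by property (3) of $J$.)

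I do not anticipate any genuine obstacle here; the statement is a routine transport-of-structure argument, and the only "content" is the standard bookkeeping that preimages behave well with respect to $\Omega'$, disjointness, inclusion, and unions. The mild point worth flagging explicitly in the write-up is that measurability of $f$ is precisely what makes $J \circ f^{-1}$ defined on all of $\mathcal{N}$, so this hypothesis should be cited where it is used.
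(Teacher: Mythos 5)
Your proposal is correct and follows essentially the same route as the paper's proof: verify well-definedness from measurability of $f$, check properties (1) and (2) via the elementary behavior of preimages, and reduce sub-additivity to monotonicity using Lemma \ref{lem:char}. The direct verification of countable sub-additivity you offer as an alternative is also fine, but adds nothing beyond what the lemma already provides.
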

\begin{proof}
Since $f$ is measurable, we have that $f^{-1} (Q) \in \mathcal{M}$ whenever $Q \in \mathcal{N}$, so that $J': \mathcal{N} \to \mathcal{L}_{\mathcal{H}}$.  Then 
we have
\begin{equation}
J' (\Omega ') = J \circ f^{-1} ( \Omega ') = J (\Omega) = I .
\end{equation}
Next, if $R, S \in \mathcal{N}$ are disjoint, we have that 
\begin{equation}
f^{-1}(R) \cap f^{-1} (S) = f^{-1} (R \cap S) = f^{-1} (\emptyset) = \emptyset 
\end{equation}
and since $J$ is a gPVM, this means that $J' (R) \perp J'(S)$.  Finally, for $R,S \in \mathcal{N}$ with $R \subseteq S$, we have that $f^{-1} (R) \subseteq f^{-1} (S)$, so that $J' (R) \leq J'(S)$, which again, follows from the fact that $J$ is a gPVM.  
\end{proof}

As such, for given diagonalizable PVMs $A$ and $B$, and Borel measurable function $f:\mathbb{R}^2 \to \mathbb{R}$,
\[ f(A,B) := J_{AB} \circ f^{-1} \]
is a gPVM.

In the sequel, we will find the property of $f(A,B)$ in the following lemma useful.

\begin{lemmy}
\label{lem:smsets}
Let $A,B$ be diagonalizable PVMs, and let $f:\mathbb{R}^2 \to \mathbb{R}$ be Borel measurable.  Then, for all $R \in \mathcal{B}(\mathbb{R})$, 
\begin{equation}
f(A,B) \big( R \cap f(\sigma(A), \sigma(B)) \big) = f(A,B)(R). 
\end{equation}
\end{lemmy}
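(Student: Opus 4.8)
The plan is to unwind the definition $f(A,B) = J_{AB}\circ f^{-1}$ and reduce the claim to Lemma \ref{lem:spectrum}, which says precisely that $J_{AB}$ ignores whatever part of its argument lies outside $\sigma(A)\times\sigma(B)$. Write $\Sigma := \sigma(A)\times\sigma(B)$, a Borel (indeed compact) subset of $\mathbb{R}^2$, and recall that $f(\sigma(A),\sigma(B))$ denotes the image $\{f(a,b) : (a,b)\in\Sigma\}$. Since $f(A,B)(R) = J_{AB}(f^{-1}(R))$, what must be shown is $J_{AB}(f^{-1}(R)) = J_{AB}\!\big(f^{-1}(R\cap f(\sigma(A),\sigma(B)))\big)$.

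First I would apply Lemma \ref{lem:spectrum} with $Q = f^{-1}(R)$ to get $J_{AB}(f^{-1}(R)) = J_{AB}(f^{-1}(R)\cap\Sigma)$, and likewise $J_{AB}(f^{-1}(R\cap f(\sigma(A),\sigma(B)))) = J_{AB}(f^{-1}(R\cap f(\sigma(A),\sigma(B)))\cap\Sigma)$. It then suffices to check the elementary set identity
\[
f^{-1}(R)\cap\Sigma \;=\; f^{-1}\!\big(R\cap f(\sigma(A),\sigma(B))\big)\cap\Sigma .
\]
The inclusion $\supseteq$ is trivial since $R\cap f(\sigma(A),\sigma(B))\subseteq R$; for $\subseteq$, if $(a,b)\in\Sigma$ with $f(a,b)\in R$, then $f(a,b)$ lies in the image $f(\sigma(A),\sigma(B))$ by definition, hence in $R\cap f(\sigma(A),\sigma(B))$, so $(a,b)$ belongs to the right-hand side. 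Feeding this identity between the two applications of Lemma \ref{lem:spectrum} yields $f(A,B)(R) = f(A,B)(R\cap f(\sigma(A),\sigma(B)))$, as desired.

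The only real obstacle is a bookkeeping one: for the statement even to typecheck, $R\cap f(\sigma(A),\sigma(B))$ must be Borel, i.e. $f(\sigma(A),\sigma(B))$ must be Borel. This is automatic when the spectra are finite (the image is then finite), and in the infinite-spectrum case it is exactly why the paper requires $f$ continuous (so $f(\Sigma)$ is compact). If one wishes to sidestep this, I would instead run the same computation with the \emph{countable} sets $\sigma_p(A),\sigma_p(B)$ in place of $\sigma(A),\sigma(B)$ — justified equally by the identity $A(R)=A(R\cap\sigma_p(A))$ underlying Lemma \ref{lem:spectrum} — obtaining $f(A,B)(R) = f(A,B)(R\cap f(\sigma_p(A),\sigma_p(B)))$; then, since $R\cap f(\sigma_p(A),\sigma_p(B))\subseteq R\cap f(\sigma(A),\sigma(B))\subseteq R$ and $f(A,B)$ is monotone (being a gPVM), the outer two terms being equal forces the middle one to agree as well. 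Either way the argument is short, with all the content sitting in Lemma \ref{lem:spectrum}.
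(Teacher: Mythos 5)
Your proof is correct and follows essentially the same route as the paper's: both reduce the claim to Lemma \ref{lem:spectrum} together with monotonicity of the gPVM $J_{AB}$ — the paper obtains the nontrivial inequality from $f^{-1}(f(X)) \supseteq X$, while you prove the exact set identity $f^{-1}(R)\cap\big(\sigma(A)\times\sigma(B)\big) = f^{-1}\big(R\cap f(\sigma(A),\sigma(B))\big)\cap\big(\sigma(A)\times\sigma(B)\big)$, which carries the same content. Your side remark about the Borel measurability of $f(\sigma(A),\sigma(B))$ (automatic for finite spectra, and for continuous $f$ on the compact set $\sigma(A)\times\sigma(B)$) is a fair observation about a hypothesis the paper leaves implicit.
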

\begin{proof}
To reduce notational clutter, let ${ S := f(\sigma(A), \sigma(B)) }$.  By Lemma \ref{lem:spectrum}, along with the fact that $f^{-1} \circ f(X) \supseteq X$ for any set $X$, we have that
\begin{align}
f(A,B) (E \cap S) & = J_{AB} \big( f^{-1}(E) \cap f^{-1}(S) \big) \nonumber \\
&  \geq  J_{AB} \big( f^{-1}(E) \cap \big( \sigma(A) \times \sigma(B) \big) \big) \nonumber \\
&  \quad = J_{AB} \circ f^{-1} (E) = f(A,B) (E), \label{eq:f_is_big}
\end{align}
and so equality holds (since $f(A,B)$ is a gPVM, and property 2 of Lemma \ref{lem:gpvm_props} gives the other inequality). 
\end{proof}

We now prove that we can construct PVMs out of these gPVMs. We first make the following definition.

\begin{kap}
Let $\mathcal{M}$ be a Boolean $\sigma$-algebra, and let $\mathcal{E} \subseteq \mathcal{M}$ be a subset of $\mathcal{M}$ which is totally ordered under inclusion and which furthermore generates $\mathcal{M}$ as a $\sigma$-algebra.  Then $\mathcal{E}$ will be called a \emph{generating chain} for $\mathcal{M}$.  If $\mathcal{M} = \mathcal{B}(\mathbb{R})$, we will simply refer to $\mathcal{E}$ as a generating chain. 
\end{kap}

\begin{thrm}
\label{thm:chain_works_J}
Let $(\Omega, \mathcal{M})$ be a measurable space, let $\mathcal{E}$ be a generating chain for $\mathcal{M}$, and let $J$ be a gPVM on $(\Omega, \mathcal{M})$ such that there exists a finite set $S \in \mathcal{M}$ which satisfies $J(Q \cap S) = J(Q)$ for all $Q \in \mathcal{M}$.  Then $J|_{\mathcal{E}}$ uniquely extends to a PVM on $(\Omega, \mathcal{M})$\footnote{For any map $f:X \to Y$, and any $Z \subseteq X$, we use the notation $f|_Z$ to mean the map $f$ restricted to the subset~$Z$.}. 
\end{thrm}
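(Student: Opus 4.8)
The plan is to prove uniqueness and existence separately, with essentially all the work in the existence half.

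\emph{Uniqueness.} Suppose $K_1,K_2$ are PVMs on $(\Omega,\mathcal{M})$ with $K_1|_{\mathcal{E}} = K_2|_{\mathcal{E}} = J|_{\mathcal{E}}$. I would look at $\mathcal{D} := \{Q \in \mathcal{M} : K_1(Q) = K_2(Q)\}$ and check it is a Dynkin system: it contains $\Omega$ (both $K_i(\Omega) = I$), it is closed under complements (since $K_i(Q^c) = I - K_i(Q)$), and it is closed under countable disjoint unions (by countable additivity of the $K_i$ the partial sums agree, hence so do their strong limits). Since $\mathcal{E}$ is totally ordered by inclusion it is a $\pi$-system, so by Dynkin's lemma $\mathcal{D} \supseteq \sigma(\mathcal{E}) = \mathcal{M}$, i.e.\ $K_1 = K_2$. (Alternatively, since the images of a single PVM commute, $\mathcal{D}$ is directly closed under finite intersections and hence is a sub-$\sigma$-algebra containing $\mathcal{E}$.) This part is routine.

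\emph{Existence.} Taking $Q=\Omega$ in the hypothesis gives $J(S)=I$ (and $Q=S^c$ gives $J(S^c)=0$), so $J$ is concentrated on the finite set $S$; we may assume $\mathcal{H}\neq\{0\}$, so $S\neq\emptyset$. Consider the traces $\mathcal{T} := \{E\cap S : E\in\mathcal{E}\}\cup\{\emptyset,S\}$; since $\mathcal{E}$ is a chain under inclusion so is $\mathcal{T}$, and since $S$ is finite $\mathcal{T}$ is a \emph{finite} chain, say $\emptyset = T_0 \subsetneq T_1 \subsetneq \cdots \subsetneq T_m = S$. Put $B_l := T_l\setminus T_{l-1}\in\mathcal{M}$ for $l=1,\dots,m$; these are nonempty, pairwise disjoint, and partition $S$. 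The crucial structural step is to show that \emph{for every $Q\in\mathcal{M}$ the set $Q\cap S$ is a union of some of the $B_l$}: the family of such $Q$ is a $\sigma$-algebra (closed under complements because $\{B_l\}$ partitions $S$, closed under countable unions, and containing $\Omega$ since $\bigcup_l B_l = S$) which contains $\mathcal{E}$ (each $E\cap S$ equals some $T_j = B_1\cup\cdots\cup B_j$), hence contains $\sigma(\mathcal{E}) = \mathcal{M}$; write $Q\cap S = \bigcup_{l\in F_Q}B_l$. Next, by Lemma \ref{lem:gpvm_props} the chain $0 = J(T_0)\leq J(T_1)\leq\cdots\leq J(T_m) = I$ is an increasing chain of projections, so each $P_l := J(T_l) - J(T_{l-1})$ is again a projection (for comparable projections $P\leq Q$ one has $PQ=QP=P$, whence $(Q-P)^2 = Q-P$ and $(Q-P)^* = Q-P$), the $P_l$ are pairwise orthogonal (expand $P_lP_{l'}$ for $l<l'$ and use comparability), and $\sum_{l=1}^m P_l = I$ telescopes. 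I would then define $K(Q) := \sum_{l\in F_Q}P_l$ and verify it is a PVM: $K(\Omega) = I$; for disjoint $Q,Q'$ one has $F_Q\cap F_{Q'}=\emptyset$, so $K(Q)\perp K(Q')$; for pairwise disjoint $Q_n$ with union $R$, $F_R$ is the disjoint union of the $F_{Q_n}$ (each $B_l$ meeting $R$ meets exactly one $Q_n$), so $K(R) = \sum_n K(Q_n)$ with only finitely many nonzero summands, giving countable additivity. Agreement on $\mathcal{E}$ is then immediate: $E\cap S = T_j$ forces $F_E = \{1,\dots,j\}$, so $K(E) = \sum_{l=1}^j P_l = J(T_j) = J(E\cap S) = J(E)$.

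I expect the main obstacle to be exactly the structural claim that every $Q\cap S$ is a union of the gaps $B_l$: this is what upgrades $K$ from the merely sub-additive behaviour of $J$ to genuine countable additivity, and it is the only point where both the finiteness of $S$ (to make $\mathcal{T}$ a finite chain with well-defined gaps) and the hypothesis that $\mathcal{E}$ \emph{generates} $\mathcal{M}$ (to propagate the union-of-gaps property from $\mathcal{E}$ to all of $\mathcal{M}$) are essential. Everything else is bookkeeping with finite chains of mutually orthogonal projections.
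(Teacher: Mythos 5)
Your proof is correct, but it takes a genuinely different route from the paper's. The paper obtains existence by verifying the hypothesis of Sikorski's extension theorem for Boolean $\sigma$-algebras: for any countable subfamily $\mathcal{E}_0$ of the chain and any assignment of signs, it collapses the resulting infinite meet of projections to $J(E_+\cap S)\wedge J(E_-\cap S)^\perp = J(E_+)\wedge J(E_-)^\perp$ with $E_+\subseteq E_-$ (using finiteness of $S$ to reduce to finitely many distinct terms), concludes this is $0$ by monotonicity, and then invokes the extension theorem as a black box; uniqueness is dispatched in one line by the same ``agree on a generating chain'' observation you formalize with Dynkin's lemma. Your argument instead constructs the extension explicitly: the trace chain $\emptyset = T_0\subsetneq\cdots\subsetneq T_m=S$ is finite, every $Q\cap S$ is a union of the gaps $B_l$ (your $\sigma$-algebra argument for this is the right way to propagate the property from $\mathcal{E}$ to $\mathcal{M}$), and $K(Q):=\sum_{l\in F_Q}\bigl(J(T_l)-J(T_{l-1})\bigr)$ is manifestly a PVM agreeing with $J$ on $\mathcal{E}$. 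I checked the details and see no gap: the $B_l$ are nonempty so the index sets $F_Q$ are well defined and disjointness of the $Q_n$ forces disjointness of the $F_{Q_n}$, which is exactly what upgrades sub-additivity to additivity. What your approach buys is self-containedness (no appeal to Sikorski) and an explicit formula showing the extension is determined by the finitely many jump projections of $J$ along the restricted chain --- which also makes the role of both hypotheses (finiteness of $S$, and $\mathcal{E}$ generating $\mathcal{M}$) completely transparent; what the paper's approach buys is brevity and a template that would apply in settings where no such explicit finite decomposition is available. Both arguments use only monotonicity of $J$ (Lemma \ref{lem:gpvm_props}) plus the concentration hypothesis $J(Q\cap S)=J(Q)$.
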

\begin{proof}
The result will follow from Sikorski's extension theorem --- first, let $\alpha: \mathcal{E} \to \{-1,1\}$ be any set map, and for any $E \in \mathcal{M}$ define $1 \cdot E := E$ and $-1 \cdot E := E^c$, while for any $P \in \mathcal{L}_{\mathcal{H}}$ define $1 \cdot P := P$ and $-1 \cdot P := P^\perp$.    We will show that for any countable $\mathcal{E}_0 \subseteq \mathcal{E}$, such that 
\begin{equation}
\bigcap_{E \in \mathcal{E}_0} \alpha(E) \cdot E = \emptyset,
\end{equation}
we have that
\begin{equation}
\bigwedge_{E \in \mathcal{E}_0} \alpha(E) \cdot J(E) = 0.
\end{equation}
First, let $\mathcal{E}_+ := \alpha^{-1} (\{ 1\} ) \cap \mathcal{E}_0$ and ${\mathcal{E}_- := \alpha^{-1}(\{ -1 \} ) \cap \mathcal{E}_0}$.  Since $\mathcal{E}$ is a generating chain for $\mathcal{M}$, we have that $\bigcap_{E \in \mathcal{E}_+} E = E_+$ as well as $\bigcup_{E \in \mathcal{E}_-} E = E_-$ for some $E_+, E_- \in \mathcal{M}$.  Then
\begin{align}
\bigcap_{E \in \mathcal{E}_0} \alpha (E) \cdot E & = \bigcap_{E \in \mathcal{E}_+}  E \cap \bigcap_{E \in \mathcal{E}_-} E^c \nonumber \\
& = E_+ \cap E_-^c \nonumber \\
& = \emptyset,
\end{align}
so that $E_+ \subseteq E_-$.  Now, for any $Q \in \mathcal{M}$, we have (by assumption) that $J(Q \cap S) = J(Q)$.  Using this, we have
\begin{align}
\bigwedge_{E \in \mathcal{E}_0} \alpha (E) \cdot J(E) & = \bigwedge_{E \in \mathcal{E}_+}  J(E) \wedge \bigwedge_{E \in \mathcal{E}_-} J( E)^\perp \nonumber \\
& = \bigwedge_{E \in \mathcal{E}_+}  J(E \cap S) \wedge \bigwedge_{E \in \mathcal{E}_-} J( E \cap S )^\perp \nonumber \\
& =  J ( E_+ \cap S ) \wedge J( E_- \cap S)^\perp  \nonumber \\
& = J( E_+ ) \wedge J( E_- )^\perp  , \label{eq:jsig}
\end{align}
where the second to last equality holds because $J$ is monotonic and $S$ is a finite set, so that $ \{ E \cap S \ : \ E \in \mathcal{E}_0 \}$ is a finite set as well.  Now, since $E_+ \subseteq E_-$, we must have that $J(E_+) \leq J( E_-)$, and hence that $J (E_+) \perp J( E_-) ^\perp$, which gives (combining with equation (\ref{eq:jsig}))
\begin{equation}
\bigwedge_{E \in \mathcal{E}_0} \alpha (E) \cdot J(E) = 0.
\end{equation}
The result then follows directly from (one version of) the Sikorski extension theorem (theorem 34.1 in \cite{Sikorski}).  Uniqueness follows trivially from the fact that any two such extensions must agree on a generating chain, and hence must be equal. 
\end{proof}

\begin{corol}
\label{cor:chain_works}
Let $\mathcal{E}$ be a generating chain, let $A,B$ be diagonalizable PVMs with finite spectra, and let $f:\mathbb{R}^2 \to \mathbb{R}$ be Borel measurable.  Then $f(A,B)|_{\mathcal{E}}$ uniquely extends to a PVM. 
\end{corol}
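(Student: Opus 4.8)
The plan is to deduce this immediately from Theorem \ref{thm:chain_works_J} applied to the gPVM $J := f(A,B) = J_{AB} \circ f^{-1}$ on the measurable space $(\mathbb{R}, \mathcal{B}(\mathbb{R}))$, taking $\mathcal{M} = \mathcal{B}(\mathbb{R})$ and letting $\mathcal{E}$ be the given generating chain. The only point that requires verification is the extra hypothesis of that theorem, namely the existence of a \emph{finite} set $S \in \mathcal{B}(\mathbb{R})$ for which $J(Q \cap S) = J(Q)$ holds for every $Q \in \mathcal{B}(\mathbb{R})$.

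First I would note that $J_{AB}$ is a gPVM on $\mathbb{R}^2$ by Theorem \ref{thm:joint_pvim}, and that $f$ is Borel measurable, so by Theorem \ref{thm:f_pvm} the composite $f(A,B) = J_{AB} \circ f^{-1}$ is a gPVM on $(\mathbb{R}, \mathcal{B}(\mathbb{R}))$. Next I would set $S := f(\sigma(A), \sigma(B))$. Since $A$ and $B$ are diagonalizable with finite spectra, the sets $\sigma(A)$ and $\sigma(B)$ are finite, so their image $S$ under $f$ is a finite (hence Borel) subset of $\mathbb{R}$. Lemma \ref{lem:smsets} then supplies precisely the required identity $f(A,B)(R \cap S) = f(A,B)(R)$ for all $R \in \mathcal{B}(\mathbb{R})$. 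With this $S$ in hand, all hypotheses of Theorem \ref{thm:chain_works_J} are met, and we conclude that $f(A,B)|_{\mathcal{E}}$ uniquely extends to a PVM on $(\mathbb{R}, \mathcal{B}(\mathbb{R}))$, which is the assertion.

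There is no substantive obstacle here; the argument is essentially a matter of assembling the preceding results in the correct order. The one point I would take care to emphasize is why the finite-spectrum assumption on $A$ and $B$ (rather than mere diagonalizability) is exactly what makes $S$ finite: without it, $f(\sigma(A),\sigma(B))$ could be a countably infinite set, and Theorem \ref{thm:chain_works_J} as stated would no longer apply.
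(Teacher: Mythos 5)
Your proposal is correct and follows exactly the paper's own route: apply Theorem \ref{thm:chain_works_J} to the gPVM $f(A,B)$ with $S = f(\sigma(A),\sigma(B))$, which is finite by the finite-spectra hypothesis, and use Lemma \ref{lem:smsets} to verify $f(A,B)(Q \cap S) = f(A,B)(Q)$. Your additional remark on why finiteness of the spectra (and not mere diagonalizability) is needed is a helpful clarification but does not change the argument.
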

\begin{proof}
This follows directly from the above theorem and Lemma \ref{lem:smsets}, since $f(A,B)$ is a gPVM and we can take $S = f \big( \sigma(A), \sigma(B) \big)$.
\end{proof}

Given two diagonalizable PVMs $A$ and $B$ with finite spectra, a generating chain $\mathcal{E}$, and a Borel measurable function $f:\mathbb{R}^2 \to \mathbb{R}$, we will denote the PVM agreeing with $f(A,B)|_{\mathcal{E}}$ (provided by corollary \ref{cor:chain_works} above) as $f_{\mathcal{E}}(A,B)$.  Note that $f_{\mathcal{E}}(A,B)(Q) = 0$ for all $Q \subseteq f(\sigma(A), \sigma(B))^c$.  We also have the following result which goes beyond the finite spectra case.

\begin{thrm}
Let $A,B$ be diagonalizable PVMs, and let $f:\mathbb{R}^2 \to \mathbb{R}$ be continuous.  Then
\begin{equation}
E_\lambda := J_{AB} \circ f^{-1} \big((- \infty, \lambda] \big)
\end{equation}
is a spectral family of projectors on $\mathcal{H}$. 
\end{thrm}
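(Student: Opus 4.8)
The plan is to verify directly the three defining properties of a spectral family of projectors in the sense of Section \ref{sec:kapPVMs} --- (i) $E_\lambda \leq E_\mu$ whenever $\lambda \leq \mu$, (ii) $\lim_{\lambda \to -\infty} E_\lambda = 0$, and (iii) $\lim_{\lambda \to +\infty} E_\lambda = I$ --- using only facts about $J_{AB}$ already established. First, each $E_\lambda$ is indeed a projection operator: since $f$ is continuous it is Borel measurable, so $f^{-1}\big((-\infty,\lambda]\big) \in \mathcal{B}(\mathbb{R}^2)$, and $J_{AB}$ takes values in $\mathcal{L}_{\mathcal{H}}$. For monotonicity, note that $f(A,B) := J_{AB} \circ f^{-1}$ is a gPVM by Theorem \ref{thm:f_pvm} (applied to the gPVM $J_{AB}$ of Theorem \ref{thm:joint_pvim}); since $(-\infty,\lambda] \subseteq (-\infty,\mu]$ when $\lambda \leq \mu$, property 2 of Lemma \ref{lem:gpvm_props} applied to $f(A,B)$ gives $E_\lambda = f(A,B)\big((-\infty,\lambda]\big) \leq f(A,B)\big((-\infty,\mu]\big) = E_\mu$.

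For the two limits I would exploit compactness of $\sigma(A)$ and $\sigma(B)$ (which holds for diagonalizable PVMs); this is precisely where the continuity hypothesis on $f$ is used. Since $\sigma(A) \times \sigma(B)$ is a nonempty compact set, $f$ restricted to it attains a minimum and a maximum, so that $m := \min f\big(\sigma(A),\sigma(B)\big)$ and $M := \max f\big(\sigma(A),\sigma(B)\big)$ are finite. If $\lambda < m$, then $f^{-1}\big((-\infty,\lambda]\big) \cap \big(\sigma(A) \times \sigma(B)\big) = \emptyset$, so the second assertion of Lemma \ref{lem:spectrum} gives $E_\lambda = J_{AB}\Big( f^{-1}\big((-\infty,\lambda]\big) \cap \big(\sigma(A) \times \sigma(B)\big) \Big) = J_{AB}(\emptyset) = 0$. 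If $\lambda \geq M$, then $\sigma(A) \times \sigma(B) \subseteq f^{-1}\big((-\infty,\lambda]\big)$, so by monotonicity and the definition of $J_{AB}$ we get $E_\lambda \geq J_{AB}\big(\sigma(A) \times \sigma(B)\big) \geq A(\sigma(A)) \wedge B(\sigma(B)) = I$, using that $A(\sigma(A)) = A(\mathbb{R}) = I$ for a diagonalizable PVM (and likewise for $B$). Hence $E_\lambda = 0$ for all $\lambda < m$ and $E_\lambda = I$ for all $\lambda \geq M$; the family is eventually constant at each end, so limits (ii) and (iii) hold trivially (in the strong operator topology).

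I expect no serious obstacle: once Theorems \ref{thm:joint_pvim} and \ref{thm:f_pvm} and Lemma \ref{lem:spectrum} are in hand, the argument is bookkeeping. The one conceptual point worth stressing is that the failure of $J_{AB}$ to be countably additive plays no role here --- only monotonicity of the gPVM $f(A,B)$ is needed, together with the fact that $E_\lambda$ stabilizes to $0$ below $\min f\big(\sigma(A),\sigma(B)\big)$ and to $I$ above $\max f\big(\sigma(A),\sigma(B)\big)$; continuity of $f$ enters solely to make the image $f\big(\sigma(A),\sigma(B)\big)$ bounded (indeed compact). If one wants the right-continuous spectral family standing in $1$--$1$ correspondence with a bounded self-adjoint operator, one may replace $E_\lambda$ by $\hat E_\lambda := \lim_{\lambda' \to \lambda^+} E_{\lambda'}$ as discussed in Section \ref{sec:kapPVMs}, without affecting anything above.
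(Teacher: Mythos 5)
Your proposal is correct and follows essentially the same route as the paper's proof: monotonicity of the gPVM $f(A,B)=J_{AB}\circ f^{-1}$ gives property (i), and compactness of $\sigma(A)\times\sigma(B)$ together with continuity of $f$ and Lemma \ref{lem:spectrum} shows $E_\lambda$ is eventually $0$ below $m=\min f(\sigma(A),\sigma(B))$ and eventually $I$ above $M=\max f(\sigma(A),\sigma(B))$, so the limits (ii) and (iii) hold trivially. The only cosmetic difference is that you invoke the second assertion of Lemma \ref{lem:spectrum} for the lower limit where the paper computes the join over an empty index set directly; the substance is identical.
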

\begin{proof}
First note that $E_\lambda$ is clearly a projection operator for all $\lambda$, and that $\lambda \mapsto E_\lambda$ is obviously a monotone map (i.e.~$\lambda_1 \leq \lambda_2$ implies $E_{\lambda_1} \leq E_{\lambda_2}$). Next we show that $\lim_{\lambda \to -\infty} E_\lambda = 0$ and $\lim_{\lambda \to \infty} E_\lambda = I$.  Since $f$ is continuous and $A$ and $B$ are diagonalizable, ${ f(\sigma(A), \sigma(B)) }$ is a compact subset of $\mathbb{R}$, so let $m$ denote the minimum and $M$ the maximum of this set.  By Lemma \ref{lem:spectrum} we immediately see that
\begin{align}
E_{m-1} & =  \hspace{-10pt} \bigvee_{\substack{R_1 \times R_2 \subseteq f^{-1}((-\infty, m-1]) \\ R_1,R_2 \in \mathcal{B}(\mathbb{R}) \\ R_1 \subseteq \sigma(A), R_2 \subseteq \sigma(B)}} \hspace{-20pt} A(R_1) \wedge B(R_2) \nonumber \\
 & = \bigvee \emptyset = 0
\end{align} 
and 
\begin{align}
E_{M} & =  \hspace{-10pt} \bigvee_{\substack{R_1 \times R_2 \subseteq f^{-1}((-\infty, M]) \\ R_1,R_2 \in \mathcal{B}(\mathbb{R}) \\ R_1 \subseteq \sigma(A), R_2 \subseteq \sigma(B)}} \hspace{-20pt} A(R_1) \wedge B(R_2) \nonumber \\
 & \geq A(\sigma(A)) \wedge B(\sigma (B)) = I,
\end{align} 
and since $E_\lambda$ is an increasing function of $\lambda$, this demonstrates the desired property.
\end{proof}

The content of the above lemma is that for $A,B$ diagonalizable (but not necessarily with finite spectra), for the particular generating chain 
\begin{equation}
\label{eq:intervals}
\mathcal{E}^{\star} := \{ ( -\infty, \lambda ] \ : \ \lambda \in \mathbb{R} \},
\end{equation}
we can construct a PVM which agrees with $f(A,B)$ on $\mathcal{E}^{\star}$.  In what follows, for $A,B$ diagonalizable PVMs, we will call a generating chain $\mathcal{E}$ \emph{appropriate} if there is a PVM $f_{\mathcal{E}}(A,B)$ which agrees with $f(A,B)$ on $\mathcal{E}$.  It is easy to see that any such PVM must be unique.

We now show that the assumption that the generating set is a chain is necessary in order to construct PVMs.

\begin{lemmy}
Let $\mathcal{H}$ be a two-dimensional Hilbert space, and let $A = \sigma_x$ and $B = \sigma_y$ (Pauli matrices), and assume that $\mathcal{E}$ $\sigma$-generates $\mathcal{B}(\mathbb{R})$, but that $\mathcal{E}$ is not a chain under $\subseteq$.  Then there exists a continuous (and hence Borel measurable) $f:\mathbb{R}^2 \to \mathbb{R}$ such that $J_{AB} \circ f^{-1}|_{\mathcal{E}}$ does not extend to a PVM with its spectra contained in $f(\sigma(A), \sigma(B))$.
\end{lemmy}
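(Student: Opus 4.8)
The plan is to exploit the fact (Lemma~\ref{lem:spectrum}) that $J_{AB}(Q)$ depends only on $Q \cap \bigl(\sigma(A)\times\sigma(B)\bigr)$, together with the explicit description~(\ref{eq:JAB-noncommuting}): for $\dim\mathcal{H}=2$ and $[A,B]\neq 0$, any $Q$ whose intersection with $\sigma(A)\times\sigma(B)$ is one of the two ``diagonal'' pairs $\{(a_+,b_+),(a_-,b_-)\}$ or $\{(a_+,b_-),(a_-,b_+)\}$ satisfies $J_{AB}(Q)=0$. For $A=\sigma_x$, $B=\sigma_y$ we have $\sigma(A)=\sigma(B)=\{1,-1\}$, so $\sigma(A)\times\sigma(B)$ consists of the four points $(\pm 1,\pm 1)$. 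Since $\mathcal{E}$ is not a chain, fix $E_1,E_2\in\mathcal{E}$ that are incomparable under $\subseteq$, and choose $x_1\in E_1\setminus E_2$ and $x_2\in E_2\setminus E_1$; note $x_1\neq x_2$. I would then choose $f$ to collapse each diagonal of $\sigma(A)\times\sigma(B)$ onto one of these points.

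Concretely, take the continuous (indeed polynomial) function $f(x,y):=\tfrac{x_1+x_2}{2}+\tfrac{x_1-x_2}{2}\,xy$, so that $f(1,1)=f(-1,-1)=x_1$ and $f(1,-1)=f(-1,1)=x_2$, hence $f(\sigma(A),\sigma(B))=\{x_1,x_2\}$. Because $x_1\in E_1$ and $x_2\notin E_1$, we get $f^{-1}(E_1)\cap\bigl(\sigma(A)\times\sigma(B)\bigr)=\{(1,1),(-1,-1)\}$, and because $x_2\in E_2$ and $x_1\notin E_2$, we get $f^{-1}(E_2)\cap\bigl(\sigma(A)\times\sigma(B)\bigr)=\{(1,-1),(-1,1)\}$. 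By Lemma~\ref{lem:spectrum} together with~(\ref{eq:JAB-noncommuting}), both of these ``diagonal'' sets are sent to $0$ by $J_{AB}$, i.e.\ $\bigl(J_{AB}\circ f^{-1}\bigr)(E_1)=\bigl(J_{AB}\circ f^{-1}\bigr)(E_2)=0$. (The set $f^{-1}(E_i)$ may be complicated away from the four corner points, but this is irrelevant since $J_{AB}$ sees only the corners.)

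Suppose, for contradiction, that there is a PVM $K$ on $(\mathbb{R},\mathcal{B}(\mathbb{R}))$ with $K|_{\mathcal{E}}=J_{AB}\circ f^{-1}|_{\mathcal{E}}$ and $\sigma(K)\subseteq f(\sigma(A),\sigma(B))=\{x_1,x_2\}$. Then $K(E_1)=K(E_2)=0$. Since $K$ is supported on its spectrum, $K(\{x_1,x_2\})=K(\mathbb{R})=I$; and since $x_1\in E_1$ and $x_2\in E_2$ we have $\{x_1,x_2\}\subseteq E_1\cup E_2$, so monotonicity of $K$ forces $K(E_1\cup E_2)=I$. On the other hand, writing $E_1\cup E_2=E_1\sqcup(E_2\setminus E_1)$ and using finite additivity and monotonicity, $K(E_1\cup E_2)=K(E_1)+K(E_2\setminus E_1)\leq K(E_1)+K(E_2)=0$. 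Hence $I=0$, contradicting $\dim\mathcal{H}=2$. Thus no such $K$ exists, which is the assertion. (The hypothesis that $\mathcal{E}$ $\sigma$-generates $\mathcal{B}(\mathbb{R})$ is not actually needed for this argument; only the incomparability of two elements of $\mathcal{E}$ is used.)

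I expect the only real obstacle to be the design of $f$. The tempting route is to arrange $J_{AB}\circ f^{-1}(E_1)$ and $J_{AB}\circ f^{-1}(E_2)$ to be the non-commuting projectors $P_\pm$ and $Q_\pm$, but realizing a ``row'' and a ``column'' of $\sigma(A)\times\sigma(B)$ as preimages of incomparable $\mathcal{E}$-sets generally fails --- it would in addition require $E_1\cap E_2\neq\emptyset$ and $E_1\cup E_2\neq\mathbb{R}$, which incomparability does not provide. Collapsing onto the two diagonals instead works from incomparability alone, precisely because $J_{AB}$ annihilates both diagonal pairs, and the contradiction is then extracted from the spectral-support constraint rather than from non-commutativity of the image projectors.
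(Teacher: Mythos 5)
Your proof is correct and follows essentially the same route as the paper's: pick incomparable $E_1,E_2$, separate points $x_1\in E_1\setminus E_2$ and $x_2\in E_2\setminus E_1$, build a continuous $f$ sending the two diagonals of $\sigma(A)\times\sigma(B)$ to $x_1$ and $x_2$ respectively (the paper invokes Tietze where you write an explicit polynomial, a harmless simplification), observe that $J_{AB}$ annihilates both diagonals so $J_{AB}\circ f^{-1}(E_1)=J_{AB}\circ f^{-1}(E_2)=0$, and contradict $K(\{x_1,x_2\})=I$. Your closing remark that only incomparability (not $\sigma$-generation) is used also matches the structure of the paper's argument.
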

\begin{proof}
Since $\mathcal{E}$ is not a chain, there exists some $E_1, E_2 \in \mathcal{E}$ such that there is some $\alpha \in E_1 \cap E_2^c$ and some $\beta \in E_2 \cap E_1^c$.  Note that $\sigma(A) \times \sigma(B) = \{ (1,1), (1,-1), (-1,1), (-1,-1) \}$.  By the Tietze extension theorem, there exists a continuous $f:\mathbb{R}^2 \to \mathbb{R}$ such that $f(1,1) = f(-1,-1) = \alpha$ and $f(-1,1) = f(1,-1) = \beta$.

We now prove the lemma by contradiction, so assume that there is a PVM $F$ extending $J_{AB} \circ f^{-1}$ which is such that $\sigma(F) \subseteq f(\sigma(A), \sigma(B)) = \{ \alpha, \beta \}$.  Then we must have that
\begin{equation}
 F(\{ \alpha, \beta \}) = I . 
\end{equation}
However, since $\beta \subseteq E_2^c$, we must have $E_2 \subseteq \{ \beta \}^c$, and hence $f^{-1}(E_2) \subseteq f^{-1}(\{ \beta \})^c$, and in particular $f^{-1} (E_2) \cap \big( \sigma(A) \times \sigma(B) \big) \subseteq \{ (1,1), (-1,-1) \}$.  From this we immediately see that 
\begin{align}
J_{AB} \circ f^{-1} (E_2) & = J_{AB} \Big( f^{-1} (E_2) \cap \big( \sigma(A) \times \sigma(B) \big) \Big) \nonumber \\
& \leq  J_{AB} \big( \{ (1,1), (-1,-1) \} \big) \nonumber \\
& \quad = 0. 
\end{align} 
A similar argument shows that $J_{AB} \circ f^{-1} (E_1) = 0$.  But, since $F$ is a PVM extending $J_{AB} \circ f^{-1}|_{\mathcal{E}}$, we also have
\begin{align}
 F( \{ \alpha, \beta \} ) & = F(\{ \alpha \}) \vee F( \{ \beta \}) \leq F (E_1) \vee F(E_2) \nonumber \\
& = J_{AB} \circ f^{-1} (E_1) \vee J_{AB} \circ f^{-1} (E_2) \nonumber \\
& = 0, 
\end{align}
which is the desired contradiction.
\end{proof}

We next prove our results about the properties of an observable $f_{\mathcal{E}}(A,B)$ from section \ref{sec:bp-functionalcalculus}.  We begin with a result which is useful in demonstrating these properties. 

\begin{lemmy}
\label{lem:unitary}
Let $A,B$ be diagonalizable PVMs, let $U$ be a unitary operator on $\mathcal{H}$, and let $Q \in \mathcal{B}(\mathbb{R})$.  Then $U J_{AB} (Q) U^\dag =  J_{ UAU^\dag, \, UBU^\dag} (Q)$.
\end{lemmy}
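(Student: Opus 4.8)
The plan is to unwind the definition of $J_{AB}$ in (\ref{def:joint_dist}) and push the unitary conjugation through each piece. First I would recall that for a PVM $A$ and unitary $U$, the map $R \mapsto U A(R) U^\dag$ is exactly the PVM $UAU^\dag$ (this is the standard behaviour of PVMs under unitary conjugation; it can be checked directly from the spectral family, since $UAU^\dag$ has spectral family $\{U E_\lambda U^\dag\}$). Likewise, conjugation by $U$ is a lattice automorphism of $\mathcal{L}_{\mathcal{H}}$: it sends projectors to projectors, preserves the order $\leq$, and hence preserves arbitrary joins and meets. So $U(P \wedge Q)U^\dag = (UPU^\dag)\wedge(UQU^\dag)$ and $U\big(\bigvee_j P_j\big)U^\dag = \bigvee_j (U P_j U^\dag)$ for any family $\{P_j\}$.

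Granting these two facts, the computation is a one-line chain of equalities. For any $Q \in \mathcal{B}(\mathbb{R}^2)$,
\begin{align*}
U J_{AB}(Q) U^\dag
&= U \Big( \bigvee_{\substack{R_1 \times R_2 \subseteq Q \\ R_1,R_2 \in \mathcal{B}(\mathbb{R})}} A(R_1) \wedge B(R_2) \Big) U^\dag \\
&= \bigvee_{\substack{R_1 \times R_2 \subseteq Q \\ R_1,R_2 \in \mathcal{B}(\mathbb{R})}} U\big(A(R_1) \wedge B(R_2)\big) U^\dag \\
&= \bigvee_{\substack{R_1 \times R_2 \subseteq Q \\ R_1,R_2 \in \mathcal{B}(\mathbb{R})}} \big(U A(R_1) U^\dag\big) \wedge \big(U B(R_2) U^\dag\big) \\
&= \bigvee_{\substack{R_1 \times R_2 \subseteq Q \\ R_1,R_2 \in \mathcal{B}(\mathbb{R})}} (UAU^\dag)(R_1) \wedge (UBU^\dag)(R_2) \\
&= J_{UAU^\dag,\, UBU^\dag}(Q),
\end{align*}
where the second equality uses preservation of joins, the third uses preservation of meets, and the fourth is the identification of the conjugated PVMs. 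The index set of the join is unchanged throughout since it depends only on $Q$, not on $A$ or $B$.

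The only step requiring any care — and hence the main obstacle, though it is a mild one — is justifying that conjugation by a unitary commutes with the infinite lattice operations $\bigvee$ and $\bigwedge$ in $\mathcal{L}_{\mathcal{H}}$. This follows because $P \mapsto UPU^\dag$ is an order-isomorphism of the complete lattice $\mathcal{L}_{\mathcal{H}}$ onto itself: it is clearly a bijection with inverse $P \mapsto U^\dag P U$, and $P \leq Q$ (i.e.\ $\mathrm{ran}\,P \subseteq \mathrm{ran}\,Q$) holds iff $UPU^\dag \leq UQU^\dag$ since $U$ maps $\mathrm{ran}\,P$ isometrically onto $\mathrm{ran}(UPU^\dag)$. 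An order-isomorphism automatically preserves all existing suprema and infima, which gives what is needed. I would state this as a brief remark (or invoke it as well known) rather than belabour it, and then the displayed chain of equalities completes the proof.
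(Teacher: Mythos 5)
Your proposal is correct and follows essentially the same route as the paper's proof: both unwind the definition of $J_{AB}$ and push the conjugation through the join and the meets, using the fact that $P \mapsto UPU^\dag$ preserves arbitrary joins and binary meets (the paper simply cites this as well known, whereas you sketch the order-isomorphism justification). No issues.
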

\begin{proof} 
As is well-known (see \cite{kalmbach}), for any $P,Q \in \mathcal{L}_{\mathcal{H}}$, we have that $U (P \wedge Q) U^\dag = (U P U^\dag) \wedge (U Q U^\dag)$, and also for any collection $\{P_j\}_{j \in J} \subseteq \mathcal{L}_{\mathcal{H}}$, where $J$ is any set, we have 
\begin{equation}
\bigvee_{j \in J} (U P_j U^\dag) = U \Big( \bigvee_{j \in J} P_j \Big) U^\dag .
\end{equation}
Using this, we compute.
\begin{align}
U J_{AB}(Q) U^\dag & =  U \Big( \hspace{-10pt} \bigvee_{\substack{R_1 \times R_2 \subseteq Q \\ R_1,R_2 \in \mathcal{M}}} \hspace{-10pt} A(R_1) \wedge B(R_2) \Big) U^\dag \nonumber \\
& = \hspace{-10pt} \bigvee_{\substack{R_1 \times R_2 \subseteq Q \\ R_1,R_2 \in \mathcal{M}}} \hspace{-10pt} U \big( A(R_1) \wedge B(R_2) \big) U^\dag \nonumber \\
& = \hspace{-10pt} \bigvee_{\substack{R_1 \times R_2 \subseteq Q \\ R_1,R_2 \in \mathcal{M}}} \hspace{-10pt}  \big( U A(R_1) U^\dag \wedge U B(R_2) U^\dag \big) \nonumber \\
& = J_{ UAU^\dag, \, UBU^\dag} (Q).
\end{align} 
\end{proof}


\begin{thrm}
\label{thm:f_intervals}
Let $A,B$ be diagonalizable PVMs, and let $\mathcal{E} = \mathcal{E}^{\star}$ (equation (\ref{eq:intervals})).  Furthermore let $f:\mathbb{R}^2 \to \mathbb{R}$ be continuous.  Then $f_{\mathcal{E}}(A,B)$ satisfies
\begin{enumerate}[(1)]
\item $\sigma(f_{\mathcal{E}}(A,B)) \subseteq f( \sigma(A) , \sigma(B) )$;
\item $f_{\mathcal{E}}(UAU^\dag, UBU^\dag) = U f_{\mathcal{E}}(A,B) U^\dag$ for any unitary operator $U$ on $\mathcal{H}$;
\item If $A \ket{\psi} = a \ket{\psi}$ and $B \ket{\psi} = b \ket{\psi}$ for some $\ket{\psi} \in \mathcal{H}$, then $f_{\mathcal{E}}(A,B) \ket{\psi} = f(a,b) \ket{\psi}$.
\end{enumerate}
\end{thrm}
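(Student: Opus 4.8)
The plan is to establish the three claims in turn, everything resting on two facts already proved: first, by the preceding theorem the family $\{E_\lambda\}_{\lambda\in\mathbb{R}}$ with $E_\lambda := J_{AB}\circ f^{-1}\big((-\infty,\lambda]\big)$ is a spectral family, so that $f_{\mathcal{E}}(A,B)$ (with $\mathcal{E}=\mathcal{E}^{\star}$) is precisely the standard PVM it determines, equivalently the \emph{unique} PVM agreeing with the gPVM $f(A,B)=J_{AB}\circ f^{-1}$ on the chain $\mathcal{E}^{\star}$; and second, Lemma \ref{lem:spectrum}, which lets us replace $f^{-1}(R)$ by $f^{-1}(R)\cap\big(\sigma(A)\times\sigma(B)\big)$ inside $J_{AB}$. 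I expect claim (1) to be the only step requiring genuine thought --- it is the one place where one must pass between the set-theoretic description of $f_{\mathcal{E}}(A,B)$ and the analytic notion of its spectrum --- while (2) and (3) are essentially bookkeeping via the uniqueness clause and Lemma \ref{lem:eigen} respectively.

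For (1), I would set $K:=f\big(\sigma(A),\sigma(B)\big)$, which is compact as the continuous image of the compact set $\sigma(A)\times\sigma(B)$, hence closed. Fix $\lambda_0\notin K$ and choose $\delta>0$ with $(\lambda_0-\delta,\lambda_0+\delta)\cap K=\emptyset$. For $\lambda_1\le\lambda_2$ in this interval and any $(a,b)\in\sigma(A)\times\sigma(B)$ we have $f(a,b)\notin(\lambda_0-\delta,\lambda_0+\delta)$, so $f(a,b)\le\lambda_1$ iff $f(a,b)\le\lambda_2$; hence $f^{-1}\big((-\infty,\lambda_1]\big)$ and $f^{-1}\big((-\infty,\lambda_2]\big)$ have the same intersection with $\sigma(A)\times\sigma(B)$, and Lemma \ref{lem:spectrum} gives $E_{\lambda_1}=E_{\lambda_2}$. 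Thus $\lambda\mapsto E_\lambda$ is locally constant at every point of $\mathbb{R}\setminus K$, which is exactly the statement that $\mathbb{R}\setminus K$ lies in the resolvent set of $f_{\mathcal{E}}(A,B)$, i.e. $\sigma\big(f_{\mathcal{E}}(A,B)\big)\subseteq K$.

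For (2), I would apply Lemma \ref{lem:unitary}: for every $R\in\mathcal{B}(\mathbb{R})$,
\[ U\,f(A,B)(R)\,U^\dag = U\,J_{AB}\big(f^{-1}(R)\big)\,U^\dag = J_{UAU^\dag,\,UBU^\dag}\big(f^{-1}(R)\big) = f\big(UAU^\dag,UBU^\dag\big)(R). \]
Now $U f_{\mathcal{E}}(A,B) U^\dag$ is again a PVM (unitary conjugation preserves all three PVM axioms), and on each $E\in\mathcal{E}^{\star}$ it equals $U f(A,B)(E) U^\dag = f\big(UAU^\dag,UBU^\dag\big)(E)$. Since $UAU^\dag$ and $UBU^\dag$ are diagonalizable and $f$ is continuous, the preceding theorem makes $\mathcal{E}^{\star}$ an appropriate chain for them too; uniqueness of a PVM agreeing with $f\big(UAU^\dag,UBU^\dag\big)$ on $\mathcal{E}^{\star}$ then forces $U f_{\mathcal{E}}(A,B) U^\dag = f_{\mathcal{E}}\big(UAU^\dag,UBU^\dag\big)$.

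For (3), assume $A\ket{\psi}=a\ket{\psi}$ and $B\ket{\psi}=b\ket{\psi}$. Since $(a,b)\in f^{-1}\big((-\infty,\lambda]\big)$ iff $f(a,b)\le\lambda$, applying Lemma \ref{lem:eigen} with $Q=f^{-1}\big((-\infty,\lambda]\big)$ yields $E_\lambda\ket{\psi}=\ket{\psi}$ whenever $\lambda\ge f(a,b)$ and $E_\lambda\ket{\psi}=0$ whenever $\lambda<f(a,b)$. Hence $f_{\mathcal{E}}(A,B)\big((-\infty,f(a,b)]\big)\ket{\psi}=\ket{\psi}$, while by countable additivity of the PVM $f_{\mathcal{E}}(A,B)$ we get $f_{\mathcal{E}}(A,B)\big((-\infty,f(a,b))\big)\ket{\psi}=\big(\bigvee_{n\in\mathbb{N}} E_{f(a,b)-1/n}\big)\ket{\psi}=0$. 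Subtracting gives $f_{\mathcal{E}}(A,B)\big(\{f(a,b)\}\big)\ket{\psi}=\ket{\psi}$, so $\ket{\psi}$ lies in the eigenspace of $f_{\mathcal{E}}(A,B)$ for the eigenvalue $f(a,b)$, i.e. $f_{\mathcal{E}}(A,B)\ket{\psi}=f(a,b)\ket{\psi}$.
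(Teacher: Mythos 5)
Your proof is correct and follows essentially the same route as the paper's: part (2) via Lemma \ref{lem:unitary} together with uniqueness of the PVM agreeing with $f(A,B)$ on $\mathcal{E}^{\star}$, part (3) via the behavior of $J_{AB}$ on common eigenvectors (Lemma \ref{lem:eigen}) applied to the sets $f^{-1}\big((-\infty,\lambda]\big)$, and part (1) from compactness of $\sigma(A)\times\sigma(B)$, continuity of $f$, and Lemma \ref{lem:spectrum}. The only cosmetic difference is that you argue (1) contrapositively --- local constancy of $\lambda\mapsto E_\lambda$ off the closed set $f\big(\sigma(A),\sigma(B)\big)$ --- whereas the paper takes $\lambda$ in the spectrum and extracts a convergent subsequence of points $(a_n,b_n)$; these are the same compactness argument read in opposite directions.
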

\begin{proof}
We begin with property 1 above.  Assume that $\lambda \in \sigma(f_{\mathcal{E}}(A,B))$, which is true if and only if the corresponding spectral family is not constant for every interval surrounding $\lambda$.  In this case we must have, for every $n \in \mathbb{N}$, that there exists some $R_1^n,R_2^n \in \mathcal{B}(\mathbb{R}) \cap \big( \sigma(A) \times \sigma(B) \big)$ with $R_1^n \times R_2^n \subseteq f^{-1}((-\infty, \lambda + \frac{1}{n}])$ but $R_1^n \times R_2^n \not \subseteq f^{-1}((-\infty, \lambda - \frac{1}{n}])$.  Hence, for each $n \in \mathbb{N}$ there exists some $(a_n, b_n) \in \sigma(A) \times \sigma(B)$ with $f(a_n,b_n) \in (\lambda - \frac{1}{n}, \lambda + \frac{1}{n}]$, so that $\lim_{n \to \infty} f(a_n,b_n) = \lambda$.  Of course, since $\sigma(A) \times \sigma(B)$ is compact, $(a_n,b_n)$ has a convergent subsequence, converging to some $(a,b) \in \sigma(A) \times \sigma(B)$, and since $f$ is continuous, $f(a,b) = \lambda$.

Property 2 above follows directly from Lemma \ref{lem:unitary}, since $f(A,B) = J_{AB} \circ f^{-1}$, and so for any $Q \in \mathcal{E}^{\star}$, we have
\begin{align}
f(U A U^\dag , U B U^\dag ) (Q) & = J_{U A U^\dag, \, U B U^\dag}  \circ f^{-1} (Q) \nonumber \\
& = J_{U A U^\dag, \, U B U^\dag } \big( f^{-1} (Q) \big) \nonumber \\
& = U \big( J_{AB} \big( f^{-1} (Q) \big) \big) U^\dag \nonumber \\
& = U f(A,B) (Q) U^\dag . 
\end{align}
The result then follows from the fact that $f_{\mathcal{E}}(A,B) (Q) = f(A,B) (Q)$ for all $Q \in \mathcal{E}$, that $\mathcal{E}$ generates $\mathcal{B}(\mathbb{R})$, and that conjugation by $U$ induces a $\sigma$-homomorphism on $\mathcal{L}_{\mathcal{H}}$. 

For property 3, given the hypothesis stated above, we must show that $f_{\mathcal{E}}(A,B)( \{f(a,b) \} ) \ket{\psi} = \ket{\psi}$.  Since $f_{\mathcal{E}}(A,B)$ is a PVM, this amounts to showing that $f(A,B)( (-\infty, f(a,b)] ) \ket{\psi} = \ket{\psi}$, while for any $n \in \mathbb{N}$, that $f(A,B)( (-\infty, f(a,b) - \frac{1}{n}] ) \ket{\psi} = 0$.  For the first statement, we have
\begin{align}
f(A,B) ( \{ f(a,b) \} ) & = J_{AB} \circ f^{-1} ( \{ f(a,b) \} ) \nonumber \\
& =  \hspace{-10pt} \bigvee_{\substack{R_1 \times R_2 \subseteq f^{-1} (f (a,b)) \\ R_1,R_2 \in \mathcal{M}}} \hspace{-10pt} A(R_1) \wedge B(R_2)  \nonumber \\
& \quad \geq A( \{ a \} ) \wedge B ( \{ b \} ) , \label{eq:fisbig}
\end{align}
since $\{ a \} , \{ b \} \in \mathcal{B}(\mathbb{R})$.  Now, by assumption we have both $A ( \{ a \} ) \ket{\psi} = \ket{\psi}$ and $B (\{ b \}) \ket{\psi} = \ket{\psi}$, and hence we also have $A (\{ a \} ) \wedge B( \{ b \} ) \ket{\psi} = \ket{\psi}$.  Then, using equation (\ref{eq:fisbig}) and the fact that $f(A,B)$ is a gPVM, we have that
\begin{align}
f(A,B)( (-\infty, f(a,b)] ) & \geq f(A,B) ( \{f(a,b) \} ) \nonumber \\
& \geq A( \{a \} ) \wedge B ( \{ b \} ) .
\end{align}
As such, it follows that $f(A,B) ( ( -\infty, f(a,b) ] ) \ket{\psi} = \ket{\psi}$.

Now, consider any $n \in \mathbb{N}$.  Since we have that ${ ( -\infty, f(a,b) - \frac{1}{n}] \cap \{ f(a,b) \} = \emptyset }$, and $f(A,B)$ is a gPVM, we have that $f(A,B) ( ( -\infty , f(a,b) - \frac{1}{n} ] ) \perp f(A,B) (\{ f(a,b) \} )$, and since $f(A,B) ( \{f(a,b) \} ) \ket{\psi} = \ket{\psi}$ by above argument, we must have $f(A,B) (( -\infty , f(a,b) - \frac{1}{n} ] ) \ket{\psi} = 0$ for any $n \in \mathbb{N}$.
\end{proof}

Analogs of the above results also hold in a slightly different context.

\begin{thrm}
\label{thm:f_nice}
Let $A,B$ be diagonalizable PVMs with finite spectra, and let $\mathcal{E}$ be a generating chain.  Furthermore let $f:\mathbb{R}^2 \to \mathbb{R}$ be Borel measurable.  Then $f_{\mathcal{E}}(A,B)$ satisfies
\begin{enumerate}[(1)]
\item $\sigma(f_{\mathcal{E}}(A,B)) \subseteq f( \sigma(A) , \sigma(B) )$;
\item $f_{\mathcal{E}}(UAU^\dag, UBU^\dag) = U f_{\mathcal{E}}(A,B) U^\dag$ for any unitary operator $U$ on $\mathcal{H}$;
\item If $A \ket{\psi} = a \ket{\psi}$ and $B \ket{\psi} = b \ket{\psi}$ for some $\ket{\psi} \in \mathcal{H}$, then $f_{\mathcal{E}}(A,B) \ket{\psi} = f(a,b) \ket{\psi}$.
\end{enumerate}
\end{thrm}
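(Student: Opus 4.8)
The plan is to follow the skeleton of the proof of Theorem~\ref{thm:f_intervals}, replacing every step that used the special chain $\mathcal{E}^\star$ by one that exploits instead the finiteness of $\sigma(A)$ and $\sigma(B)$. Throughout, write $S := f(\sigma(A),\sigma(B))$; since $\sigma(A)$ and $\sigma(B)$ are finite, $S$ is a finite (hence closed Borel) set, so Corollary~\ref{cor:chain_works} applies and $f_{\mathcal{E}}(A,B)$ is a genuine PVM which, as noted there, satisfies $f_{\mathcal{E}}(A,B)(Q)=0$ for every $Q\subseteq S^c$.

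\textbf{Property (1).} Taking $Q=S^c$ in the preceding remark gives $f_{\mathcal{E}}(A,B)(S^c)=0$, hence $f_{\mathcal{E}}(A,B)(S)=I$ by properties (i) and (iii) of a PVM. A standard PVM assigning $I$ to a finite set is diagonalizable with its spectrum (equal to its set of eigenvalues) contained in that set, so $\sigma(f_{\mathcal{E}}(A,B))\subseteq S=f(\sigma(A),\sigma(B))$.

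\textbf{Property (2).} This is verbatim the argument in Theorem~\ref{thm:f_intervals}: by Lemma~\ref{lem:unitary}, for every $Q\in\mathcal{B}(\mathbb{R})$ we have $f(UAU^\dag,UBU^\dag)(Q)=J_{UAU^\dag,UBU^\dag}\bigl(f^{-1}(Q)\bigr)=U\,J_{AB}\bigl(f^{-1}(Q)\bigr)\,U^\dag=U\,f(A,B)(Q)\,U^\dag$, and in particular this holds for all $Q\in\mathcal{E}$. Since conjugation by $U$ carries a PVM to a PVM and induces a $\sigma$-homomorphism of $\mathcal{L}_{\mathcal{H}}$, the map $Q\mapsto U\,f_{\mathcal{E}}(A,B)(Q)\,U^\dag$ is a PVM agreeing with $f(A,B)$ on the generating chain $\mathcal{E}$, so by the uniqueness clause of Theorem~\ref{thm:chain_works_J} it equals $f_{\mathcal{E}}(UAU^\dag,UBU^\dag)$.

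\textbf{Property (3).} Fix $\ket{\psi}$ with $A\ket{\psi}=a\ket{\psi}$ and $B\ket{\psi}=b\ket{\psi}$. Since $f_{\mathcal{E}}(A,B)$ is a PVM with finite spectrum contained in $S$, it is enough to show $f_{\mathcal{E}}(A,B)(\{f(a,b)\})\ket{\psi}=\ket{\psi}$. I will do this with a ``good sets'' argument. Let
\[
\mathcal{D}:=\bigl\{\, R\in\mathcal{B}(\mathbb{R})\;:\; f_{\mathcal{E}}(A,B)(R)\ket{\psi}=\ket{\psi}\text{ when } f(a,b)\in R,\ \text{ and } f_{\mathcal{E}}(A,B)(R)\ket{\psi}=0\text{ when } f(a,b)\notin R \,\bigr\}.
\]
First, $\mathcal{E}\subseteq\mathcal{D}$: for $E\in\mathcal{E}$ we have $f_{\mathcal{E}}(A,B)(E)=f(A,B)(E)=J_{AB}\bigl(f^{-1}(E)\bigr)$, and applying Lemma~\ref{lem:eigen} to $Q=f^{-1}(E)$ — noting that $(a,b)\in f^{-1}(E)$ exactly when $f(a,b)\in E$ — gives precisely the two required alternatives. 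Second, $\mathcal{D}$ is a $\sigma$-algebra: it contains $\mathbb{R}$; it is closed under complements because $f_{\mathcal{E}}(A,B)(R^c)=I-f_{\mathcal{E}}(A,B)(R)$; and it is closed under countable unions, using monotonicity of the PVM when $f(a,b)\in\bigcup_i R_i$ (some $R_{i_0}$ already fixes $\ket{\psi}$) and countable additivity after disjointification when $f(a,b)\notin\bigcup_i R_i$ (each piece annihilates $\ket{\psi}$). Since $\mathcal{E}$ generates $\mathcal{B}(\mathbb{R})$ it follows that $\mathcal{D}=\mathcal{B}(\mathbb{R})$; specializing to $R=\{f(a,b)\}$ finishes the proof.

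The main obstacle is property (3). In Theorem~\ref{thm:f_intervals} the chain $\mathcal{E}^\star$ lets one read the spectral family of $f_{\mathcal{E}}(A,B)$ directly off $f(A,B)\bigl((-\infty,\lambda]\bigr)$, but for a general generating chain one only controls $f_{\mathcal{E}}(A,B)$ on $\mathcal{E}$ itself, so the eigenvector behaviour must be propagated from the chain to all Borel sets — which is exactly what the good-sets/monotone-class argument above accomplishes. Finiteness of the spectra is what makes everything fit together: it guarantees $f_{\mathcal{E}}(A,B)$ exists by Corollary~\ref{cor:chain_works}, has finite spectrum, and vanishes off $S$, after which properties (1) and (2) are immediate.
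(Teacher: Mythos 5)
Your arguments for properties (2) and (3) are correct. Indeed, your proof of (3) takes a genuinely different route from the paper's: the paper writes $\{f(a,b)\}$ as a finite meet over the finite chain $\mathcal{E}' = \{E \cap S : E \in \mathcal{E}\}$ and applies Lemma \ref{lem:eigen} term by term, whereas your good-sets argument (showing $\mathcal{D}$ is a $\sigma$-algebra containing $\mathcal{E}$, hence all of $\mathcal{B}(\mathbb{R})$) is cleaner and, notably, never uses finiteness of the spectra. Your proof of (2) matches the paper's (which it omits as ``similar'').

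The gap is in property (1). The ``note'' following Corollary \ref{cor:chain_works} --- that $f_{\mathcal{E}}(A,B)(Q) = 0$ for all $Q \subseteq f(\sigma(A),\sigma(B))^c$ --- is not an independently available fact: it is logically equivalent to property (1) itself (take $Q = S^c$ in one direction; use $F(S^c)=0$ plus monotonicity in the other), and nothing before Theorem \ref{thm:f_nice} proves it. Its justification \emph{is} the proof of Theorem \ref{thm:f_nice}(1), so your argument for (1) is circular. Nor is the note easy to establish directly: $f_{\mathcal{E}}(A,B)$ is pinned down only on the chain $\mathcal{E}$, where Lemma \ref{lem:smsets} gives $f(A,B)(E) = f(A,B)(E \cap S)$, but off the chain one cannot simply ``restrict to $S$,'' because $R \mapsto f(A,B)(R \cap S)$ is not additive and hence cannot coincide with the PVM $f_{\mathcal{E}}(A,B)$ on all Borel sets (for $A = \sigma_x$, $B = \sigma_y$, and $f$ injective on $\sigma(A)\times\sigma(B)$, one has $\sum_{s \in S} J_{AB}(f^{-1}(\{s\})) = 0$ while $J_{AB}(f^{-1}(S)) = I$). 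The paper's proof of (1) supplies the missing work: assuming $\lambda \in \sigma(f_{\mathcal{E}}(A,B))$, it expresses $f_{\mathcal{E}}(A,B)(\{\lambda\}) \neq 0$ as a nonzero meet over the finite chain $\mathcal{E}'$, converts this via Lemma \ref{lem:gpvm_props}(4) into a set-theoretic non-inclusion, and then uses the total ordering of $\mathcal{E}$ together with the fact that a generating family must separate points to force $\lambda \in S$. Some argument of this kind (or an adaptation of your good-sets machinery) is needed; as written, property (1) is assumed rather than proved.
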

\begin{proof}
To reduce clutter, we define $S := f(\sigma(A), \sigma(B))$ and $\mathcal{E}' := \{ E \cap S \ : \ E \in \mathcal{E} \}$, and we note that since $A$ and $B$ have finite spectra, both $S$ and $\mathcal{E}'$ are finite sets.

First, we consider property 1, so assume that ${ \lambda \in \sigma \big( f_{\mathcal{E}}(A,B) \big) }$, so that 
\begin{equation}
f_{\mathcal{E}}(A,B) ( \{ \lambda \})  \neq 0.
\end{equation}
Then 
\begin{equation}
f_{\mathcal{E}} (A,B) ( \{ \lambda \} ) \leq f_{\mathcal{E}}(A,B) (E)
\end{equation}
for any $E \in \mathcal{E}$ such that $\lambda \in E$, and also
\begin{equation}
f_{\mathcal{E}} (A,B) ( \{ \lambda \} ) \leq f_{\mathcal{E}}(A,B) (E^c) = f(A,B)(E)^\perp
\end{equation}
for any $E \in \mathcal{E}$ such that $\lambda \in E^c$, since $f_{\mathcal{E}}(A,B)$ is a PVM (which agrees with $f(A,B)$ on elements of $\mathcal{E}$).  This then yields
\begin{equation}
\label{eq:nonzero}
\bigwedge_{ \substack{ E \in \mathcal{E} \\ \lambda \in E }} f(A,B)(E) \wedge \bigwedge_{ \substack{ E \in \mathcal{E} \\ \lambda \notin E }} f(A,B )(E)^\perp \neq 0.
\end{equation}
However, by Lemma \ref{lem:smsets}, we have that $f(A,B) (E \cap S) = f(A,B) (E)$.
This means that
\begin{equation}
\bigwedge_{ \substack{ E \in \mathcal{E}' \\ \lambda \in E }} f(A,B)(E) \wedge \bigwedge_{ \substack{ E \in \mathcal{E}' \\ \lambda \notin E }} f(A,B)(E)^\perp \neq 0.
\end{equation}
Then, since $\mathcal{E}'$ is a finite set, by property 4 in Lemma \ref{lem:gpvm_props}, we have
\begin{equation}
f(A,B) \Big( \bigcap_{ \substack{ E \in \mathcal{E}' \\ \lambda \in E }} E   \Big) \wedge  \Big[  f(A,B) \Big( \bigcup_{ \substack{ E  \in \mathcal{E}' \\ \lambda \notin E }} E  \Big) \Big]^\perp \neq 0,
\end{equation}
which means that 
\begin{equation}
\bigcap_{ \substack{ E \in \mathcal{E}' \\ \lambda \in E }}  E    \neq \bigcup_{ \substack{ E  \in \mathcal{E}' \\ \lambda \notin E }} E,
\end{equation}
or, equivalently
\begin{equation}
\bigcap_{ \substack{ E  \in \mathcal{E} \\ \lambda \in E }} ( E \cap S ) \neq \bigcup_{ \substack{ E  \in \mathcal{E} \\ \lambda \notin E }} (E \cap S).
\end{equation}
Of course, since $\mathcal{E}$ is a chain, if we have $E, \hat{E} \in \mathcal{E}$ with $\lambda \in E$ and $\lambda \notin \hat{E}$, then clearly $E \not \subseteq \hat{E}$, and so therefore $\hat{E} \subseteq E$.  For this reason we have that 
\begin{equation}
\bigcap_{ \substack{ E  \in \mathcal{E} \\ \lambda \in E }} ( E \cap S ) \supseteq \bigcup_{ \substack{ E  \in \mathcal{E} \\ \lambda \notin E }} (E \cap S),
\end{equation}
so that there must be some $\eta$ satisfying 
\begin{equation}
\eta \in \bigcap_{ \substack{ E  \in \mathcal{E} \\ \lambda \in E }} ( E \cap S ) \quad \textrm{but} \quad \eta \notin \bigcup_{ \substack{ E  \in \mathcal{E} \\ \lambda \notin E }} (E \cap S) = \Big(  \bigcup_{ \substack{ E  \in \mathcal{E} \\ \lambda \notin E }} E \Big) \cap S ,
\end{equation}
and hence, since $\eta \in S$, 
\begin{equation}
\eta \notin \bigcup_{ \substack{ E  \in \mathcal{E} \\ \lambda \notin E }} E , \quad \textrm{i.e.} \quad \eta \in \bigcap_{ \substack{ E  \in \mathcal{E} \\ \lambda \in E^c }} E^c.
\end{equation}
If $\lambda \notin S$, then clearly $\eta \neq \lambda$, but this means that, for all $E \in \mathcal{E}$, that $\eta \in E$ if and only if $\lambda \in E$.  However, a simple inductive argument then yields that $\mathcal{E}$ could not then generate $\mathcal{B}(\mathbb{R})$, and hence we must have that $\eta = \lambda$, so that $\lambda \in S$.

The proof of property 2 is similar to the analogous statement in Theorem \ref{thm:f_intervals}, so we omit it.

For property 3, we need to show that $f_{\mathcal{E}}(A,B)( \{f(a,b) \} ) \ket{\psi} = \ket{\psi}$.  Since $\mathcal{E}$ generates $\mathcal{B}(\mathbb{R})$, it is straightforward to show that $\mathcal{E}'$ generates $S$ as an algebra, and so we have that 
\begin{equation}
\{ f(a,b) \} = \Big( \bigcap_{ \substack{ E  \in \mathcal{E}' \\ f(a,b) \in E }} E \Big) \ \cap \ \Big( \bigcap_{ \substack{ E  \in \mathcal{E}' \\ f(a,b) \notin E }} E^c \Big).
\end{equation}
Hence, since $f_{\mathcal{E}}(A,B)$ is a PVM, using considerations similar to those in the proof of property 1, we have that 
\begin{align}
f_{\mathcal{E}} & (A,B)  ( \{f(a,b) \} ) = \nonumber \\
& \bigwedge_{ \substack{ E  \in \mathcal{E}' \\ f(a,b) \in E }} f(A,B)(E) \ \ \wedge \bigwedge_{ \substack{ E  \in \mathcal{E}' \\ f(a,b) \notin E }} f(A,B)(E)^\perp. \label{eq:f_eigen}
\end{align}
Now for any $E \in \mathcal{E}'$, we have $f(A,B) (E) = J_{AB} \circ f^{-1} (E)$ by definition, and also that $f(a,b) \in E$ if and only if $(a,b) \in f^{-1}(E)$.  Hence by Lemma \ref{lem:eigen}, we have that $f(A,B) (E) \ket{\psi} = \ket{\psi}$ for each $E$ with $f(a,b) \in E$, and also that $f(A,B)(E) \ket{\psi} = 0$ for each $E \in \mathcal{E}'$ with $f(a,b) \notin E$, which is to say that $f(A,B)(E)^\perp \ket{\psi} = \ket{\psi}$ for such $E$.  From these considerations, along with equation (\ref{eq:f_eigen}), we then deduce that 
\begin{equation}
f_{\mathcal{E}}  (A,B)  ( \{f(a,b) \} ) \ket{\psi} = \ket{\psi}.
\end{equation}
\end{proof}

We next consider the the properties which are used to establish the equality in expression (\ref{genBCH}) in section \ref{sec:bp-functionalcalculus}.

\begin{lemmy}
\label{lem:composition}
Let $A,B$ be diagonalizable PVMs, let $\mathcal{E}$ be an appropriate generating chain, and let $f:\mathbb{R}^2 \to \mathbb{R}$ and $g:\mathbb{R} \to \mathbb{R}$ be Borel measurable, with $g$ satisfying $g^{-1}(E) \in \mathcal{E}$ for all $E \in \mathcal{E}$.  Then
\begin{equation}
(g \circ f)_{\mathcal{E}}(A,B) = g ( f_{\mathcal{E}} (A,B)) .
\end{equation}
\end{lemmy}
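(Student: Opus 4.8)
The plan is to recognize $g\big(f_{\mathcal{E}}(A,B)\big)$ --- which, for the PVM $C := f_{\mathcal{E}}(A,B)$, is by definition (the ordinary functional calculus of a single observable) the pushforward PVM $C\circ g^{-1}$ --- as a PVM that agrees with the gPVM $(g\circ f)(A,B) = J_{AB}\circ f^{-1}\circ g^{-1}$ on every element of the generating chain $\mathcal{E}$. The equality then follows immediately from the uniqueness of such a chain extension (the uniqueness clause of Theorem~\ref{thm:chain_works_J}, equivalently the uniqueness noted just after the definition of an appropriate generating chain).

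First I would record that, since $\mathcal{E}$ is appropriate for $f$, the map $C=f_{\mathcal{E}}(A,B)$ is a PVM on $(\mathbb{R},\mathcal{B}(\mathbb{R}))$ agreeing with $f(A,B)$ on $\mathcal{E}$, and then check that $C\circ g^{-1}$ is again a PVM: property~(1) holds since $g^{-1}(\mathbb{R})=\mathbb{R}$, property~(2) holds because $g^{-1}$ carries disjoint sets to disjoint sets, and countable additivity is inherited from $C$ because $g^{-1}$ commutes with countable disjoint unions. (One could instead invoke Theorem~\ref{thm:f_pvm} to see that $C\circ g^{-1}$ is a gPVM and merely supply the upgrade to full additivity.) Since $g\circ f$ is Borel measurable, $(g\circ f)(A,B)=J_{AB}\circ(g\circ f)^{-1}$ is a well-defined gPVM, again by Theorem~\ref{thm:f_pvm}.

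The crux is agreement on $\mathcal{E}$. Fix $E\in\mathcal{E}$; by hypothesis $g^{-1}(E)\in\mathcal{E}$, so, using that $C$ agrees with $f(A,B)=J_{AB}\circ f^{-1}$ on elements of $\mathcal{E}$ together with the identity $f^{-1}\circ g^{-1}=(g\circ f)^{-1}$,
\begin{align*}
\big(C\circ g^{-1}\big)(E) &= C\big(g^{-1}(E)\big) = f(A,B)\big(g^{-1}(E)\big) \\
&= J_{AB}\big(f^{-1}(g^{-1}(E))\big) = J_{AB}\big((g\circ f)^{-1}(E)\big) \\
&= (g\circ f)(A,B)(E).
\end{align*}
Thus $C\circ g^{-1}$ is a PVM agreeing with $(g\circ f)(A,B)$ on the generating chain $\mathcal{E}$; in particular $\mathcal{E}$ is appropriate for $g\circ f$, and the uniqueness of a PVM extending a gPVM from a generating chain forces $(g\circ f)_{\mathcal{E}}(A,B)=C\circ g^{-1}=g\big(f_{\mathcal{E}}(A,B)\big)$, which is the claim. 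The only genuinely essential hypothesis is $g^{-1}(E)\in\mathcal{E}$: without it, $C$ need not agree with $f(A,B)$ at $g^{-1}(E)$ and the middle equality above fails. Everything else is routine manipulation of preimages plus the already-established uniqueness, so I do not anticipate any real obstacle.
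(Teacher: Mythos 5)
Your proof is correct and is essentially the paper's own argument: the same chain of preimage identities $f^{-1}\circ g^{-1} = (g\circ f)^{-1}$ combined with $g^{-1}(E)\in\mathcal{E}$ establishes agreement of $g\big(f_{\mathcal{E}}(A,B)\big)$ and $(g\circ f)(A,B)$ on the generating chain, and uniqueness of the chain extension does the rest. If anything, you are slightly more careful than the paper, which tacitly assumes $\mathcal{E}$ is appropriate for $g\circ f$, whereas you exhibit $f_{\mathcal{E}}(A,B)\circ g^{-1}$ as the required PVM and thereby derive that appropriateness.
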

\begin{proof}
We know that for any $Q \in \mathcal{E}$ we have 
\begin{align}
[(g \circ f)_{\mathcal{E}}(A,B)] (Q) & = [J_{AB} \circ (g \circ f)^{-1}](Q) \nonumber\\
& = [J_{AB} \circ f^{-1}](g^{-1}(Q)) \nonumber \\ 
& = f_{\mathcal{E}}(A,B) \circ g^{-1} (Q) \nonumber \\
& = g(f_{\mathcal{E}}(A,B))(Q) .
\end{align}
\end{proof}

\begin{lemmy}
\label{lem:right_comp}
Let $f:\mathbb{R}^2 \to \mathbb{R}$ and $g_1,g_2:\mathbb{R} \to \mathbb{R}$ be Borel measurable functions, let $\mathcal{E}$ be an appropriate generating chain, and let $A,B$ be diagonalizable PVMs.  Then for $h:\mathbb{R}^2 \to \mathbb{R}$ defined by (for all $x,y \in \mathbb{R}$) $h(x,y) := f(g_1(x),g_2(y))$ (i.e.~ $h = f \circ (g_1 \times g_2)$) we have that
\begin{equation}
h_{\mathcal{E}}(A,B) = f_{\mathcal{E}}(g_1(A), g_2(B)).
\end{equation}
\end{lemmy}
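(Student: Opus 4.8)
The plan is to reduce the statement to the single gPVM identity
\[
J_{AB} \circ (g_1 \times g_2)^{-1} \;=\; J_{g_1(A),\, g_2(B)} \qquad \textrm{on } \mathcal{B}(\mathbb{R}^2),
\]
where $g_1(A) := A \circ g_1^{-1}$ and $g_2(B) := B \circ g_2^{-1}$ are the PVMs given by the ordinary functional calculus (these are genuine PVMs, being $\sigma$-homomorphisms pre-composed with a preimage map; note also that $g_1 \times g_2$ is measurable, so $(g_1 \times g_2)^{-1}(Q) \in \mathcal{B}(\mathbb{R}^2)$ for $Q \in \mathcal{B}(\mathbb{R}^2)$, as the paper implicitly assumes whenever it writes $f_{\mathcal{E}}(g_1(A), g_2(B))$). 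Granting this identity and using $h^{-1} = (g_1 \times g_2)^{-1} \circ f^{-1}$, one gets
$h(A,B) = J_{AB} \circ h^{-1} = \big( J_{AB} \circ (g_1\times g_2)^{-1} \big) \circ f^{-1} = J_{g_1(A), g_2(B)} \circ f^{-1} = f(g_1(A), g_2(B))$
as gPVMs on $\mathcal{B}(\mathbb{R})$. Once these two gPVMs are known to be equal, appropriateness of $\mathcal{E}$ for either of them is appropriateness for both, and by Theorem~\ref{thm:chain_works_J} (equivalently Corollary~\ref{cor:chain_works}) each of $h_{\mathcal{E}}(A,B)$ and $f_{\mathcal{E}}(g_1(A), g_2(B))$ is \emph{the} unique PVM agreeing with this common gPVM on $\mathcal{E}$; hence they coincide, which is the claim.

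So everything reduces to the gPVM identity, which I would prove by the two inequalities, fixing $Q \in \mathcal{B}(\mathbb{R}^2)$ and expanding both sides via (\ref{def:joint_dist}). For ``$\leq$'': if $R_1, R_2 \in \mathcal{B}(\mathbb{R})$ satisfy $R_1 \times R_2 \subseteq Q$, then $g_1^{-1}(R_1), g_2^{-1}(R_2) \in \mathcal{B}(\mathbb{R})$ and $g_1^{-1}(R_1) \times g_2^{-1}(R_2) = (g_1 \times g_2)^{-1}(R_1 \times R_2) \subseteq (g_1 \times g_2)^{-1}(Q)$, so the term $g_1(A)(R_1) \wedge g_2(B)(R_2) = A\big(g_1^{-1}(R_1)\big) \wedge B\big(g_2^{-1}(R_2)\big)$ occurs in the join defining $J_{AB}\big((g_1\times g_2)^{-1}(Q)\big)$; taking the join over all such $R_1 \times R_2$ gives $J_{g_1(A), g_2(B)}(Q) \leq J_{AB}\big((g_1\times g_2)^{-1}(Q)\big)$.

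The reverse inequality is where the one real obstacle sits: a Borel set $S_1 \subseteq \mathbb{R}$ need \emph{not} have Borel forward image $g_1(S_1)$, so the symmetric argument does not work directly, and this is exactly where diagonalizability is needed. I would first use that $A$, $B$ are diagonalizable, so $A(S_1) = A\big(S_1 \cap \sigma_p(A)\big)$ and $B(S_2) = B\big(S_2 \cap \sigma_p(B)\big)$; combined with Lemma~\ref{lem:spectrum} this lets the join defining $J_{AB}\big((g_1\times g_2)^{-1}(Q)\big)$ be restricted to terms with $S_1 \subseteq \sigma_p(A)$ and $S_2 \subseteq \sigma_p(B)$. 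Since $\sigma_p(A)$ and $\sigma_p(B)$ are countable, the images $R_1 := g_1(S_1)$ and $R_2 := g_2(S_2)$ are countable, hence Borel. Because $S_1 \times S_2 \subseteq (g_1 \times g_2)^{-1}(Q)$, every $(x,y) \in S_1 \times S_2$ has $(g_1(x), g_2(y)) \in Q$, i.e. $R_1 \times R_2 = (g_1 \times g_2)(S_1 \times S_2) \subseteq Q$; and $S_i \subseteq g_i^{-1}(g_i(S_i)) = g_i^{-1}(R_i)$, so monotonicity of PVMs (Lemma~\ref{lem:gpvm_props}(2)) gives $A(S_1) \leq A\big(g_1^{-1}(R_1)\big) = g_1(A)(R_1)$ and likewise $B(S_2) \leq g_2(B)(R_2)$. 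Hence $A(S_1) \wedge B(S_2) \leq g_1(A)(R_1) \wedge g_2(B)(R_2) \leq J_{g_1(A), g_2(B)}(Q)$, and taking the join over all admissible $S_1 \times S_2$ yields $J_{AB}\big((g_1\times g_2)^{-1}(Q)\big) \leq J_{g_1(A), g_2(B)}(Q)$. Combining the two inequalities proves the identity, and hence the lemma.
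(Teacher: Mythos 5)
Your proof is correct, and its computational core --- the double inequality between joins, the restriction to $\sigma_p(A)\times\sigma_p(B)$ via diagonalizability, and the containments $S \subseteq g^{-1}(g(S))$ and $g(g^{-1}(R)) \subseteq R$ --- is the same as in the paper's argument. The organizational difference is that you factor everything through the intermediate identity $J_{AB}\circ (g_1\times g_2)^{-1} = J_{g_1(A),\,g_2(B)}$ on all of $\mathcal{B}(\mathbb{R}^2)$, a statement that makes no reference to $f$ or to $\mathcal{E}$, and then obtain the lemma formally by composing with $f^{-1}$ and invoking uniqueness of the PVM extension from Theorem \ref{thm:chain_works_J}; the paper instead carries $f^{-1}(Q)$ through the entire join manipulation for $Q\in\mathcal{E}$. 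Your packaging buys a reusable fact (the generalized joint observable intertwines with the ordinary one-variable functional calculus applied to each margin), and it also forces you to confront, and correctly resolve via countability of the point spectra, the question of whether the forward images $g_i(S_i)$ are Borel --- a point the paper's proof passes over silently when it writes joins indexed by sets of the form $g_1(R_1)\times g_2(R_2)$. The one thing worth flagging is not a gap in your argument but an ambiguity you inherit from the statement itself: for $f_{\mathcal{E}}(g_1(A),g_2(B))$ to be defined in the paper's framework one implicitly needs $g_1(A)$ and $g_2(B)$ to be diagonalizable (in particular bounded), which is not automatic for arbitrary Borel $g_i$ on a countable point spectrum; neither your proof nor the paper's addresses this, and your observation that equality of the two gPVMs makes appropriateness of $\mathcal{E}$ for one equivalent to appropriateness for the other is the right way to handle the extension step.
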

\begin{proof}
Since $A,B$ have pure point spectra, we have (for any $Q \in \mathcal{E}$)
\begin{align}
h(A,B) (Q) & = \hspace{1pt} \bigvee_{\substack{R_1 \times R_2 \subseteq h^{-1} (Q) \\ R_1\times R_2 \in \mathcal{B}(\mathbb{R}^2)  }} \hspace{-3pt} A(R_1) \wedge B(R_2) \nonumber \\ 
& = \hspace{-10pt} \bigvee_{\substack{R_1 \times R_2 \subseteq h^{-1} (Q) \\ R_1\times R_2 \subseteq \sigma_p(A) \times \sigma_p(B)  }} \hspace{-13pt} A(R_1) \wedge B(R_2) \nonumber \\ 
& =  \hspace{-12pt} \bigvee_{\substack{g_1(R_1) \times g_2(R_2) \subseteq f^{-1} (Q) \\ R_1 \times R_2 \subseteq \sigma_p(A) \times \sigma_p(B) }} \hspace{-13pt} A(R_1) \wedge B(R_2) .
\end{align}
This follows from 
\begin{equation}
h^{-1}(Q)  = \{ (a,b) \ : \ f(g_1(a),g_2(b)) \in Q \},
\end{equation} so that $(a,b) \in h^{-1}(Q)$ if and only if $(g_1(a),g_2(b)) \in f^{-1}(Q)$, and hence $R_1 \times R_2 \subseteq h^{-1}(Q)$ if and only if $g_1(R_1) \times g_2(R_2) \subseteq f^{-1} (Q) $.

Now, for $R_1 \times R_2 \subseteq \sigma_p(A) \times \sigma_p(B)$ such that ${ g_1(R_1) \times g_2(R_2) \subseteq f^{-1} (Q) }$, we know that  ${ g_1(R_1) \subseteq g_1 (\sigma_p (A) ) }$ and ${ g_2(R_2) \subseteq g_2(\sigma_p(B))}$.  Hence, using that ${ R \subseteq ( g^{-1} \circ g ) (R) }$ we have
\begin{equation}
 h(A,B) (Q) \leq \hspace{-30pt} \bigvee_{\substack{S_1 \times S_2 \subseteq f^{-1} (Q) \\ S_1 \times S_2 \subseteq g_1(\sigma_p(A)) \times g_2(\sigma_p(B)) }} \hspace{-40pt} A(g_1^{-1}(S_1)) \wedge B(g_2^{-1}(S_2)).
\end{equation}
Also, for any $S_1 \times S_2 \subseteq g_1(\sigma_p(A)) \times g_2(\sigma_p(B))$, we clearly have $g_1^{-1}(S_1), g_2^{-1}(S_2) \in \mathcal{B}(\mathbb{R})$.  Additionally, since ${ (g \circ g^{-1}) (S) = S }$, we have
\begin{align}
\bigvee_{\substack{S_1 \times S_2 \subseteq f^{-1} (Q) \\ S_1 \times S_2 \subseteq g_1(\sigma_p(A)) \times g_2(\sigma_p(B)) }}   \hspace{-24pt} & A(g_1^{-1}(S_1)) \wedge B(g_2^{-1}(S_2)) \nonumber \\
& \leq \hspace{-20pt} \bigvee_{\substack{g_1(R_1) \times g_2(R_2) \subseteq f^{-1} (Q) \\ R_1 \times R_2 \in \mathcal{B}(\mathbb{R}^2) }} \hspace{-25pt} A(R_1) \wedge B(R_2) \nonumber \\
& \quad = h(A,B)(Q).
\end{align}
This gives that
\begin{equation}
\hspace{-40 pt} \bigvee_{\substack{S_1 \times S_2 \subseteq f^{-1} (Q) \\  \hspace{40pt} S_1 \times S_2 \subseteq g_1(\sigma_p(A)) \times g_2(\sigma_p(B)) }} \hspace{-55pt} A(g_1^{-1}(S_1)) \wedge B(g_2^{-1}(S_2)) = h(A,B)(Q)
\end{equation}
for all $Q \in \mathcal{E}$.  We also have that
\begin{align}
\bigvee_{\substack{S_1 \times S_2 \subseteq f^{-1} (Q) \\ S_1 \times S_2 \subseteq g_1(\sigma_p(A)) \times g_2(\sigma_p(B)) }} \hspace{-40pt} & A(g_1^{-1}(S_1)) \wedge B(g_2^{-1}(S_2)) \nonumber \\ 
& = \hspace{-30pt} \bigvee_{\substack{S_1 \times S_2 \subseteq f^{-1} (Q) \\ S_1 \times S_2 \subseteq \sigma_p(g_1(A)) \times \sigma_p(g_2(B)) }} \hspace{-35pt} g_1(A)(S_1) \wedge g_2(B)(S_2) \nonumber \\
& = f(g_1(A), g_2(B) ) (Q) ,
\end{align}
and since the above holds for all $Q \in \mathcal{E}$, and $\mathcal{E}$ generates $\mathcal{B}(\mathbb{R})$, we have
\begin{equation}
h_{\mathcal{E}}(A,B) = f_{\mathcal{E}}(g_1(A), g_2(B)).
\end{equation}
\end{proof}

Finally, after a useful lemma, we demonstrate that the spectral order (defined in equation (\ref{spectralorder})) is respected by addition (defined relative to the generating chain $\mathcal{E}^{\star}$).

\begin{lemmy}
\label{catfood}
Let $A,B$ be diagonalizable PVMs, and let $\pdot$ and $\tdot$ be as defined in section \ref{sec:bp-functionalcalculus}, where $\mathcal{E} = \mathcal{E}^{\star}$.  Then, for any $\lambda \in \mathbb{R}$, we have
\begin{equation}
(A \pdot B) \big( (-\infty, \lambda] \big) = \bigvee_{a + b = \lambda} A \big( (-\infty, a ] \big) \wedge B \big( (-\infty, b] \big) . 
\end{equation}
If, furthermore, $A$ and $B$ have no non-negative eigenvalues, we also have
\begin{equation}
(A \tdot B) \big( (-\infty, \lambda] \big) =  \bigvee_{\substack{ ab = \lambda \\ a,b \geq 0 }} A \big( (-\infty, a ] \big) \wedge B \big( (-\infty, b] \big) . 
\end{equation}
\end{lemmy}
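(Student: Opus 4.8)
The plan is to collapse both claimed identities to a computation of $J_{AB}$ on a single planar set and then prove that by a two-sided comparison of joins of projectors. For the additive identity, write $f(x,y)=x+y$; since $f$ is continuous the PVM $A\pdot B=f_{\mathcal{E}^{\star}}(A,B)$ exists and agrees with $f(A,B)=J_{AB}\circ f^{-1}$ on every member of $\mathcal{E}^{\star}$ (by the spectral-family construction established above), so
\[ (A\pdot B)\big((-\infty,\lambda]\big)=J_{AB}\big(f^{-1}((-\infty,\lambda])\big)=J_{AB}(H_\lambda),\qquad H_\lambda:=\{(x,y):x+y\le\lambda\}. \]
It therefore suffices to prove $J_{AB}(H_\lambda)=\bigvee_{a+b=\lambda}A((-\infty,a])\wedge B((-\infty,b])$.

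The inequality ``$\ge$'' is immediate: for each $a,b$ with $a+b=\lambda$ the rectangle $(-\infty,a]\times(-\infty,b]$ is contained in $H_\lambda$, so $A((-\infty,a])\wedge B((-\infty,b])$ occurs among the terms whose join defines $J_{AB}(H_\lambda)$; taking the join over all such pairs gives $J_{AB}(H_\lambda)\ge\bigvee_{a+b=\lambda}A((-\infty,a])\wedge B((-\infty,b])$. For the reverse inequality I would first apply Lemma~\ref{lem:spectrum} to rewrite $J_{AB}(H_\lambda)$ as the join over rectangles $R_1\times R_2\subseteq H_\lambda$ with $R_1\subseteq\sigma(A)$ and $R_2\subseteq\sigma(B)$, so that in particular $R_1,R_2$ are bounded. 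If $R_1$ or $R_2$ is empty the corresponding meet is $0$; otherwise put $a^{\star}:=\sup R_1$ and $b^{\star}:=\sup R_2$, and note, by choosing elements of $R_1$ and $R_2$ within $\epsilon$ of their suprema and using $R_1\times R_2\subseteq H_\lambda$, that $a^{\star}+b^{\star}\le\lambda$. Since $R_1\subseteq(-\infty,a^{\star}]\subseteq(-\infty,\lambda-b^{\star}]$ and $R_2\subseteq(-\infty,b^{\star}]$, monotonicity of $A$, of $B$, and of $\wedge$ gives $A(R_1)\wedge B(R_2)\le A((-\infty,\lambda-b^{\star}])\wedge B((-\infty,b^{\star}])$, which is the $(a,b)=(\lambda-b^{\star},b^{\star})$ term on the right. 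Joining over all admissible rectangles yields $J_{AB}(H_\lambda)\le\bigvee_{a+b=\lambda}A((-\infty,a])\wedge B((-\infty,b])$, completing the additive case.

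For the multiplicative identity the same template applies with $g(x,y)=xy$ (again continuous, so $A\tdot B=g_{\mathcal{E}^{\star}}(A,B)$ is defined) and with $H_\lambda$ replaced by the sublevel set $S_\lambda:=\{(x,y):xy\le\lambda\}$; Lemma~\ref{lem:spectrum} again restricts attention to rectangles inside $\sigma(A)\times\sigma(B)$, and here the hypothesis on the eigenvalues of $A$ and $B$ is exactly what confines those rectangles to the closed positive quadrant. For ``$\ge$'' one uses that $0\le x\le a$ and $0\le y\le b$ force $xy\le ab=\lambda$, so $[0,a]\times[0,b]\subseteq S_\lambda$ and $A((-\infty,a])\wedge B((-\infty,b])=A([0,a])\wedge B([0,b])$ is a term of $J_{AB}(S_\lambda)$. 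For ``$\le$'' one again passes to $a^{\star}=\sup R_1\ge 0$ and $b^{\star}=\sup R_2\ge 0$, shows $a^{\star}b^{\star}\le\lambda$ by the same $\epsilon$-argument, and then enlarges the pair to one whose product is exactly $\lambda$ (for instance replacing $a^{\star}$ by $\lambda/b^{\star}$ when $b^{\star}>0$), with the degenerate subcases $a^{\star}=0$, $b^{\star}=0$, and $\lambda\le 0$ dispatched directly using that $A((-\infty,a])$ and $B((-\infty,b])$ equal $I$ once $a$, resp.\ $b$, is beyond the spectrum.

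The main obstacle is the direction $J_{AB}(H_\lambda)\le\bigvee_{a+b=\lambda}A((-\infty,a])\wedge B((-\infty,b])$ (and its multiplicative analogue): a measurable rectangle contained in $H_\lambda$ or $S_\lambda$ need not be a lower-left quadrant, and may even be unbounded below, so one cannot read off a splitting $a+b=\lambda$ (resp.\ $ab=\lambda$) from it directly. The resolution is Lemma~\ref{lem:spectrum}: intersecting the factors with $\sigma(A)$ and $\sigma(B)$ makes them bounded without changing $A(R_1)\wedge B(R_2)$, after which passing to suprema and invoking monotonicity reduces everything to the elementary inequalities $a^{\star}+b^{\star}\le\lambda$ and $a^{\star}b^{\star}\le\lambda$. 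In the multiplicative case the only additional care required is the bookkeeping of signs --- which is where the eigenvalue hypothesis enters --- and of the boundary cases.
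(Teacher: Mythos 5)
Your proof is correct and follows essentially the same route as the paper's: the lower bound comes from the rectangles $(-\infty,a]\times(-\infty,b]$ with $a+b=\lambda$ (resp.\ $ab=\lambda$) sitting inside the sublevel set, and the upper bound comes from dominating each rectangle term by passing to suprema and monotonicity, the paper using $\alpha=\sup R_1$, $\beta=\lambda-\alpha$ where you use $(\lambda-b^{\star},b^{\star})$. The only differences are cosmetic: you invoke Lemma~\ref{lem:spectrum} to bound the rectangles (the paper gets boundedness for free from $R_1\times R_2\subseteq +^{-1}(Q_\lambda)$ with both factors nonempty), and you spell out the sign bookkeeping for $\tdot$ that the paper dismisses as ``a similar argument.''
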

\begin{proof}
We first define $Q_\lambda := ( - \infty, \lambda ]$ for all $\lambda \in \mathbb{R}$.  Then we need to prove that 
\begin{equation}
\label{eq:twentyone}
(A \pdot B) ( Q_\lambda ) = \bigvee_{a + b = \lambda} A (Q_a ) \wedge B ( Q_b ) . 
\end{equation}
Recalling that $A \pdot B$ is defined by
\begin{equation}
\label{eq:twenty}
(A \pdot B) ( Q_\lambda ) \hspace{5pt} =  \hspace{-10pt}   \bigvee_{\substack{R_1 \times R_2 \subseteq +^{-1} ( Q_\lambda ) \\ R_1,  R_2 \in \mathcal{B}(\mathbb{R})  }} \hspace{-15pt}  A (R_1) \wedge B(R_2) ,
\end{equation}
where `$+$' is thought of as a map from $\mathbb{R}^2 \to \mathbb{R}$.  We also have
\begin{equation}
+^{-1} ( Q_\lambda ) = \{ (a,b) \in \mathbb{R} \ : \ a+b \leq \lambda \}.
\end{equation}
Since, for any $a+b = \lambda$, we have $Q_a \times Q_b \subseteq +^{-1}(Q_\lambda)$, as well as $Q_a, Q_b \in \mathcal{B}(\mathbb{R})$, we immediately conclude that 
\begin{equation}
(A \pdot B) ( Q_\lambda ) \geq \bigvee_{a + b = \lambda} A (Q_a ) \wedge B ( Q_b ) .
\end{equation}
To establish the opposite inequality, consider any $R_1,R_2 \in \mathcal{B}(\mathbb{R})$ such that $R_1 \times R_2 \subseteq +^{-1}( Q_\lambda )$ for a given $\lambda \in \mathbb{R}$.  Defining $\alpha = \sup R_1$ and $\beta = \lambda - \alpha$, we have that $\alpha + \beta = \lambda$.  Moreover, we clearly have $R_1 \subseteq Q_\lambda$, and also $R_2 \subseteq Q_\beta$, since for any $b \in R_2$, we have $\alpha +b \leq \lambda$.  But from this, we see that every term in the join defining $(A \pdot B) ( Q_\lambda)$ (equation (\ref{eq:twenty})) is less than some element in the join occurring in our desired expression (i.e.~equation (\ref{eq:twentyone})), establishing the other inequality.

A similar argument (with some subtleties concerning negative numbers) yields the result for $\tdot$.
\end{proof}

\begin{lemmy}
Let $A$, $B$, and $C$ be diagonalizable PVMs, with $A \sqsubseteq B$ (where $\sqsubseteq$ denotes the spectral order, as defined as in equation (\ref{spectralorder})).  Also, let $\stackrel{\LARGE \textbf{.}}{+}$ be defined relative to the generating chain $\mathcal{E}^{\star}$.  Then ${ A \stackrel{\LARGE \textbf{.}}{+} C \sqsubseteq B \stackrel{\LARGE \textbf{.}}{+} C }$.   
\end{lemmy}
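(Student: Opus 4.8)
The plan is to reduce the claim to Lemma~\ref{catfood} together with the elementary monotonicity of the lattice operations on $\mathcal{L}_{\mathcal{H}}$. Recall that, by definition of the spectral order, $A \sqsubseteq B$ means $A\big((-\infty,\lambda]\big) \leq B\big((-\infty,\lambda]\big)$ for all $\lambda \in \mathbb{R}$, and that to establish $A \pdot C \sqsubseteq B \pdot C$ it suffices to prove $(A \pdot C)\big((-\infty,\lambda]\big) \leq (B \pdot C)\big((-\infty,\lambda]\big)$ for every $\lambda \in \mathbb{R}$.

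First I would invoke Lemma~\ref{catfood} (which applies since $\pdot$ is defined with respect to $\mathcal{E}^{\star}$) to write, for each fixed $\lambda$,
\begin{equation*}
(A \pdot C)\big((-\infty,\lambda]\big) = \bigvee_{a+c=\lambda} A\big((-\infty,a]\big) \wedge C\big((-\infty,c]\big),
\end{equation*}
and likewise $(B \pdot C)\big((-\infty,\lambda]\big) = \bigvee_{a+c=\lambda} B\big((-\infty,a]\big) \wedge C\big((-\infty,c]\big)$, where in both cases the join runs over all pairs $(a,c) \in \mathbb{R}^2$ with $a + c = \lambda$.

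Next I would use two standard facts about the complete lattice $\mathcal{L}_{\mathcal{H}}$. The first is that the meet is monotone in each argument: if $P_1 \leq P_2$ then $P_1 \wedge Q \leq P_2 \wedge Q$, since $P_1 \wedge Q \leq P_1 \leq P_2$ and $P_1 \wedge Q \leq Q$ together show $P_1 \wedge Q$ is a lower bound for $\{P_2, Q\}$. Applying this with the hypothesis $A\big((-\infty,a]\big) \leq B\big((-\infty,a]\big)$ gives, termwise, $A\big((-\infty,a]\big) \wedge C\big((-\infty,c]\big) \leq B\big((-\infty,a]\big) \wedge C\big((-\infty,c]\big)$ for every pair with $a+c = \lambda$. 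The second fact is that an arbitrary join is monotone: if $x_i \leq y_i$ for all $i$ in some index set, then $\bigvee_i x_i \leq \bigvee_i y_i$. Combining the two yields $(A \pdot C)\big((-\infty,\lambda]\big) \leq (B \pdot C)\big((-\infty,\lambda]\big)$, and since $\lambda$ was arbitrary this is precisely $A \pdot C \sqsubseteq B \pdot C$.

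I do not expect a genuine obstacle here: Lemma~\ref{catfood} supplies the explicit ``convolution'' formula for the spectral family of $\pdot$, after which the result is immediate from monotonicity of $\wedge$ and $\bigvee$. The only points requiring a line of justification are the two monotonicity facts just stated, both of which are routine in any complete lattice. (The analogous statement for $\tdot$ under the hypothesis that all eigenvalues are non-negative follows in exactly the same way from the second formula in Lemma~\ref{catfood}, with the joins restricted to pairs satisfying $a,c \geq 0$.)
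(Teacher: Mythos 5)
Your proposal is correct and follows essentially the same route as the paper's proof: invoke Lemma~\ref{catfood} to express both spectral families as joins over pairs summing to $\lambda$, then conclude termwise from $A \sqsubseteq B$ via monotonicity of $\wedge$ and $\bigvee$. The only difference is that you spell out the routine lattice monotonicity facts, which the paper leaves implicit.
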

\begin{proof}
We again let $Q_\lambda := ( - \infty, \lambda ]$ for all $\lambda \in \mathbb{R}$.  Now, by Lemma \ref{catfood}, we have that 
\begin{equation}
(A \stackrel{\LARGE \textbf{.}}{+} C) ( Q_\lambda ) = \bigvee_{a + c = \lambda} A(Q_a) \wedge C(Q_c), 
\end{equation}
as well that 
\begin{equation}
(B \stackrel{\LARGE \textbf{.}}{+} C) (Q_\lambda ) = \bigvee_{b + c = \lambda} B(Q_b) \wedge C(Q_c), 
\end{equation}
and we note that the joins in these expressions run over the same sets since $a+c = \lambda = b+c$ implies $a = b$.  Now, since $A \sqsubseteq B$, we have that $A(Q_\mu) \leq B(Q_\mu)$ for all $\mu \in \mathbb{R}$, from which it follows that 
\begin{equation}
A(Q_\mu) \wedge C(Q_c) \leq B(Q_\mu) \wedge C(Q_c)
\end{equation}
for all $\mu \in \mathbb{R}$.  As such, we have that
\begin{equation}
(A \stackrel{\LARGE \textbf{.}}{+} C) (Q_\lambda ) \leq (B \stackrel{\LARGE \textbf{.}}{+} C) ( Q_\lambda )
\end{equation}
for all $\lambda \in \mathbb{R}$, or equivalently, ${ A \stackrel{\LARGE \textbf{.}}{+} C \sqsubseteq B \stackrel{\LARGE \textbf{.}}{+} C }$.
\end{proof}

\bibliography{bibliography}

\end{document}